\newcommand{\x}{$\bullet$}
\newcommand{\ci}{$\circ$}
\newcommand{\LL}{\triangleleft}
\newcommand{\dd}{drum }
\newcommand{\CH}{\textrm{CH}}
\newcommand{\ord}{\ensuremath\triangleleft}
\newtheorem{theorem}{Theorem}
\newtheorem{lemma}[theorem]{Lemma}
\newtheorem{observation}[theorem]{Observation}
\newtheorem{corollary}[theorem]{Corollary}
\theoremstyle{definition}
\newtheorem{definition}[theorem]{Definition}
\theoremstyle{remark}
\newtheorem*{remark}{Remark}
\renewcommand{\qedsymbol}{\smiley}
\xdef\urzeit{\the\count0:0\the\count2}\else
\xdef\urzeit{\the\count0:\the\count2}\fi}
\title{Quasi-Parallel Segments and Characterization of Unique
  Bichromatic Matchings\footnote
{Supported by by the ESF EUROCORES programme EuroGIGA, CRP ComPoSe, Deutsche Forschungsgemeinschaft (DFG), under grant FE 340/9-1.}}
\author{Andrei Asinowski\footnote{Institut f\"ur Informatik, Freie
    Universit\"at Berlin, Germany. \texttt{asinowski@gmail.com}} \\
  Tillmann Miltzow\footnote{Institut  f\"ur Informatik, Freie Universit\"at Berlin, Germany. \texttt{t.miltzow@gmail.com}}\\   G\"{u}nter Rote\footnote{Institut f\"ur Informatik, Freie Universit\"at Berlin, Germany. \texttt{rote@inf.fu-berlin.de}}}
\begin{document}
\maketitle

\begin{abstract}
Given $n$ red and $n$ blue points in general position in the plane,
it is well-known that there is a perfect matching formed by
non-crossing line segments.
We characterize the bichromatic point sets which admit
exactly one
non-crossing matching.
We give several geometric descriptions of such sets,
and find an $O(n \log n)$ algorithm that checks
whether a given bichromatic set has this property.
\end{abstract}

\section{Introduction}
\label{sec:intro}
\paragraph{Basic notation and preliminary results}

A \emph{bichromatic $(n+n)$ point set} $F$ is a set of $n$ blue points and $n$ red points in the plane.
 We assume that the points of $F$ are in general position,
this is, no three points lie on the same line.
A \emph{perfect bichromatic non-crossing straight-line matching} of $F$
is a set of $n$ non-crossing segments between points of $F$
so that each blue point is connected to exactly one red point, and vice versa.
Following the convention in the literature, we call such matchings \emph{BR-matchings}.
For notational reasons, we shall denote the colors blue and red
by \textit{\ci}
(white) and 
\textit{\x}
(black). 

It is well known and easy to see that any $F$ has at least one BR-matching.
One way to see this is to use recursively the Ham-Sandwich Theorem
(
see Theorem~\ref{theorem-algo} below);
another way is to show that the bichromatic matching that minimizes
the total length of segments is necessarily non-crossing.
The main motivation of our work is
to characterize bichromatic sets with a \emph{unique} BR-matching.
We will establish connections between this question and various other
geometric notions that have shown up in different contexts.

In what follows, $M$ denotes a given BR-matching of $F$.
The segments in $M$ are considered directed from the \ci-end to the \x-end.
For $A \in M$, the line that contains $A$ is denoted by $g(A)$,
and it is considered directed consistently with $A$.
For two directed segments $A$ and $B$ for which the lines $g(A)$ and
$g(B)$ do not cross, we
we say that the segments (resp., the lines) are \emph{parallel}
if they have the same orientation;
otherwise we call them
\emph{antiparallel}.
If we delete inner points of $A$ from $g(A)$, we obtain two closed \emph{outer rays}:
the \emph{\ci-ray} and the \emph{\x-ray}, according to the endpoint of $A$ that belongs to the ray.

The convex hull of $F$ will be denoted by $\CH(F)$,
and 
its boundary by $\partial \CH(F)$.
Consider the circular sequence of colors of the points of $\partial \CH(F)$;
a \emph{color interval} is a maximal subsequence of this circular sequence that consists of points of the same color.
In the point set in Fig.~\ref{fig:def}(a), $\partial \CH(F)$ has four color intervals:
two \ci-intervals (of sizes $1$ and $2$) and two \x-intervals (of sizes $2$ and $3$).

\begin{figure}[h]
		\begin{center}
\input{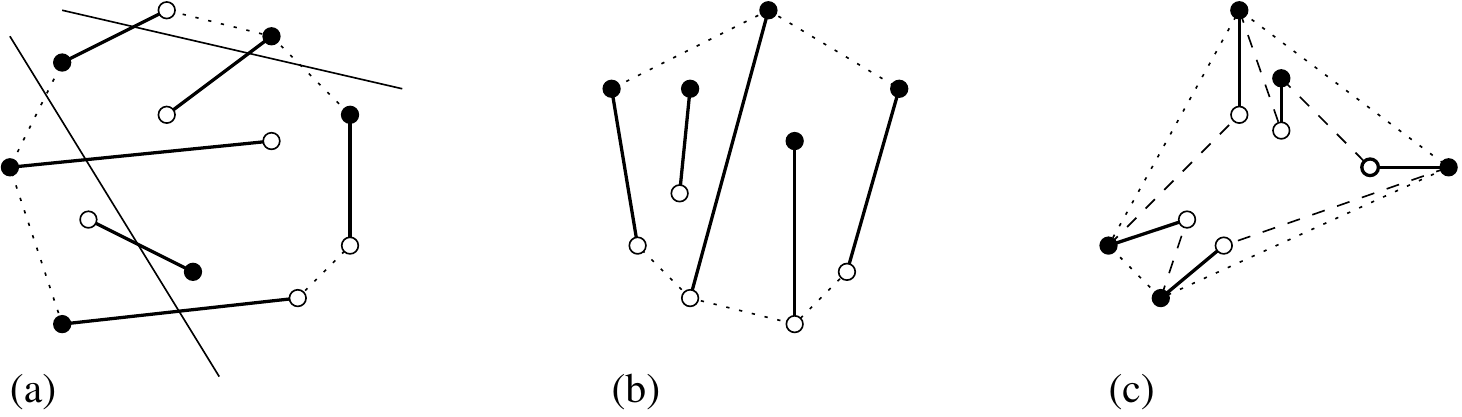_t}
		\caption{(a) A matching with chromatic cuts;
		(b) A linear matching; (c) A circular matching
		(another matching for the same point set is indicated by dashed lines).}
		\label{fig:def}

			\end{center}
	\end{figure}


In order to state our main result, we need the notion of chromatic cut.
A \emph{chromatic cut} of $M$ is a line $\ell$ that crosses
two segments of $M$ such that their \x-ends are on different sides of $\ell$
($\ell$ can as well cross other segments of $M$).
For example,
 the lines $\ell_1$ and $\ell_2$ in Fig.~\ref{fig:def}(a) are chromatic cuts.
The matchings in Fig.~\ref{fig:def}(b) and (c) have no chromatic cuts.
Aloupis et~al.~\cite[Lemma~9]{abls-bcm-13} proved that
 a BR-matching $M$ that has a chromatic cut cannot be unique.
They actually proved a stronger statement: there is even a
\emph{compatible} BR-matching $M'\neq M$,
which means that the union of $M$ and $M'$ is non-crossing. 
Thus, having no chromatic cut
is a \emph{necessary condition} for a unique BR-matching.
(We will give a simpler proof of this fact in Section~\ref{sec:cc},
without
establishing the existence of a compatible matching.)
However, it is \emph{not sufficient},
as shown by the example in Fig.~\ref{fig:def}(c).

We will give a thorough treatment of
BR-matchings without chromatic cuts.
We shall prove in Lemma~\ref{lem:otherPS_v2} that
BR-matchings without chromatic cuts
can be classified into
 the following two types.
A \emph{matching of linear type} (or, for shortness, \emph{linear matching}) is a BR-matching without a chromatic cut
such that $\partial \CH(F)$ consists of exactly two color intervals
(both necessarily of size at least $2$).
A \emph{matching of circular type} (or \emph{circular matching}) is a BR-matching without a chromatic cut
such that all points of $\partial \CH(F)$ have the same color. The reason for these terms will be clarified below.
 Fig.~\ref{fig:def}(b--c) shows a linear and circular matching.

We shall prove that the unique BR-matchings are precisely
the linear matchings. This will be a part of our main result,
 Theorem~\ref{thm:main} below.

\paragraph{The main result.}
We formulate the main result in the following three Theorems.
\begin{theorem}\label{the:two_types}
Let $M$ be a BR-matching without a chromatic cut. 
Then $M$ is either of linear or circular type.
\end{theorem}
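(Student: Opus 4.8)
I would prove the contrapositive: exhibit a chromatic cut of $M$ whenever $\partial\CH(F)$ is neither monochromatic nor the union of exactly two color intervals of size at least $2$. Since the cyclic sequence of colors along $\partial\CH(F)$ changes an even number of times, a non-monochromatic boundary has an even number, at least $2$, of color intervals; so the only configurations left to rule out are \textbf{(a)} $\partial\CH(F)$ has at least four color intervals, and \textbf{(b)} $\partial\CH(F)$ has exactly two color intervals, one of them of size $1$. In each case I will produce a chromatic cut.

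The first ingredient is a local reformulation of the chromatic-cut property. As $M$ is a matching, two distinct segments $A$, $B$ have four distinct endpoints; write $A=(a_\circ,a_\bullet)$, $B=(b_\circ,b_\bullet)$, directed as usual from the white end to the black end. A line crossing both $A$ and $B$ is a chromatic cut for this pair exactly when it strictly separates $\{a_\bullet,b_\circ\}$ from $\{a_\circ,b_\bullet\}$, so such a line exists if and only if the segments $[a_\bullet,b_\circ]$ and $[a_\circ,b_\bullet]$ are disjoint. Since at most one of the three perfect matchings on four points in general position is a crossing pair, ``$[a_\bullet,b_\circ]$ and $[a_\circ,b_\bullet]$ meet'' is equivalent to ``$a_\circ,a_\bullet,b_\bullet,b_\circ$ lie in convex position in this cyclic order'' (i.e.\ $A$ traversed from $\circ$ to $\bullet$, then $B$ traversed from $\bullet$ to $\circ$). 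Hence $M$ has no chromatic cut if and only if, for every two of its segments, the four endpoints lie in convex position in exactly this cyclic order.

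The heart of the argument is the claim that \emph{if $M$ has no chromatic cut, then every bichromatic edge of $\partial\CH(F)$ belongs to $M$}. Suppose not: pick a bichromatic hull edge $uv$ with, say, $u$ white and $v$ black and $uv\notin M$, and let $A=(u,\bar u)$, $B=(\bar v,v)$ be the segments of $M$ incident to $u$ and to $v$ (so $\bar u$ is black, $\bar v$ is white, and $\bar u\neq v$, $\bar v\neq u$). Because $uv$ is an edge of $\CH(F)$, every point of $F$ other than $u,v$---in particular $\bar u$ and $\bar v$---lies strictly on the interior side of the line through $u$ and $v$; hence $uv$ is an edge of $\CH(\{u,\bar u,v,\bar v\})$, so $u$ and $v$ are consecutive on the hull of these four points. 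But the reformulation, applied to the pair $\{A,B\}$, puts these four points in convex position in the cyclic order $u,\bar u,v,\bar v$, in which $u$ and $v$ are opposite---contradicting that $uv$ is a hull edge of this four-point set. So $M$ has a chromatic cut. This already settles \textbf{(b)}: the bichromatic hull edges all lie in the matching $M$, hence are pairwise vertex-disjoint, whereas a color interval of size $1$ is a hull vertex incident to two bichromatic hull edges.

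For \textbf{(a)}, observe that with at least four color intervals there are two distinct hull edges $e=uv$, $e'=u'v'$ each marking a transition from a white interval to a black interval ($u,u'$ white, $v,v'$ black), both in $M$ by the claim, and met around $\partial\CH(F)$ in the cyclic order $u,v,u',v'$; yet the reformulation applied to the pair $\{(u,v),(u',v')\}$ demands the cyclic order $u,v,v',u'$, a contradiction. Therefore $\partial\CH(F)$ is monochromatic (so $M$ is of circular type) or consists of exactly two color intervals of size $\geq 2$ (so $M$ is of linear type). The step I expect to cost the most effort is pinning down the local reformulation and its convex-position reading---in particular the ``at most one crossing matching on four points'' fact, and the sub-case in which the four endpoints are not in convex position at all (which already yields a chromatic cut); given that, the two hull arguments are short, the only subtlety being to choose, in case \textbf{(a)}, two hull edges with the \emph{same} transition direction so that their endpoints really occur in the order $u,v,u',v'$.
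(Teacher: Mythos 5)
Your proof is correct and follows essentially the same route as the paper's Lemma~\ref{lem:otherPS_v2}: bichromatic hull edges are forced to lie in $M$, four or more color intervals would give two matched hull edges with the same transition direction whose pair admits a chromatic cut, and a size-one interval would force one hull vertex into two matching segments. The only real difference is in how cuts are certified: you route everything through the four-endpoint separation/convex-position criterion (essentially a repackaging of the paper's Lemma~\ref{lem:chCCut}), whereas the paper exhibits the cut lines directly by shifting the hull-edge line or taking any line crossing the two same-orientation edges.
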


 We will characterize both linear and circular matchings. 

The following series of equivalent characterizations of unique BR-matchings.
 Condition~\ref{order} refers to a {relation $\triangleleft$}
which is defined later in Definition~\ref{def:orderM}.
The notion of a \emph{quasi-parallel matching} in
 Condition~\ref{qp} will be defined in Definition~\ref{def:quasiParallel}.
We state the conditions here to have all
 equivalent characterizations in one place.

\begin{theorem}
[Characterization of unique BR-matchings]
\label{thm:main}
	Let $M$ be a BR-matching of $F$.
2	Then the following conditions are equivalent:
\begin{enumerate}
	\item $M$ is the only BR-matching of~$F$.
	\item \label{type_l} $M$ is a linear matching.
\item \label{order} The relation \ord\ is a linear order on $M$.
\item \label{forbidden} No subset of segments forms one of the three
       forbidden patterns shown below in Fig.~\ref{fig:allConfig}.
		\item \label{qp} $M$ is quasi-parallel.
	\end{enumerate}
\end{theorem}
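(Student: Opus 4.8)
The plan is to establish all five conditions equivalent by closing the cycle
\[
(1)\ \Longrightarrow\ (\ref{type_l})\ \Longrightarrow\ (\ref{qp})\ \Longrightarrow\ (\ref{order})\ \Longrightarrow\ (\ref{forbidden})\ \Longrightarrow\ (1),
\]
with the linear-matching condition~(\ref{type_l}) as a hub. The point of this choice is that Theorem~\ref{the:two_types} already reduces the analysis of a matching without a chromatic cut to the linear and circular cases, so the real work is to attach uniqueness to ``linear'' and to recognize ``linear'' through the relation $\ord$, through the absence of the three patterns, and through quasi-parallelism. Note that going around the cycle yields $(\ref{order})\Leftrightarrow(\ref{forbidden})$ for free.

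For $(1)\Rightarrow(\ref{type_l})$ I would argue contrapositively, in two stages. If $M$ has a chromatic cut then it is not unique -- this is precisely the fact re-proved in Section~\ref{sec:cc} -- so a unique matching has no chromatic cut and is, by Theorem~\ref{the:two_types}, linear or circular. It then remains to rule out the circular case, i.e.\ to show that a circular matching is never unique: when all of $\partial\CH(F)$ is monochromatic, one produces a different BR-matching by rotating $M$ along the hull boundary, the phenomenon already visible in the dashed alternative of Fig.~\ref{fig:def}(c). Hence a unique $M$ must be linear.

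The block $(\ref{type_l})\Rightarrow(\ref{qp})\Rightarrow(\ref{order})\Rightarrow(\ref{forbidden})$ is mostly bookkeeping with the definitions, plus one genuinely geometric step. That step is $(\ref{type_l})\Rightarrow(\ref{qp})$: from ``$\partial\CH(F)$ has exactly two color intervals'' together with ``no chromatic cut'', I would extract the direction (equivalently, the pair of outer rays) that witnesses Definition~\ref{def:quasiParallel}, anchoring the construction at the two bichromatic hull edges. Then $(\ref{qp})\Rightarrow(\ref{order})$ is the observation that quasi-parallelism says exactly that every two segments are $\ord$-comparable and that these comparisons cohere, so that $\ord$ becomes a linear order; and $(\ref{order})\Rightarrow(\ref{forbidden})$ is a finite case analysis in the ray terminology of the introduction showing that each of the three patterns of Fig.~\ref{fig:allConfig} is a minimal certificate that $\ord$ is not a linear order (a pair that is not $\ord$-comparable, together with two minimal configurations on which $\ord$ behaves cyclically), so that their absence is equivalent to $\ord$ being a linear order.

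Closing the loop with $(\ref{forbidden})\Rightarrow(1)$ is the step I expect to be the main obstacle. I would first use the absence of the three patterns to recover the actual linear order $A_1\ord\cdots\ord A_n$ on $M$, and then prove uniqueness by induction on $n$: the $\ord$-minimal segment $A_1$ is extremal enough -- and this is where the no-chromatic-cut / quasi-parallel structure enters, through Definition~\ref{def:orderM} -- that every BR-matching of $F$ is forced to contain $A_1$; deleting its two endpoints leaves a bichromatic $((n-1)+(n-1))$ point set on which $M\setminus\{A_1\}$ is a BR-matching, and one checks that condition~(\ref{order}) is inherited, so the induction goes through. The delicate part is exactly this forcing: a priori a second matching $M'$ could disagree with $M$ only via a long alternating cycle of $M\triangle M'$, so one has to show that $\ord$-minimality of $A_1$ already constrains the partners of its two endpoints locally. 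Pinning that down -- rather than the definition-chasing of the previous paragraph -- is the crux of the theorem.
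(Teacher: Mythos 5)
The decisive gap is your step $(\ref{forbidden})\Rightarrow(1)$. You reduce uniqueness to the claim that the $\ord$-minimal segment $A_1$ is forced into \emph{every} BR-matching of $F$, and you yourself flag this forcing as unproved; but that forcing is exactly the substance of the theorem, and no local argument of the kind you sketch is readily available. The natural attempt --- $A_1$ joins two adjacent hull vertices of different colors, and if a matching $M'$ separated them, shifting the hull edge slightly would cross the two $M'$-segments incident to these vertices --- only yields a chromatic cut of the hypothetical $M'$ (this is the argument of Lemma~\ref{lem:otherPS_v2}), i.e.\ it shows $M'$ would itself be non-unique, not that $M'$ cannot exist. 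The paper instead argues globally through quasi-parallelism (Theorem~\ref{thm:type2unique}): the symmetric difference $M\triangle M'$ contains an alternating cycle; the $\ord$-minimal and $\ord$-maximal $M$-segments on that cycle, together with the reference direction, force the cycle to cross itself; and cutting at the \emph{first} self-crossing produces a smaller quasi-parallel matching whose alternating cycle does not self-intersect --- a contradiction (alternatively, via the Fishnet Lemma~\ref{lem:fishnet}). Your induction might be repairable, but as written the crux step is missing, and it is precisely the step the rest of your cycle of implications leans on.

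Two further steps are thinner than they appear. For $(1)\Rightarrow(\ref{type_l})$ you dismiss the circular case by ``rotating $M$ along the hull boundary''; a general circular matching is not radial, and the paper needs the $3$-star decomposition into three linear submatchings plus the alternating paths of Lemma~\ref{lem:altpath} to assemble the alternating polygon giving two further matchings (Theorem~\ref{thm:typeCNotUnique}). For $(\ref{type_l})\Rightarrow(\ref{qp})$, the reference direction cannot simply be anchored at the two bichromatic hull edges: the extreme segment slopes may be attained by interior segments, so such a direction need not satisfy condition~(ii) of Definition~\ref{def:quasiParallel}, and condition~(iii) is a constraint on all pairs; the paper obtains the direction via Helly's theorem applied to the open half-circles of admissible directions, after a case analysis showing every pattern-free $2$- or $3$-element subset is quasi-parallel (Lemma~\ref{lem:typeL}, $3\Rightarrow4$), which is why it routes this implication through conditions~\ref{order} and~\ref{forbidden} rather than proving it directly from linearity. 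So your overall cycle is a sensible plan, but the three steps carrying the geometric content are not yet proofs.
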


Moreover, if $M$ satisfies
any of
 the above conditions, then
 any submatching of $M$ satisfies the above conditions.
This follows from the fact that conditions
\ref{order}--\ref{qp} directly imply that they hold for all subsets.
(This is most trivial for condition~\ref{forbidden}.)

\begin{figure}[h]
	\begin{center}
		\includegraphics{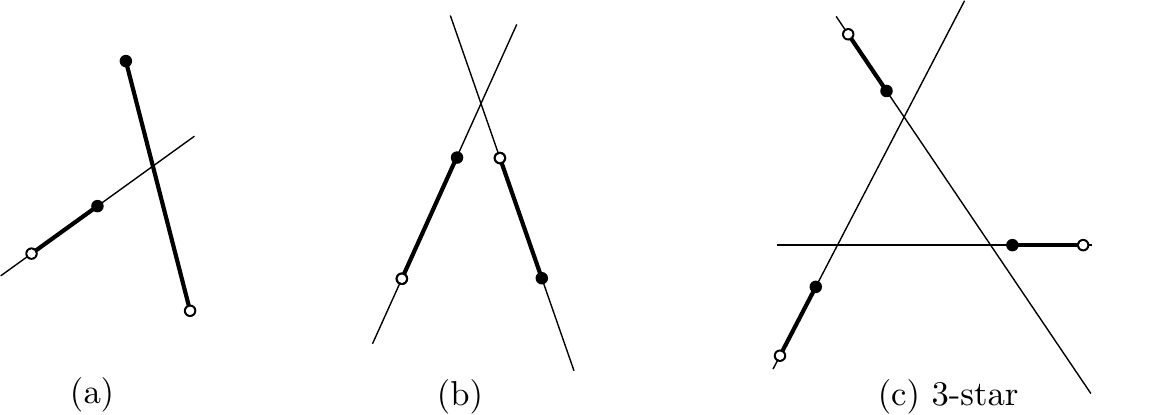}
		\caption{Forbidden patterns for quasi-parallel matchings.}
		\label{fig:allConfig}
	\end{center}
\end{figure}

\begin{theorem}[Properties of circular matchings]\label{thm:circularMatchings}
	Let $M$ be a BR-matching of $F$. Then the following conditions are equivalent:
	\begin{enumerate}
		\item $M$ is a circular matching.
		\item The sidedness relation \ord\ is total but not a
                  linear order.
		\item No subset of segments forms one of the forbidden patterns shown
			in Fig.~\ref{fig:allConfig}~(a)-(b) and at least three segments form
			a $3$-star as in Fig.~\ref{fig:allConfig}~(c).
	\end{enumerate}
 Furthermore, if these conditions hold, then:
	\begin{enumerate}
		\item[p1.] The sidedness relation \ord\ induces naturally a circular order, explained further in Section~\ref{sec:type_c}.
		\item[p2.] There are (at least) two additional disjoint BR-matchings $M'$ and $M''$ on $F$.
	\end{enumerate}
\end{theorem}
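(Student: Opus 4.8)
The plan is to make Theorem~\ref{thm:circularMatchings} follow almost entirely from Theorems~\ref{the:two_types} and~\ref{thm:main}, once two auxiliary facts about the sidedness relation have been isolated. The first is that $\ord$ (Definition~\ref{def:orderM}) is \emph{total} on $M$ exactly when $M$ has no chromatic cut: if totality fails for a pair $\{A,B\}$, so that neither $A\ord B$ nor $B\ord A$ holds, a short case analysis on the positions of the four endpoints shows that a line can then be slid across both $A$ and $B$ separating their black ends, so $\{A,B\}$ already witnesses a chromatic cut; the converse is similar. The second fact is that $M$ has a chromatic cut if and only if some subset of its segments forms pattern~(a) or pattern~(b) of Fig.~\ref{fig:allConfig} --- which is exactly why those two patterns sit among the forbidden ones. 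The ``if'' direction is immediate from the shape of the patterns; for ``only if'' one takes a chromatic cut, inspects the few segments it separates, and reads off pattern~(a) or~(b). Both facts are of the kind already needed in Section~\ref{sec:cc} and in the proof of Theorem~\ref{thm:main}, so here I would merely record them in the form required.

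Granting these, the equivalences $1\Leftrightarrow 2\Leftrightarrow 3$ are purely formal. By Theorem~\ref{the:two_types} a matching without a chromatic cut is linear or circular, and the two cases exclude each other, since a linear matching has exactly two color intervals on $\partial\CH(F)$ whereas a circular one has a monochromatic boundary; hence $M$ is circular if and only if $M$ has no chromatic cut \emph{and} $M$ is not linear. Substituting the first auxiliary fact together with the equivalence of linearity with condition~\ref{order} of Theorem~\ref{thm:main} turns this into ``$\ord$ is total but not a linear order'', which is~(2). Substituting instead the second auxiliary fact together with the equivalence of linearity with condition~\ref{forbidden} (no forbidden pattern at all) turns it into ``no subset of segments forms pattern~(a) or~(b), while --- since $M$ is then not quasi-parallel --- some subset forms pattern~(c)'', i.e.\ at least three segments form a $3$-star; this is~(3). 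The reverse implications are the same argument run backwards: from~(3) the $3$-star shows $M$ is not quasi-parallel, hence not linear, while the absence of~(a) and~(b) gives, by the second fact, that $M$ has no chromatic cut, so Theorem~\ref{the:two_types} forces $M$ circular; and likewise for $(2)\Rightarrow(1)$.

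For property~p1, assume the conditions hold, so $\partial\CH(F)$ is monochromatic, and show that the pairwise ``which side'' data packaged in $\ord$ assemble into a single cyclic order $A_1,\dots,A_n$ of all $n$ segments whose betweenness relation is exactly $\ord$; equivalently, $\ord$ is a total \emph{cyclic} order. The candidate order comes from a rotational sweep of the segments, and the decisive input is again the first auxiliary fact: having no chromatic cut forces every two segments of $M$ to span a convex quadrilateral with their two white ends adjacent, which pins down their relative $\ord$-position; one then checks that these local constraints are globally consistent, using that $\partial\CH(F)$ is monochromatic so that the segments really ``turn all the way around'' --- were all segment directions to lie in a single closed half-circle, $\ord$ would be a linear order, contradicting~(2). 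This is precisely what Section~\ref{sec:type_c} carries out, and I would set it up and invoke it rather than redo it. Property~p2 then follows: writing the circular order as $A_1,\dots,A_n$ with $A_i$ joining the white point $\circ_i$ to the black point $\bullet_i$, put $M'=\{\,\circ_i\bullet_{i+1} : 1\le i\le n\,\}$ and $M''=\{\,\circ_i\bullet_{i-1} : 1\le i\le n\,\}$, indices modulo~$n$. Both are perfect BR-matchings; they are pairwise distinct and distinct from $M$ because a circular matching has $n\ge 3$ (already one $3$-star uses three segments); and each is non-crossing because shifting the matching by one step along a \emph{coherent} circular order only ever rematches segments that are consecutive in that order, and the quadrilateral property keeps such a local swap crossing-free --- again the structure established in Section~\ref{sec:type_c}.

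I expect the substantial step to be the coherence statement underlying~p1 --- that the purely pairwise information carried by $\ord$ really glues into one global circular order of all the segments --- and, to a lesser degree, the identification of patterns~(a) and~(b) with the presence of a chromatic cut. The bookkeeping of the equivalences and the construction of $M'$ and $M''$ are routine; the real content lives in Section~\ref{sec:type_c} and in the groundwork leading up to Theorem~\ref{thm:main}.
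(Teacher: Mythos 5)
Your handling of the equivalences $1\Leftrightarrow2\Leftrightarrow3$ and of p1 is essentially the paper's own route: your first auxiliary fact is Lemma~\ref{lemma-total-order}, your second is Lemma~\ref{lem:chCCut} (used exactly this way in Lemma~\ref{lem:forbiddenConfig}), and the bookkeeping through Theorem~\ref{the:two_types} plus the equivalence of conditions \ref{type_l}, \ref{order}, \ref{forbidden} of Theorem~\ref{thm:main} (i.e.\ Lemma~\ref{lem:typeL}, which is independent of the present theorem, so there is no circularity) is how the paper's summary argues. Deferring p1 to the canonical circular order of Section~\ref{sec:type_c} (Lemmas~\ref{lem:foursets}--\ref{lem:circ_succ}) is also what the paper does and is legitimate.

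The gap is in p2. Your matchings $M'=\{\circ_i\bullet_{i+1}\}$ and $M''=\{\circ_i\bullet_{i-1}\}$ are in substance the ones the paper produces, but the whole content of p2 is that these are \emph{non-crossing}, and your justification --- that a one-step shift ``only ever rematches segments that are consecutive in that order'' and that ``the quadrilateral property keeps such a local swap crossing-free'' --- does not give this: local compatibility of the connector $\circ_i\bullet_{i+1}$ with its two neighbours $A_i,A_{i+1}$ says nothing about a crossing between two non-adjacent connectors (which is what must be excluded for $M'$ to be a BR-matching at all), nor between a connector and a distant segment of $M$ if you also want compatibility; one has to show, say, that each connector stays in the wedge between $g(A_i)$ and $g(A_{i+1})$ and that these wedges cannot interfere, which is not automatic from the pairwise data. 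Moreover, pointing to ``the structure established in Section~\ref{sec:type_c}'' is circular here, because the relevant structure in that section is precisely Theorem~\ref{thm:typeCNotUnique}, i.e.\ statement p2 itself. The paper closes this gap by a different mechanism: take a $3$-star (guaranteed by condition~3); the triangle bounded by its three $\circ$-rays is empty of segments, so the three convex regions it determines each carry, together with the two bounding star segments, a \emph{linear} sub-matching; Lemma~\ref{lem:altpath} then gives an alternating path through each region, the union of the three paths is a closed alternating polygon, and traversing it with the two possible choices of connector colors yields $M'$ and $M''$. Some argument of this kind (or an explicit disjointness argument for your wedges/connectors) is what your plan still needs before p2 is proved.
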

Property~p1 justifies the term circular matching.

\begin{center}
\begin{tabular}{|c||c|c|}
\hline
 & Linear Type & Circular Type  \\ \hline
Uniqueness  & $M$ is unique & $M$ is not unique \\ \hline
Patterns from Fig.~\ref{fig:allConfig}  & (a), (b) and (c) are avoided & 
\begin{tabular}{c}
(a) and (b) are avoided;  \\ (c) is present  
\end{tabular} \\ \hline
Relation $\triangleleft$  & Linear order & 
\begin{tabular}{c}
Total, not linear; \\ induces a circular order 
\end{tabular}
\\ \hline
\end{tabular}
\end{center}

\paragraph{Related work.}
Monochromatic and bichromatic straight-line matchings have been intensively studied in the recent years.

One direction is \emph{geometric augmentation}.
Given a matching,
one wants to determine whether it is possible to add segments in order to
get a bigger matching with a certain property,
under what conditions can this be done,
how many segments one has to add, etc.
See 
 Hurtado and Cs.~
 D.~T\'oth~\cite{FerranToth2003survey}
for a recent survey. 

The \emph{bichromatic compatible matching graph} of $F$ has at its node set the BR-matchings of $F$.
Two BR-matchings are joined by an edge if they are compatible. 
%
 Aloupis, 
 Barba, 
 Langerman and 
 Souvaine~\cite{abls-bcm-13}
proved recently that
the
bichromatic compatible matching graph is
always connected.

For non-colored point sets, one can speak about the \emph{(monochromatic) compatible matching graph}.
The diameter of this graph is $O(\log n)$ \cite{AichCompMatch2009}, whereas for the bichromatic
compatibility graph, no non-trivial upper bound is known.
%
 Garc\'{\i}a, 
 Noy, and 
 Tejel~\cite{Garcia:2000:LBN:353029.353031}
 showed that the number of perfect monochromatic matchings is
 minimized among all $n$-point sets when the points are in convex position.
%
 Ishaque, 
 Souvaine, and Cs.~
 T\'oth~\cite{Ishaque:2011:DCG:1998196.1998216} showed that for any
 monochromatic perfect matching, there is a \emph{disjoint} monochromatic compatible matching.

Other related work involves counting the maximal number of BR-matchings that a point set admits.
 Sharir and 
 Welzl~\cite{sw-ncfm-06j} established a bound of $O(7.61^n)$.

\paragraph{Outline.}
In Section~\ref{sec:cc} we prove in a simple way that if a BR-matching
$M$ has a chromatic cut, then it is not unique. 
In Section~\ref{sec:type_l} we give different characterizations of
 linear matchings, and we prove that a linear matching is
 unique. 
In Section~\ref{sec:type_c} we analyze circular matchings, and prove
that they are never unique. 
In Section~\ref{sec:summary} we complete the proof of Theorem~\ref{thm:main}.
In Section~\ref{sec:misc} we discuss parallelizability of linear matchings
and enumerate sidedness relations realizable by $n$ element circular matchings.
In Section~\ref{sec:algo} we describe an algorithm that recognizes point sets $F$ which admit exactly one matching,
an algorithm that recognizes circular matchings, 
and some more algorithms that compute some other notions we used in the previous Sections.
We conclude by mentioning some open problems and possible directions for future research in Section~\ref{sec:open}.

\section{Chromatic cuts} \label{sec:cc}

We start with a simple geometric description of BR-matchings that admit a chromatic cut.

\begin{figure}[h]
			\begin{center}
			\includegraphics[width=100mm]{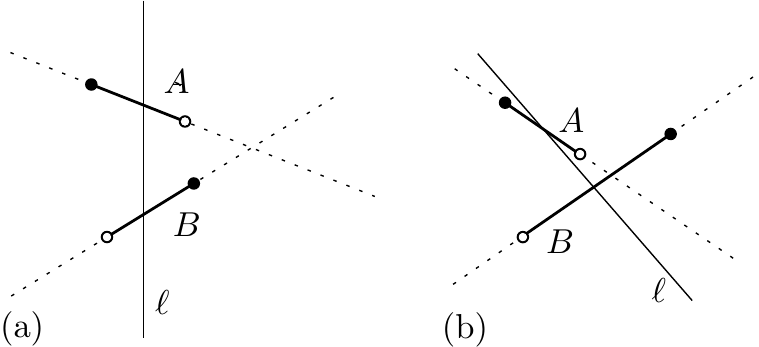}
				\caption{(a) Outer rays of different colors intersect;
				(b) An outer ray crosses the second segment.}
				\label{fig:charCCuts}
			\end{center}
	\end{figure}

	\begin{lemma}\label{lem:chCCut}
Let $M$ be a BR-matching of $F$.
$M$ admits a chromatic cut if and only if
there are two segments $A, B\in M$ such that
 $A$ and $B$ are antiparallel,
or the intersection point of $g(A)$ and $g(B)$ belongs to outer rays of different colors,
or an outer ray of one of the segments crosses the second segment.
\end{lemma}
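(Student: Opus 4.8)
The statement is an ``if and only if'' about when a $2$-segment submatching $\{A,B\}$ certifies a chromatic cut. The cleanest approach is to reduce everything to the two segments $A,B$ and argue entirely in the $4$-point configuration they span; a line $\ell$ is a chromatic cut for $M$ as soon as it is one for some pair, so it suffices to characterise exactly when a pair admits a line separating the two \x-ends while crossing both segments. First I would set up coordinates or at least a normal form: since $g(A)$ and $g(B)$ are not parallel (if they were parallel we are in a degenerate case handled separately — but in general position two segments of $M$ either have crossing supporting lines or are parallel/antiparallel), let $p$ be the intersection point of $g(A)$ and $g(B)$. This point $p$ splits each of the two lines into two rays, and the position of $p$ relative to the segments $A$ and $B$ is the whole story. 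The three cases in the lemma are exactly the three ways $p$ can fail to be ``simultaneously beyond both \ci-ends'' or ``simultaneously beyond both \x-ends'' in a color-consistent way.

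\textbf{Forward direction.} Assume $\{A,B\}$ admits no chromatic cut, i.e.\ no line crosses both $A$ and $B$ with the \x-ends on opposite sides. I want to show $A,B$ are parallel, the rays meet with matching colors, and no outer ray crosses the other segment. The key observation is: if $A$ and $B$ are antiparallel (crossing supporting lines, opposite orientations along them), then one can always rotate/translate a line through the ``wedge'' at $p$ to separate the \x-ends while still cutting both segments — so antiparallel immediately gives a chromatic cut, contradiction. Hence $A\parallel B$. Next, with $A\parallel B$, consider $p=g(A)\cap g(B)$: if $p$ lies on the \ci-ray of $A$ and the \x-ray of $B$ (or vice versa), a line through $p$ slightly rotated again separates the two \x-ends while cutting both segments — contradiction — so either $p$ is on \ci-rays of both or on \x-rays of both, which is ``rays of different colors do NOT intersect'' fails only in the forbidden way; more precisely the contrapositive gives us the second disjunct in reverse. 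Finally, if an outer ray of $A$ crosses $B$ itself (a transversal situation, Fig.~\ref{fig:charCCuts}(b)), then sliding a line along that ray produces a chromatic cut — contradiction. Thus the negation of all three disjuncts forces no chromatic cut, and I will run each of these three sub-arguments as a short ``rotate a separating line'' lemma.

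\textbf{Converse direction.} Conversely I must build an explicit chromatic cut from each of the three geometric situations. (i) If $A,B$ are antiparallel: their supporting lines cross at $p$; a line through $p$ in generic direction has $A$'s and $B$'s \x-ends on opposite sides because the orientations disagree — choose the direction carefully (bisecting the appropriate pair of rays) and perturb so it actually crosses the relative interiors of both segments; here I may need $A$ and $B$ to be ``close enough'' to $p$, which I handle by first moving the line to pass near the segments and then checking sidedness of the \x-ends is forced by the antiparallel orientation. (ii) If $g(A)\cap g(B)=p$ lies on outer rays of different colors, say the \x-ray of $A$ and the \ci-ray of $B$: a line through $p$ rotated to cross both segments automatically has the \x-end of $A$ on the opposite side from the \ci-end of $B$, hence opposite from the \x-end of $B$ after accounting for which side $B$'s \x-end sits; a small case check on the four rays finishes it. (iii) If the \x-ray of $A$ crosses $B$ at an interior point $q$ (the \ci-ray case is symmetric): take $\ell$ to be the line through $q$ slightly tilted off $g(A)$; it crosses $A$ near its \x-end, crosses $B$ near $q$, and the \x-end of $A$ versus the \x-end of $B$ land on opposite sides by construction.

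\textbf{Main obstacle.} The genuinely fiddly part is the converse: translating each qualitative geometric picture into an \emph{explicit} line and verifying that the two \x-ends really do end up on opposite sides while the line genuinely crosses the open segments (not just their supporting lines). General position rules out the annoying boundary coincidences, but the bookkeeping of ``which ray is which color and hence which side the \x-end lies on'' has several symmetric subcases. I would organise this by always placing the line through the critical point ($p$ in cases (i)--(ii), the crossing point $q$ in case (iii)) and then using a continuity/perturbation argument: start with a degenerate line (equal to $g(A)$ or through $p$) where the sidedness is ``on the line'', and rotate by an infinitesimal amount in the direction dictated by the configuration so that the \x-ends separate; general position guarantees the perturbation is non-degenerate. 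I also need the easy preliminary remark that for the purposes of this lemma ``parallel'' versus ``antiparallel'' is exhaustive for non-crossing supporting lines (so the three disjuncts cover all cases), which follows directly from the definitions given in the introduction.
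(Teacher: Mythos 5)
Your proposal has a structural gap: both of your labelled directions prove the same implication. Your ``forward direction'' assumes there is no chromatic cut and rules out each of the three configurations by showing that each one would produce a cut; that is exactly the contrapositive of the ``if'' direction of the lemma, and your ``converse direction'' then proves that same ``if'' direction again, directly. What is never addressed is the ``only if'' direction: that an arbitrary chromatic cut $\ell$ of $M$ forces some pair of segments into one of the three configurations. This needs its own (short) argument, which is the half the paper actually spends care on: take the two witnessing segments $A,B$ whose \x-ends lie on opposite sides of $\ell$. If $g(A)$ and $g(B)$ do not cross, they must be antiparallel, since two parallel, equally oriented segments crossed by a common line have their \x-ends on the same side of it. If they cross at a point $p$, then $p$ cannot be interior to both segments (the matching is non-crossing), and it cannot lie on two outer rays of the \emph{same} color: $\ell$ meets $g(A)$ only at an interior point of $A$, so the \x-ray of $A$ lies entirely in the open halfplane of $\ell$ containing the \x-end of $A$, and likewise for $B$ on the opposite side; hence the two \x-rays (and similarly the two \ci-rays) cannot meet. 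The only remaining possibilities are exactly the second and third disjuncts.

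A second, related problem is that your case analysis misuses the paper's definitions. ``Antiparallel'' is defined only for segments whose supporting lines do \emph{not} cross (parallel lines with opposite orientations), yet you describe the antiparallel case as ``crossing supporting lines, opposite orientations'', and after concluding ``hence $A\parallel B$'' you go on to consider $p=g(A)\cap g(B)$, which does not exist for parallel lines. The correct trichotomy for a pair is: supporting lines non-crossing (then parallel or antiparallel), or crossing at a point $p$ that lies, on each of the two lines, either inside the segment or on one of its outer rays. Without this set up correctly, the negation of the three disjuncts that your forward direction is supposed to establish is not even stated accurately, and your subcases do not cover all configurations. The constructions you sketch for the ``if'' direction (a line through interior points of both segments when oppositely colored outer rays meet or the segments are antiparallel, and a small rotation of $g(A)$ about an interior point when an outer ray crosses the other segment) do match the paper's and are fine once the definitions are straightened out; it is the necessity direction that is missing.
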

\begin{proof} ``$\Leftarrow$''
If the \x-ray of one segment and the \ci-ray of the second segment cross each other,
then any line through inner points of $A$ and $B$ is a chromatic cut, see Fig.~\ref{fig:charCCuts}~(a).
The same is true if $A$ and $B$ are antiparallel.
If an outer ray of one of the segments crosses the second segment,
then, if we rotate $g(A)$ around an inner point of $A$ by a small angle in one of two possible directions
(depending on the orientation of $A$ and $B$),
then a chromatic cut is obtained, see Fig.~\ref{fig:charCCuts}~(b).


``$\Rightarrow$'' Let $\ell$ be a chromatic cut of $M$,
 and let $A$ and $B$ be two segments that have their \x-ends on the opposite sides of $\ell$.
Consider the lines $g(A)$ and $g(B)$.
 If $g(A)$ and $g(B)$ don't cross, they clearly must be antiparallel.
If they cross, then
it is not possible that the two
outer rays of the same color meet,
because they are on opposite sides of $\ell$.
 \end{proof}

A line $\ell$ is a \emph{balanced line}
if in each open halfplane determined by $\ell$,
the number of \x-points is equal to the number of \ci-points.
We say that an open halfplane is \emph{dominated} by \x-points
if it contains more \x-points than \ci-points.
The next lemma reveals a relation between chromatic cuts and balanced lines.

\begin{lemma}\label{lem:cutBal}
Let $M$ be a BR-matching.
$M$ has a chromatic cut if and only if
there exists a balanced line that crosses a segment of $M$.
\end{lemma}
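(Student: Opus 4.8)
The plan is to prove both directions by a continuity/perturbation argument, using Lemma~\ref{lem:chCCut} to translate "chromatic cut" into the geometric language of outer rays and (anti)parallelism.

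For the direction ``$\Leftarrow$'': suppose a balanced line $\ell$ crosses some segment $A=a_\circ a_\bullet\in M$. Since $\ell$ is balanced, each open halfplane has equally many $\circ$- and $\bullet$-points. Consider the two endpoints of $A$: they lie on opposite sides of $\ell$ (or one could lie on $\ell$, but by general position, after an infinitesimal rotation of $\ell$ about a point of $A\cap\ell$ we may assume this does not happen while $\ell$ stays balanced, since rotating slightly changes the counts only when sweeping across a point). Now I would track the $\bullet$-endpoints of the segments of $M$: on the side of $\ell$ containing $a_\bullet$, the number of $\bullet$-points equals the number of $\circ$-points; but $a_\circ$ is on the \emph{other} side, so the $\circ$-endpoint of the segment incident to some $\bullet$-point on $a_\bullet$'s side must be missing from that side, i.e. there is a second segment $B\in M$ with its $\bullet$-end on the same side as $a_\bullet$ but whose $\circ$-end is on the side of $a_\circ$ — hence $\ell$ crosses $B$ too. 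More carefully: let $n_1$ be the number of $\bullet$-endpoints strictly on the $a_\bullet$-side and $m_1$ the number of $\circ$-endpoints strictly on that side; balancedness gives $n_1=m_1$. The segment $A$ contributes one to $n_1$ but zero to $m_1$ (its $\circ$-end is on the other side). So there must be a segment contributing to $m_1$ but not to $n_1$, i.e.\ a segment $B\ne A$ with its $\circ$-end on the $a_\bullet$-side and its $\bullet$-end on the $a_\circ$-side. Then $\ell$ separates the $\bullet$-ends of $A$ and $B$, so $\ell$ is a chromatic cut.

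For the direction ``$\Rightarrow$'': suppose $M$ has a chromatic cut; by Lemma~\ref{lem:chCCut} there are $A,B\in M$ that are antiparallel, or whose lines meet in a point lying on outer rays of different colors, or such that an outer ray of one crosses the other. In each of these three cases I would exhibit a balanced line crossing a segment of $M$. The cleanest uniform approach: start from \emph{some} chromatic cut $\ell_0$ (which exists by hypothesis) and continuously sweep/rotate it while maintaining that it crosses the relevant pair of segments, monitoring the discrepancy function $d(\ell) = (\#\bullet - \#\circ)$ in one fixed open halfplane. This function changes by $\pm1$ each time $\ell$ passes over a single point (general position), and it takes a value $\ge 1$ when the $\bullet$-ends of $A,B$ are isolated on one side in one extreme configuration and a value $\le -1$ in the other extreme (swap the roles), so by discreteness it must hit $0$ at some intermediate line $\ell^*$; that $\ell^*$ is balanced, and one checks it still crosses a segment of $M$ throughout the sweep (this is where one uses that the sweep keeps crossing $A$ or $B$).

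The main obstacle will be making the sweep in the ``$\Rightarrow$'' direction rigorous: choosing a one-parameter family of lines that (i) starts and ends in configurations with opposite-sign discrepancy, (ii) crosses a segment of $M$ the whole time, and (iii) never passes through two points simultaneously. Handling the three sub-cases of Lemma~\ref{lem:chCCut} may require slightly different sweeps — e.g.\ a parallel translation in a suitable direction for the antiparallel case, versus a rotation about an interior point of $A$ for the outer-ray cases — but in each the interior point(s) of the crossed segment(s) provide a safe pivot, so condition (ii) is automatic, and (iii) is handled by the standard genericity argument (perturb the pivot or the family slightly). A clean alternative that sidesteps some case analysis is to prove ``$\Rightarrow$'' directly from the definition of chromatic cut rather than via Lemma~\ref{lem:chCCut}: start at the given chromatic cut $\ell$, rotate it about a point interior to $A$; for angle $0$ versus angle $\pi$ the plane's two halves are swapped, so the discrepancy $d$ in a fixed halfplane changes sign between these two extremes, hence vanishes somewhere — giving a balanced line that still crosses $A$. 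This is the route I would ultimately take, as it is shortest and avoids invoking the three-case classification.
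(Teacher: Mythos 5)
Your ``$\Leftarrow$'' direction is correct and is essentially the paper's own argument: after perturbing the balanced line off the points of $F$, your count (crossed segments with $\bullet$-end on a given side versus crossed segments with $\circ$-end there) forces a second crossed segment $B$ whose $\bullet$-end lies opposite to that of $A$, so the line is a chromatic cut. Two small repairs: by parity, a balanced line of an $(n+n)$-set can never contain exactly one point of $F$; and if it contains two points (necessarily of different colors), rotating about a point of $A\cap\ell$ does not always preserve balancedness (the two points may be sent to opposite sides) --- a slight translation, as in the paper, does.

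The genuine gap is in ``$\Rightarrow$'', and precisely in the step you call the clean alternative. Rotating the chromatic cut about a generic interior point $p$ of $A$, the discrepancy indeed changes sign between angle $0$ and angle $\pi$, and it jumps by $\pm1$ at single-point crossings --- \emph{except} at the one angle where the rotating line coincides with $g(A)$: there both endpoints of $A$ change sides simultaneously, and the open-halfplane discrepancy attains its intermediate value exactly at $g(A)$. So the intermediate-value argument may produce its only zero at $g(A)$, which does not cross $A$ (it contains it) and need not cross any other segment of $M$; your condition (ii) is exactly what fails at that moment, and it is not ``automatic'' as claimed. (Note that for $n=1$ the line $g(A)$ is balanced while no chromatic cut exists, so ``contains'' cannot be counted as ``crosses''.) Your fallback plan via the three cases of Lemma~\ref{lem:chCCut} leaves open precisely the hard point: designing a sweep whose two endpoint positions have discrepancies of known opposite signs while every intermediate position still crosses a segment. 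The paper's resolution uses \emph{both} segments of the chromatic cut: it first shows that if no balanced line crosses $A$ at $p$, then $g(A)$ itself is balanced (and is the only balanced line through $p$), and then sweeps clockwise from $g(A)$ around $p$, through $\ell$, and on around $q$ to $g(B)$; the chromatic-cut structure forces the discrepancy to be $-1$ just after the start and $+1$ just before the end, with only $\pm1$ jumps in between, so a zero occurs at some line through $p$ or $q$ distinct from $g(A)$ and $g(B)$, i.e.\ a balanced line crossing $A$ or $B$. This two-pivot sweep anchored at the two supporting lines, together with the preliminary claim that those lines are balanced under the contrary assumption, is the idea missing from your proposal.
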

\begin{proof}
	``$\Leftarrow$'' 
Let $\ell$ be a balanced line that crosses a segment $A$.
We can assume that $\ell$ does not contain points from $F$:
it cannot contain exactly one point of $F$;
and if it contains two points of $F$ of different colors, we can
translate
 it slightly,
obtaining a balanced line that still crosses $A$ but does not contain
points of $F$.
If it contains two points of the same color, we rotate it slightly
about the midpoint between these two points.

Now, $A$ has a \x-end in one half-plane of $\ell$ and a
\ci-end in the other half-plane. Since $\ell$ is balanced, there
must be at least one other segment $B$ that has its
 \x-end and
\ci-end on the opposite sides as $A$.
So, $\ell$ is a chromatic cut.

	\medskip
	
	\noindent ``$\Rightarrow$''
First, let $A$ be {a} segment in $M$,
and let $p$ be an inner point of $A$ that does not belong to any line
determined by two points of~$F$, other than the endpoints of~$A$.
We claim that if there is no balanced line that crosses $A$ at $p$,
then $g(A)$ is the only balanced line through~$p$.

Assume that there is no balanced line that crosses $A$ at $p$.
We use a continuity argument.
Let $m=m_0$ be any directed line that crosses $A$ in $p$.
Rotate $m$ around $p$ counterclockwise
until it makes  a half-turn.
Denote by $m_\alpha$ the line obtained from $m$ after rotation by the angle $\alpha$;
so, we rotate it until we get $m_\pi$.
Let $\varphi$ ($0 < \varphi < \pi$) be the angle such that $m_\varphi$ coincides with $g(A)$.
Assume without loss of generality that the right halfplane bounded by $m$ is dominated by \x;
then the right halfplane bounded by $m_\pi$ is dominated by \ci.
As we rotate $m$, the points of $F$ change the side \emph{one by one},
except at $\alpha=\varphi$.
When one point changes sides, $m_\alpha$ cannot change from
\x-dominance to \ci-dominance without becoming a balanced line.
Therefore, for each $0\leq\alpha<\varphi$, the right side of $m_\alpha$ is dominated by \x,
and for each $\varphi  < \alpha \leq \pi$, the right side of $m_\alpha$ is dominated by \ci.
At $\alpha=\varphi$, exactly two points of different colors change sides.
The only possibility is that the \x-end of $A$ passes from from the right side to the left side
and the \ci-end of $A$ passes from the left side to the right side of the rotated line.
It follows that at this moment the value of
$\# (\bullet) - \# (\circ)$
in the right 
 halfplane
changes from $1$ to $-1$,
and that $m_\varphi=g(A)$ is a balanced line.

\smallskip

Now, let $\ell$ be a chromatic cut
that crosses $A, B \in M$ so that
the \x-end of $A$ and the \ci-end of $B$ are in the same half-plane bounded by $\ell$.
Denote by $p$ and $q$ the points of intersection of $\ell$ with $A$ and $B$, respectively.
We assume without loss of generality that $p$ and $q$ do not belong to any line determined by points of $F$.

If there is a balanced line that crosses $A$ at $p$,
or a balanced line that crosses $B$ at $q$, 
 we are done.
By the above claim,
it remains to consider the case when
the only balanced line through $p$ is $g(A)$ and
the only balanced line through $q$ is $g(B)$.
%
%
Assume without loss of generality that $\ell$ is horizontal,
 $p$ is left of $q$,
and the \x-end of $A$ is above $\ell$,
see Fig.~\ref{fig:cc} for an illustration.

We start with the line $k=g(A)$, directed upwards, 
rotate it clockwise around $p$ until it coincides with $\ell$,
and then continue to rotate it clockwise around $q$ until it coincides with $g(B)$,
directed 
down.
We monitor $\# (\bullet) - \# (\circ)$ on the right side of the line
$k$ as above:
this quantity is $0$ in the initial and the final position.
Just after the initial position it is $-1$,
and just before the final position it is $+1$.
In between,
it makes {only} $\pm1$ jumps,
since the points of $F$ change the side of the rotated line $k$ {one by one}.
It follows that for some intermediate position it is $0$ --- a contradiction.
\end{proof}

\begin{figure}[h]
\input{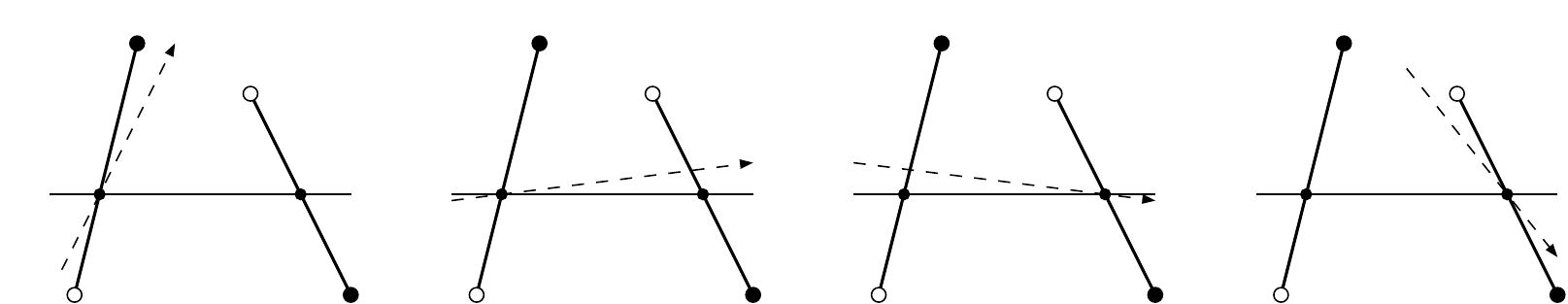_t}
\caption{Illustration to the proof of Lemma~\ref{lem:cutBal}.}
\label{fig:cc}
\end{figure}

In Section~\ref{sec:algo}, we discuss the algorithmic implementation
of this proof.

\begin{corollary}\label{cor:bal}
	Let $M$ be a BR-matching of $F$ with a chromatic cut. Then there is a different matching $M'\neq M$.
\end{corollary}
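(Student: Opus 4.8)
The plan is to derive this directly from Lemma~\ref{lem:cutBal} together with the fact, recalled in the introduction, that every bichromatic point set in general position admits a BR-matching.

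First I would apply Lemma~\ref{lem:cutBal} to obtain a balanced line $\ell$ crossing some segment $A\in M$, and perturb $\ell$ — exactly as in the proof of that lemma — so that $\ell$ avoids all points of $F$. Write $F_1$ and $F_2$ for the points of $F$ in the two open halfplanes bounded by $\ell$; note that $A$ has one endpoint in each, since it crosses $\ell$. Because $\ell$ is balanced, each of $F_1$ and $F_2$ contains the same number of \ci- and \x-points, so each is a bichromatic point set in general position and hence carries a BR-matching, say $M_1$ of $F_1$ and $M_2$ of $F_2$.

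Then I would argue that $M':=M_1\cup M_2$ is a BR-matching of $F$ different from $M$. It is a perfect bichromatic matching because $M_1$ and $M_2$ are; it is non-crossing because $M_1$ and $M_2$ are individually non-crossing and every segment of $M_1$ lies strictly on the side of $\ell$ opposite to every segment of $M_2$. Finally, every segment of $M'$ has both endpoints on one side of $\ell$, whereas $A$ does not, so $A\in M\setminus M'$ and hence $M'\neq M$. I do not expect a real obstacle here: all the content is already in Lemma~\ref{lem:cutBal}, and the only mild points of care are reusing the perturbation argument for $\ell$ and the elementary observation that a balanced line splits $F$ into two subsets that are themselves balanced, hence matchable.
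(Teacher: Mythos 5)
Your proposal is correct and is essentially the paper's own argument: invoke Lemma~\ref{lem:cutBal} to get a balanced line crossing a segment $A\in M$, match each side separately, and observe that the union avoids $A$, hence differs from $M$. The extra care you take (perturbing $\ell$ off the points and noting each side is itself a balanced bichromatic set) just makes explicit details the paper leaves implicit.
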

\begin{proof}
By Lemma \ref{lem:cutBal}, there exists a balanced line $\ell$ crossing
%
a segment $A\in M$.
We construct matchings on both sides of $\ell$, and denote their union by $M'$.
Then $M'$ is a matching of $F$, and we have $M'\neq M$ since $M'$ does not use $A$.
\end{proof}


\begin{remark} As mentioned in the introduction,
Corollary~\ref{cor:bal} follows from
the stronger statement of~\cite[Lemma 9]{abls-bcm-13}:
 the existence of a compatible
matching $M'\ne M$.
We have given a simpler alternative proof.
\end{remark}

\begin{figure}[h]
			\begin{center}
			\includegraphics{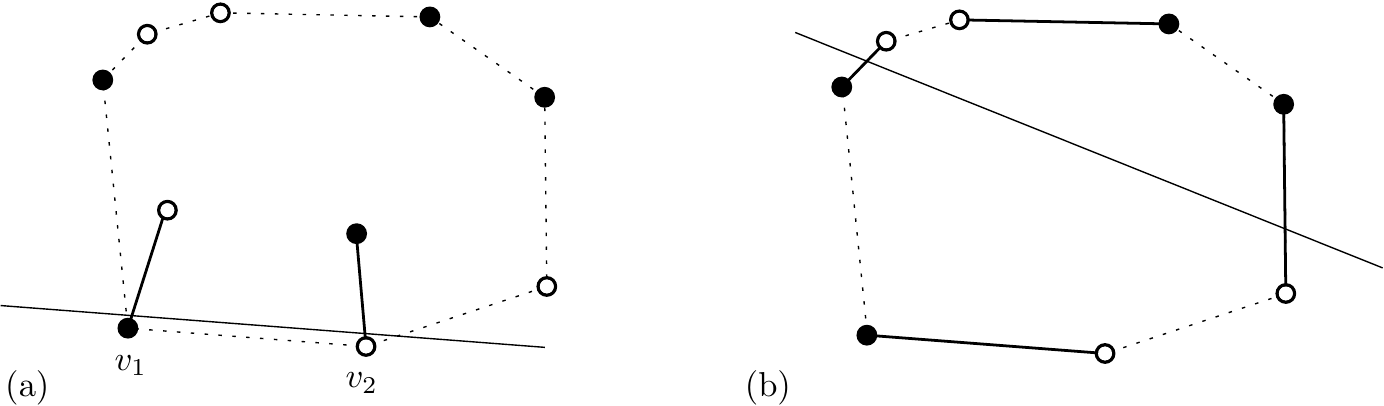}
\caption{Illustration to the proof of Lemma~\ref{lem:otherPS_v2}.}
				\label{fig:CutObservation}
			\end{center}
\end{figure}

\begin{lemma}\label{lem:otherPS_v2}
Let $M$ be a BR-matching of $F$ that has no chromatic cut.
Then
\begin{itemize}
\item either all points of $\partial \CH(F)$ have the same color,
\item or the points of $\partial \CH(F)$ 
form two color intervals of size at least $2$.
\end{itemize}
In the latter case, the two boundary segments connecting points of different color
necessarily belong to $M$.
\end{lemma}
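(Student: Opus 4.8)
The plan is to argue by contradiction, assuming the circular sequence of colors on $\partial \CH(F)$ has some structure other than the two listed possibilities, and in each case exhibit a chromatic cut. First I would observe that the only alternatives to be excluded are: (i) $\partial \CH(F)$ has at least three color intervals, or (ii) $\partial \CH(F)$ has exactly two color intervals but one of them has size $1$. I would handle these via a few local configurations on the hull, using the characterization from Lemma~\ref{lem:chCCut}, which says that a chromatic cut exists as soon as two matching segments are antiparallel, or an outer ray of one crosses the other, or two outer rays of different colors meet.

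The core case to analyze is a \ci-point $p$ and a \x-point $q$ that are \emph{consecutive} on $\partial \CH(F)$; such a pair exists whenever $\partial \CH(F)$ is not monochromatic. Let $A$ be the matching segment with endpoint $p$ and $B$ the one with endpoint $q$. The key geometric step is to show that, because $p$ and $q$ are adjacent hull vertices, the rest of $F$ (and in particular the other endpoints of $A$ and $B$) lies in the closed halfplane bounded by the line through $p$ and $q$ not containing the hull edge $pq$'s exterior side; combined with the fact that $p$ is the \ci-end of $A$ and $q$ is the \x-end of $B$, this forces the \ci-ray of $A$ and the \x-ray of $B$ to be situated so that either they cross, or $A$ and $B$ cross (impossible, $M$ is non-crossing), or we can conclude $A=B$, i.e.\ $A$ is exactly the segment $pq$. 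So either we get a chromatic cut via Lemma~\ref{lem:chCCut}, contradicting the hypothesis, or the hull edge $pq$ itself is a matching segment. This last alternative is precisely what the final sentence of the lemma asserts: every boundary segment joining two differently-colored hull points must lie in $M$.

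Granting that, I would finish as follows. If $\partial \CH(F)$ is monochromatic we are in the first bullet and done. Otherwise pick a consecutive \ci--\x pair $p,q$; by the above, $pq \in M$. Now suppose some color interval has size $1$, say a lone \ci-point $p$ with \x-neighbors $q$ and $q'$ on either side. Then both $pq$ and $pq'$ would have to be matching segments incident to $p$, which is impossible since $M$ matches $p$ only once; hence every color interval has size at least $2$. Finally, if there were at least three color intervals, walking around the hull we meet at least two distinct maximal \ci-intervals; pick the \ci-endpoints $p_1,p_2$ of two different \ci-intervals and their \x-neighbors, giving two disjoint boundary segments $p_1 q_1, p_2 q_2 \in M$ whose \x-ends $q_1, q_2$ and \ci-ends $p_1, p_2$ interleave around the hull — from this cyclic interleaving one constructs a line separating $\{q_1,q_2\}$ from $\{p_1,p_2\}$ that crosses both segments, i.e.\ a chromatic cut, a contradiction. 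So at most two color intervals remain, and with both of size $\ge 2$ we are in the second bullet; the boundary segments in question were shown to be in $M$ along the way.

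The main obstacle I anticipate is the core geometric claim about the consecutive pair $p,q$: making precise why adjacency on the hull forces the outer rays into a ``crossing or equal'' position, and carefully ruling out the degenerate sub-cases (e.g.\ $A$ and $B$ sharing the line $g(pq)$, or one of $A$, $B$ lying along the hull edge to a \emph{third} vertex). I would isolate this as the one place needing a genuine picture (Fig.~\ref{fig:CutObservation}) and a short case split on the relative orientation of $A$ and $B$, invoking Lemma~\ref{lem:chCCut} as the uniform conclusion in every non-degenerate branch.
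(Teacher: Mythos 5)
Your skeleton is the same as the paper's (adjacent hull vertices of different colors must be matched to each other; a size-one color interval would then force a point to be matched twice; three or more intervals yield two boundary segments of $M$ that produce a chromatic cut), but two of your steps, as written, would fail. In the many-intervals case you construct ``a line separating $\{q_1,q_2\}$ from $\{p_1,p_2\}$ that crosses both segments, i.e.\ a chromatic cut.'' That is not a chromatic cut: the definition requires the two \x-ends to lie on \emph{different} sides of the line, whereas your line puts $q_1$ and $q_2$ on the same side; moreover, if the four endpoints really interleave in the cyclic order $p_1,q_1,p_2,q_2$, no line separating $\{q_1,q_2\}$ from $\{p_1,p_2\}$ exists at all. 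The correct argument is the opposite one, and it is the paper's: choose the two boundary segments consistently oriented along the hull (both with the \x-end preceding the \ci-end in the circular order, which a pigeonhole over the at least four boundary segments guarantees); then \emph{any} line crossing both of them splits the hull cycle into one arc containing $q_1,p_2$ and another containing $p_1,q_2$, so the two \x-ends are on opposite sides. Note also that ``pick the \ci-endpoints of two \ci-intervals and their \x-neighbors'' does not by itself guarantee the interleaving; you must pick, say, the last point of each \ci-interval in counterclockwise order.

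The second problem is your core claim for a consecutive \ci--\x pair $p,q$: the trichotomy ``the \ci-ray of $A$ and the \x-ray of $B$ cross, or $A$ and $B$ cross, or $A=B$'' is not exhaustive. Take $pq$ horizontal with the hull above it, let $A$ be a short segment from $p$ directed slightly up-right, and let $B$ run from a \ci-point far up-left down to its \x-end $q$, passing above $A$: the segments are disjoint and distinct, and the \ci-ray of $A$ (leaving $p$ down-left) and the \x-ray of $B$ (leaving $q$ down-right) diverge — yet a chromatic cut exists because the \x-ray of $A$ crosses $B$, a different clause of Lemma~\ref{lem:chCCut}. So the case split you defer to would have to run through all ray/segment clauses of that lemma, which is exactly the delicate part you flagged. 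The paper avoids this entirely: if $p$ and $q$ are not matched to each other, they lie on two distinct segments of $M$, and translating the supporting line of the hull edge $pq$ slightly inward gives a line crossing both segments with only $p$ and $q$ on its outer side; then $q$ (an \x-end) is outside while the \x-end of the segment at $p$ is inside, so this line is a chromatic cut directly from the definition, with no appeal to Lemma~\ref{lem:chCCut}. With those two repairs your argument becomes essentially the paper's proof.
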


\begin{proof}
Assume that $\partial \CH(F)$ has points of both colors.

If $v_1$ and $v_2$ are two neighboring points on $\partial \CH(F)$ with different colors,
then they are matched by a segment of $M$.
Indeed, let $\ell'$ be the line through $v_1$ and $v_2$.
If $v_1$ and $v_2$ are not matched by a segment of $M$, then
each of them is an endpoint of some segment of $M$.
When we shift $\ell'$ slightly so that it crosses these two segments,
a chromatic cut is obtained, see Fig.~\ref{fig:CutObservation}~(a).

Therefore, if the points of $\partial \CH(F)$ form more than two color intervals,
then at least four segments of $M$ have both ends on $\partial \CH(F)$.
At least two among them have the \x-end before the \ci-end, with respect to their circular order.
Any line that crosses these two segments will be then a chromatic cut, see Fig.~\ref{fig:CutObservation}~(b).

Thus, we have exactly two color intervals.
If one of them consists of one point, then this point has two neighbors of another color of $\partial \CH(F)$.
As observed above, this point must be matched by $M$ to both of them, which is clearly impossible.
\end{proof}

We recall the notation from the introduction:
a \emph{linear matching} is a BR-matching without a chromatic cut
such that $\partial \CH(F)$ consists of exactly two color intervals, both of size at least $2$;
a \emph{circular matching} is a BR-matching without a chromatic cut
such that all points of $\partial \CH(F)$ have the same color.
So, we proved now that if a BR-matching has no chromatic cut, then
it necessarily belongs to one of these types.
In the next sections we study these types in more detail.



\section{A Sidedness Relation between Segments}
\label{sec:order}

\begin{definition}\label{def:orderM}
	For two segments $A, B$,
 we define the \emph{sidedness relation} \ord \ 
 as follows:
	$ A \triangleleft B $ if
	$B$ lies strictly right of $g(A)$ and
	$A$ lies strictly left of $g(B)$.
\end{definition}


\begin{lemma}\label{lemma-total-order}
Let $M$ be a BR-matching.
$M$ has no chromatic cut iff
the sidedness relation $\triangleleft $
is a total relation, i.\,e.,
for any two segments $A, B\in M$, $A\neq B$,
 we have
	$ A \triangleleft B $
or	$ B \triangleleft A $.
\end{lemma}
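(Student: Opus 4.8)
The statement connects chromatic cuts to totality of $\triangleleft$, so the natural strategy is to prove the contrapositive in both directions, using Lemma~\ref{lem:chCCut} as the bridge. Recall that $A \triangleleft B$ fails exactly when it is \emph{not} the case that ($B$ is strictly right of $g(A)$ and $A$ is strictly left of $g(B)$); so ``neither $A\triangleleft B$ nor $B\triangleleft A$'' is a statement purely about the positions of the two directed lines and the two directed segments relative to one another. I would first enumerate, up to the symmetry $A\leftrightarrow B$, the geometric configurations in which this double failure occurs. The plan is to split on whether $g(A)$ and $g(B)$ cross: if they do not cross, then each segment lies entirely on one side of the other's line, and the only way to block both $A\triangleleft B$ and $B\triangleleft A$ is that the sides are ``inconsistent'', which for non-crossing directed lines means precisely that they are antiparallel. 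If $g(A)$ and $g(B)$ do cross, then $A$ has points on both sides of $g(B)$ or vice versa --- i.e.\ one line crosses the other segment, or (in the boundary-ish reading) their intersection point lies on outer rays; in every such case one of the three alternatives in Lemma~\ref{lem:chCCut} is met.

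Concretely, for the direction ``$\triangleleft$ not total $\Rightarrow$ chromatic cut'': suppose $A,B\in M$ with neither $A\triangleleft B$ nor $B\triangleleft A$. If $g(A),g(B)$ do not cross, I claim they must be antiparallel: if they were parallel, then with a consistent orientation one of the two open halfplanes bounded by $g(A)$ contains $B$, and symmetrically, and a short case check (using that $B$ is a single segment lying strictly to one side of $g(A)$) shows that the ``right of $g(A)$ / left of $g(B)$'' pattern is forced to hold in one of the two orders --- contradiction. Hence antiparallel, and Lemma~\ref{lem:chCCut} gives a chromatic cut. If $g(A),g(B)$ do cross at a point $x$, then since $A\triangleleft B$ fails, either $B$ is not strictly right of $g(A)$ or $A$ is not strictly left of $g(B)$; combined with the analogous failure of $B\triangleleft A$ and the fact that $g(A)$ and $g(B)$ genuinely cross, one deduces that $x$ lies on an outer ray of $A$ \emph{and} of $B$, or $x$ lies in the relative interior of one of the segments (so an outer ray of the other crosses it), or the colors of those rays differ --- each of which is one of the three conditions of Lemma~\ref{lem:chCCut}. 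So $M$ has a chromatic cut.

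For the converse, ``chromatic cut $\Rightarrow$ $\triangleleft$ not total'': assume $M$ has a chromatic cut, and invoke Lemma~\ref{lem:chCCut} to get $A,B\in M$ that are antiparallel, or whose lines meet on outer rays of different colors, or such that an outer ray of one crosses the other segment. In the antiparallel case, neither segment lies strictly to one ``consistent'' side, so both $A\triangleleft B$ and $B\triangleleft A$ fail directly from the definition. In the ``outer rays of different colors meet'' case, the intersection point of $g(A)$ and $g(B)$ is on the $\bullet$-ray of one and the $\circ$-ray of the other, which means neither segment lies strictly to the correct side of the other's line (the intersection being on an outer ray is exactly the degenerate/forbidden side configuration), so again both relations fail. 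In the ``outer ray crosses the second segment'' case, that second segment has the intersection point of the lines in its relative interior, so it lies on \emph{both} sides of the other line, killing both $A\triangleleft B$ and $B\triangleleft A$. In all cases $\triangleleft$ is not total.

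The main obstacle I anticipate is the bookkeeping in the crossing case of the forward direction: ``$B$ strictly right of $g(A)$'' is a conjunction over both endpoints of $B$, and negating it while simultaneously negating the symmetric statement for $A$, under the constraint that the two lines actually intersect, produces several subcases that must each be matched to the right clause of Lemma~\ref{lem:chCCut}; the cleanest way to organize this is probably to note that if $g(A)$ crosses the \emph{open} segment $B$ then we are immediately in the ``outer ray crosses the second segment'' case (after a symmetry check on which ray), and otherwise $g(A)\cap g(B)$ is an endpoint configuration landing on outer rays, whose colors we then compare. Getting the general-position hypothesis to rule out the truly degenerate incidences (intersection point coinciding with an endpoint of $F$) is routine but should be mentioned.
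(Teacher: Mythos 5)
Your proposal is correct and follows essentially the same route as the paper: both directions are reduced to Lemma~\ref{lem:chCCut}, with a case analysis on whether $g(A)$ and $g(B)$ cross, and on whether the crossing point lies on outer rays of the same or different colors (or inside a segment), the same-color/parallel configurations giving comparability and the other configurations giving both a cut and incomparability. The paper's proof is just a terser statement of this case check (citing Fig.~\ref{fig:allConfig}), so your more explicit bookkeeping adds detail but no new idea.
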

\begin{proof}
If two segments have a chromatic cut, then their
 supporting lines must
intersect as in Fig.~\ref{fig:allConfig}~(a) or~(b),
and the segments are not comparable by~$\triangleleft$;
otherwise, the supporting lines
are parallel or intersect in the outer rays of the same color,
and then the segments are comparable.
\end{proof}

Note that the relation $\triangleleft$ is asymmetric by definition:
we never have
	$ A \triangleleft B $
and	$ B \triangleleft A $.
Moreover, if $M$ has no chromatic cut, then, in order to prove $A \triangleleft B$,
it suffices to prove only one condition from the definition.

\begin{lemma}\label{lem:half_triangle}
Let $M$ be a BR-matching without chromatic cut; $A, B  \in M$  $(A\neq B)$. If
$B$ lies right  of $g(A)$
\emph{\textbf{or}} $A$ lies  left of $g(B)$ 
then $A \triangleleft B$.
\end{lemma}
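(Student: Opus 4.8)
The plan is to show that under the no‑chromatic‑cut hypothesis, the two defining conditions of $A\triangleleft B$ are in fact equivalent, so that either one implies the other (and hence implies $A\triangleleft B$). By symmetry it suffices to assume that $B$ lies strictly right of $g(A)$ and deduce that $A$ lies strictly left of $g(B)$. First I would note that $g(A)$ and $g(B)$ cannot be antiparallel: by Lemma~\ref{lem:chCCut}, antiparallel segments produce a chromatic cut, contradicting the hypothesis. So either $g(A)$ and $g(B)$ are parallel (same orientation), or they cross.

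In the parallel case the conclusion is immediate: if $g(A)\parallel g(B)$ and $B$ lies strictly right of $g(A)$, then because the two directed lines have the same orientation, $A$ lies strictly left of $g(B)$, so $A\triangleleft B$. The interesting case is when $g(A)$ and $g(B)$ cross, at a point $x$. Since $B$ lies strictly right of $g(A)$, the segment $B$ does not meet $g(A)$, so $x$ lies on one of the two outer rays of $B$ (the \ci‑ray or the \x‑ray). By Lemma~\ref{lem:chCCut}, the absence of a chromatic cut forbids that $x$ lie on outer rays of different colors and also forbids that an outer ray of one segment cross the other segment; in particular $x$ cannot lie on the \ci‑ray of $A$ together with the \x‑ray of $B$, nor on the \x‑ray of $A$ together with the \ci‑ray of $B$, and $x$ (lying on an outer ray of $B$) cannot be an interior point of $A$. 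Hence $x$ lies on an outer ray of $A$ as well, and — since different colors are excluded — it lies on outer rays of $A$ and $B$ of the \emph{same} color: either the two \ci‑rays or the two \x‑rays meet at $x$.

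Now I would finish by a direct orientation check at $x$. The two directed lines cross at $x$; walking along $g(A)$ in its direction, the segment $A$ lies entirely on one side of $x$, and the color of the nearer endpoint of $A$ tells us which side of $g(B)$ the point $x$'s neighborhood of $g(A)$ goes to — and symmetrically for $B$. Because the meeting rays at $x$ have the same color, the four endpoints $\circ_A,\bullet_A,\circ_B,\bullet_B$ are arranged around $x$ so that $A$ lies strictly left of $g(B)$ precisely when $B$ lies strictly right of $g(A)$; this is the same picture that underlies Lemma~\ref{lemma-total-order} (the configurations of Fig.~\ref{fig:allConfig}~(a)--(b) are exactly the ones excluded, and in the remaining crossing configuration the two "side" conditions coincide). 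I expect this last orientation bookkeeping to be the only delicate point: one must check both subcases (the \ci‑rays meet, or the \x‑rays meet) and verify that in each the single hypothesis forces the other, but this is a short case analysis with a picture, not a computation.
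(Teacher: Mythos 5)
Your argument is correct, but it takes a longer, more geometric route than the paper. The paper's proof is essentially two lines: by Lemma~\ref{lemma-total-order}, the absence of a chromatic cut makes $\triangleleft$ total, so either $A \triangleleft B$ or $B \triangleleft A$; and either of your hypotheses (``$B$ right of $g(A)$'' or ``$A$ left of $g(B)$'') directly contradicts the corresponding strict condition in the definition of $B \triangleleft A$, so $A \triangleleft B$ must hold. You instead prove the stronger statement that, without a chromatic cut, the two half-conditions of Definition~\ref{def:orderM} are equivalent, by going back to Lemma~\ref{lem:chCCut}: you rule out antiparallel lines, handle the parallel case, and in the crossing case use the forbidden configurations (intersection on outer rays of different colors, outer ray crossing the other segment, plus general position for the endpoint degeneracy you leave implicit) to force the intersection point onto same-colored outer rays, after which the orientation check indeed goes through in both the $\circ$--$\circ$ and $\bullet$--$\bullet$ subcases. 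What you gain is a self-contained geometric proof that does not lean on Lemma~\ref{lemma-total-order}; what you lose is economy, since your case analysis essentially re-proves the content of that lemma, which the paper deliberately factors out and reuses here via the simple observation that totality plus a contradiction with one condition already pins down the direction of the relation.
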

\begin{proof}
If $M$ has no chromatic cut by Lemma \ref{lemma-total-order} either $A \ord B$ or $B \ord A$. If we know
one of the above conditions, $B \ord A$ is ruled out.
\end{proof}


\section{Quasi-Parallel, or Linear, Matchings}
\label{sec:type_l}

In this section we give several characterizations of linear matchings
and
prove that such matchings are unique for their point sets.


\begin{lemma}\label{lem:min_max}
If $M$ is a linear matching,
then it has a minimum and a maximum element with respect to $\triangleleft$.
\end{lemma}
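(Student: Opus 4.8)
The goal is to show that a linear matching $M$ has both a $\triangleleft$-minimum and a $\triangleleft$-maximum. By Lemma~\ref{lemma-total-order}, since $M$ has no chromatic cut, $\triangleleft$ is a total relation, and it is asymmetric by definition. So the only thing that could go wrong is a "cyclic" obstruction — a finite set of segments with no $\triangleleft$-minimal element — which in a finite totally ordered-by-a-relation situation means a directed cycle $A_1 \triangleleft A_2 \triangleleft \cdots \triangleleft A_k \triangleleft A_1$. Thus it suffices to produce one segment that is $\triangleleft$-below everything else (for the minimum) and one that is $\triangleleft$-above everything else (for the maximum), and the natural candidates are the two boundary segments of $M$ guaranteed by Lemma~\ref{lem:otherPS_v2}.

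**Key steps.** First I would invoke Lemma~\ref{lem:otherPS_v2}: since $M$ is linear, $\partial\CH(F)$ splits into exactly two color intervals, each of size $\geq 2$, and the two "transition" edges of $\partial\CH(F)$ — each joining a \ci-point to a \x-point — are segments of $M$. Call these two segments $A_{\min}$ and $A_{\max}$, directed from their \ci-end to their \x-end as usual; orient them so that walking along $\partial\CH(F)$ one passes $A_{\min}$ entering the \x-interval and $A_{\max}$ entering the \ci-interval. The claim is that $A_{\min} \triangleleft B$ for every other $B \in M$ and $B \triangleleft A_{\max}$ for every other $B\in M$. To prove $A_{\min}\triangleleft B$, note that both endpoints of $A_{\min}$ lie on $\partial\CH(F)$, so the whole point set $F$ — in particular the segment $B$ — lies in the closed halfplane on one side of $g(A_{\min})$; I would check, using the orientation convention and the fact that the \x-interval lies on the side toward which $A_{\min}$ "points", that $B$ lies strictly to the \emph{right} of $g(A_{\min})$ (strictness coming from general position and $B\neq A_{\min}$). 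By Lemma~\ref{lem:half_triangle}, one such one-sided condition already gives $A_{\min}\triangleleft B$. The argument for $A_{\max}$ is symmetric: $B$ lies strictly to the left of $g(A_{\max})$, hence $B\triangleleft A_{\max}$.

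**Main obstacle.** The delicate point is the orientation bookkeeping: I must be careful that the \ci-to-\x orientation of the boundary segment, combined with which color interval sits on which side, really does yield "right of $g(A_{\min})$" rather than "left", and likewise for $A_{\max}$ — getting this backwards would make $A_{\min}$ the maximum instead. This is where the sign conventions in Definition~\ref{def:orderM} (the segments being directed \ci-end to \x-end, and the line $g(A)$ inheriting that orientation) have to be matched precisely against the geometry of the two color intervals on $\partial\CH(F)$; a small figure tracing one boundary segment and the bulk of $F$ on one side would make this transparent. Once the correct side is identified, Lemma~\ref{lem:half_triangle} does all the remaining work and no case analysis on pairs of interior segments is needed. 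I should also note explicitly that $A_{\min}\neq A_{\max}$ (they are distinct transition edges, since there are exactly two color intervals each of size $\geq 2$, so the two transitions occur at different edges of the hull), so the minimum and maximum are genuinely two different segments when $|M|\geq 2$, and the statement is trivial when $|M|=1$.
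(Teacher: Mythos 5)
Your proposal is correct and follows essentially the same route as the paper: identify the two bichromatic hull edges guaranteed by Lemma~\ref{lem:otherPS_v2}, observe that all other segments lie strictly on one side of each of their supporting lines (one side for each, determined by the \ci-to-\x orientation), and conclude via Lemma~\ref{lem:half_triangle} that one is the minimum and the other the maximum. The only cosmetic remark is that the one-sidedness comes from these being \emph{edges} of $\partial\CH(F)$ (not merely segments with endpoints on the boundary), which is what you in fact use.
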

\begin{proof}
By Lemma~\ref{lem:otherPS_v2}, the two boundary segments connecting points of different color belong to $M$.
For one of them (to be denoted by $A_1$), all other segments of $M$ belong to the right half-plane bounded by $g(A_1)$;
for the second (to be denoted by $A_n$), all other segments of $M$ belong to the left half-plane bounded by $g(A_n)$.
Since $M$ has no chromatic cut, it follows from Lemma~\ref{lem:half_triangle}
that $A_1$ is the minimum, and $A_n$ is the maximum element of $M$ with respect to $\triangleleft$.
\end{proof}

\begin{definition}\label{def:quasiParallel}
A BR-matching $M$ is called \emph{quasi-parallel}
if there exists a directed \emph{reference line} $\ell$ such that the following conditions hold:
\begin{itemize}
	\item
[\rm(i)]
 No segment is perpendicular to $\ell$.
\item
[\rm(ii)] For any $A \in M$, the direction of its projection on $\ell$ (as usual, from \ci\ to \x) coincides with the direction of $\ell$.
\item
[\rm(iii)] For any non-parallel $A, B \in M$, the projection of the
intersection point of $g(A)$ and $g(B)$ on $\ell$
 lies outside the convex hull of the projections of $A$ and $B$ on $\ell$.
\end{itemize}
\end{definition}

The notion of quasi-parallel segments was introduced by 
 Rote~\cite{RoteJan1992, RoteDis1988}
as a generalization of parallel segments,
in the context of a dynamic programming algorithm for some instances of the traveling salesman problem.
In that work, the segments were uncolored;
thus, our definition is a ``colored'' version of the original one.
Fig.~\ref{fig:quasiPar} shows an example of quasi-parallel matching, with horizontal $\ell$.

\begin{figure}[h]
			\begin{center}
			\includegraphics[width = \textwidth]{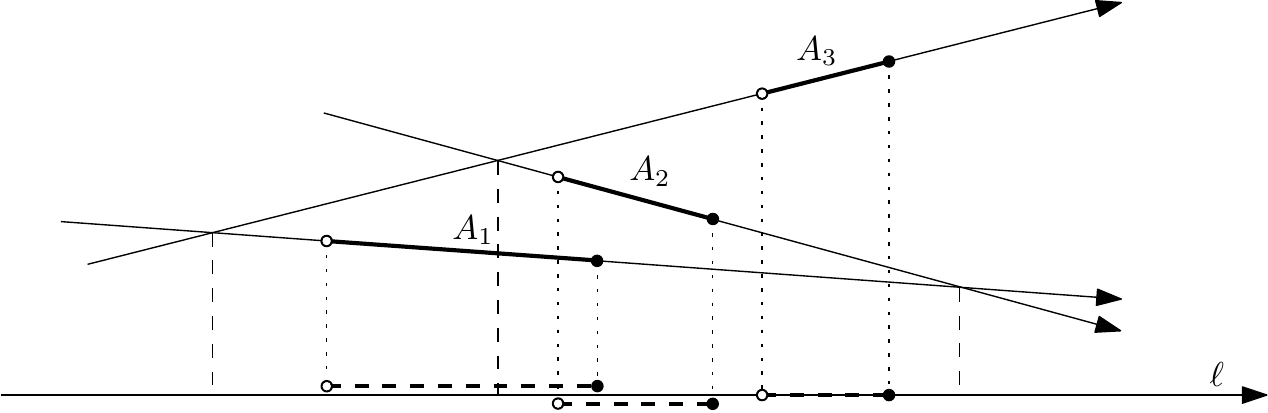}
				\caption{A quasi-parallel matching.}
				\label{fig:quasiPar}
			\end{center}
\end{figure}

 Fig.~\ref{fig:allConfig}.
 shows three ``patterns'' --- configurations of two or three segments
with respect to the intersection pattern of their supporting lines
(thus, they can be expressed in terms of order types).
The patterns are considered up to exchanging the colors and reflection.
A pair of antiparallel segments is considered as a special case of the pattern~(b).

\begin{lemma}\label{lem:typeL}
Let $M$ be a BR-matching of a bichromatic $(n+n)$ set $F$.
Then the following conditions are equivalent:  
\begin{enumerate}
\item $M$ is a linear matching.
\item The relation $\triangleleft$ in $M$ is a strict linear order.
\item $M$ has no patterns of the three kinds in Fig.~\ref{fig:allConfig}.
\item $M$ is a quasi-parallel matching.
\end{enumerate}
\end{lemma}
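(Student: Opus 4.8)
The plan is to prove a cycle of implications, say $(1)\Rightarrow(3)\Rightarrow(4)\Rightarrow(2)\Rightarrow(1)$, taking advantage of the fact that Lemma~\ref{lemma-total-order} already tells us $\triangleleft$ is total as soon as we know $M$ has no chromatic cut. The implication $(2)\Rightarrow(1)$ is the easiest: a linear order on a nonempty finite set has a minimum and maximum; in particular $\triangleleft$ is total, so by Lemma~\ref{lemma-total-order} there is no chromatic cut, and then Lemma~\ref{lem:otherPS_v2} says $\partial\CH(F)$ has either one or two color intervals; I must rule out the one-interval (circular) case, which I do by showing that a circular matching necessarily contains a $3$-star configuration (pattern~(c) of Fig.~\ref{fig:allConfig}), three segments pairwise related by $\triangleleft$ in a cyclic fashion, contradicting transitivity — or, more directly, by invoking that in the circular case no segment can be a $\triangleleft$-minimum. (If the machinery of Theorem~\ref{thm:circularMatchings} is not yet available at this point in the paper, I would instead argue directly: if all hull points are the same color, pick the segment $A$ whose $\ci$-end is a hull vertex; rotating the supporting line shows some other segment lies on the ``wrong'' side, so $A$ is not minimal, and symmetrically no segment is minimal, contradicting that a linear order on a finite set has a least element.)

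For $(1)\Rightarrow(3)$: assume $M$ is linear. Each of the three forbidden patterns is a configuration of two or three segments; by the remark after Lemma~\ref{lem:typeL}, it suffices to exclude them among arbitrary sub-pairs/sub-triples. Patterns~(a) and~(b) are exactly the two ways two segments' supporting lines can cross so as to create a chromatic cut (Lemma~\ref{lem:chCCut}, together with the antiparallel case being a special case of~(b)), so a chromatic-cut-free matching has neither; this is essentially the content of the proof of Lemma~\ref{lemma-total-order}. Pattern~(c), the $3$-star, is a triple $A,B,C$ with $A\triangleleft B\triangleleft C\triangleleft A$; but a linear matching has, by Lemma~\ref{lem:otherPS_v2} and the hull structure, two "extreme" boundary segments, and more importantly I will argue directly from the hull: in a linear matching the two bichromatic hull edges $A_1,A_n$ give a $\triangleleft$-min and $\triangleleft$-max (Lemma~\ref{lem:min_max}), and one shows inductively that $\triangleleft$ is actually transitive, which is incompatible with a $3$-cycle. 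Cleanest is to fold this into the order argument: prove $(1)\Rightarrow(2)$ directly (linear $\Rightarrow$ $\triangleleft$ total by Lemma~\ref{lemma-total-order}, plus transitivity from the geometry of the reference configuration), and separately $(2)\Leftrightarrow(3)$ since a total asymmetric relation is a linear order iff it has no $3$-cycle, and a $3$-cycle of comparable segments is precisely pattern~(c) while incomparability is precisely pattern~(a) or~(b).

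For $(3)\Rightarrow(4)$ — constructing the reference line — this is the heart of the matter and the main obstacle. Given a matching avoiding all three patterns, I want a directed line $\ell$ with properties (i)–(iii). The idea: since there is no $3$-star and no pattern (a)/(b), the relation $\triangleleft$ is a linear order $A_1\triangleleft\cdots\triangleleft A_n$; I will build $\ell$ so that the projections of the segments onto $\ell$ appear in this order and are "well separated" in the sense of~(iii). One natural approach is a limiting/perturbation argument on directions: consider directions $\ell$ for which no segment is perpendicular (an open dense set) and (ii) holds (an open condition for each segment, namely $\ell$ is not antiparallel-ish to $A_i$ — in fact (ii) holds for all $A$ simultaneously on an open arc of directions because all $A_i$ point "roughly the same way" once antiparallel pairs are excluded); then among those directions show that (iii) can be achieved, e.g. by taking $\ell$ "generic" or by an explicit choice coming from Rote's original construction for uncolored quasi-parallel segments, noting that our colored version follows from the uncolored one applied after the orientations have been fixed by~(ii). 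Alternatively, induct on $n$: remove the $\triangleleft$-maximum $A_n$, get a reference line for $M\setminus\{A_n\}$ by induction, and perturb it to accommodate $A_n$ using that $A_n$ is "last"; the delicate point is simultaneously keeping (iii) for all old pairs while getting it for the new pairs $(A_i,A_n)$, which requires that $A_n$'s supporting line meets each $g(A_i)$ outside the relevant projected hull — this is where avoidance of the patterns is used crucially. I expect the bookkeeping in this step (choosing the direction, then the additive position, of $\ell$, and verifying (iii) survives) to be the only genuinely technical part; everything else is a short deduction from the lemmas already established.
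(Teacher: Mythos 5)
The central step, $(3)\Rightarrow(4)$, is not actually proved in your proposal, and the sketch you lean on contains a false claim. You assert that condition (ii) of Definition~\ref{def:quasiParallel} holds for all segments simultaneously on an open arc of directions ``because all $A_i$ point roughly the same way once antiparallel pairs are excluded.'' That is not so: three segments whose directions are pairwise at angle $120^\circ$ are pairwise non-antiparallel and pairwise free of patterns (a) and (b), yet the three open half-circles of admissible directions have empty intersection --- such a triple is exactly the $3$-star of Fig.~\ref{fig:allConfig}~(c). So the existence of a common direction is precisely where the exclusion of pattern (c) must be used, and it needs an argument, not genericity. The paper does this by fixing one segment $S$, intersecting each open half-circle $a(A)$ of good directions with $a(S)$ to get subintervals $a'(A)$, checking by case distinction that any two or three pattern-free segments are quasi-parallel (so the $a'(A)$ pairwise intersect), and then invoking one-dimensional Helly to get a direction satisfying (i) and (ii), with (iii) then following from the absence of the patterns. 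Your alternative inductive route (peel off the $\triangleleft$-maximum and ``perturb'' the reference line) defers exactly the same difficulty: when $A_n$ is added the admissible set of directions may have to move, and showing it remains nonempty is the same Helly-type statement you have not supplied.

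A second gap: your announced cycle is $(1)\Rightarrow(3)\Rightarrow(4)\Rightarrow(2)\Rightarrow(1)$, but you never argue any implication \emph{out of} condition (4); as written, quasi-parallelism is only shown to be a consequence of the other conditions, not equivalent to them. The paper closes the loop with $4\Rightarrow1$: condition (iii) forces any two supporting lines to be parallel or to cross on outer rays of the same color, so there is no chromatic cut by Lemma~\ref{lem:chCCut}, and the extreme points of $F$ in the direction of $\ell$ are hull vertices of different colors, so $M$ is linear. A minor further point: your ``direct'' argument for ruling out the circular case in $(2)\Rightarrow(1)$ (``pick the segment whose \ci-end is a hull vertex'') cannot even start, since in the circular case all hull vertices have one color; the quick correct argument is that a $\triangleleft$-minimum $A$ would have every other segment strictly right of $g(A)$, putting both endpoints of $A$ on $\partial\CH(F)$ and hence both colors on the hull. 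The parts you do carry out --- $1\Rightarrow2$ via Lemmas~\ref{lem:min_max} and~\ref{lem:half_triangle}, and $2\Leftrightarrow3$ via totality plus absence of $3$-cycles --- essentially coincide with the paper's argument.
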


\begin{proof}
``$1 \Rightarrow 2$'' By definition, the relation $\triangleleft$ is asymmetric, 
and according to Lemma~\ref{lemma-total-order}, it is total.

It remains to prove transitivity. 
As a linear matching,  $M$ has a minimum
$A_1$ and a maximum $A_n$ with respect to \ord \  by Lemma~\ref{lem:min_max}. We define inductively
$A_2,\dots , A_{n-1}$ as follows. Assume $A_1,\dots ,A_{i-1}$ are already defined. Remove $A_1,\dots , A_{i-1}$ from $M$. Then the new matching is again of linear type. It has still no chromatic cut
and still both colors on the boundary of the convex hull because $A_n$ belongs to it. Denote the new minimum element by $A_i$ and repeat.

Note that $A_j$ lies to the right of $g(A_i)$, $\forall i<j$, by construction.
Thus $A_i \ord A_j$ , $\forall i<j$, by Lemma~\ref{lem:half_triangle}.
This implies transitivity of \ord .

\smallskip

``$2 \Rightarrow 3$'' It is easy to check that none of the
configurations in Fig.~\ref{fig:allConfig} is ordered linearly
by~\ord.

\smallskip

``$3 \Rightarrow 4$''
In this proof, we follow the idea from~\cite{RoteDis1988,RoteJan1992}.
As a preparation, one can establish by case distinction that any two or three segments
which contain none of the patterns from Fig.~\ref{fig:allConfig} are quasi-parallel.

Now, let $M$ be a BR-matching without the forbidden patterns in Fig.~\ref{fig:allConfig}.
For each $A\in M$, let $a(A)$ be the arc on the circle of directions corresponding to positive directions of lines $m$
such that the angle between $A$ and $m$ is acute.
(These are the lines that can play the role of a reference line $\ell$ in the definition of quasi-parallel matching, with respect to $A$.)
Each $a(A)$ is an open half-circle, see Fig.~\ref{fig:halfcircle}.
\begin{figure}[h]
			\begin{center}
			\includegraphics[width = 35mm]{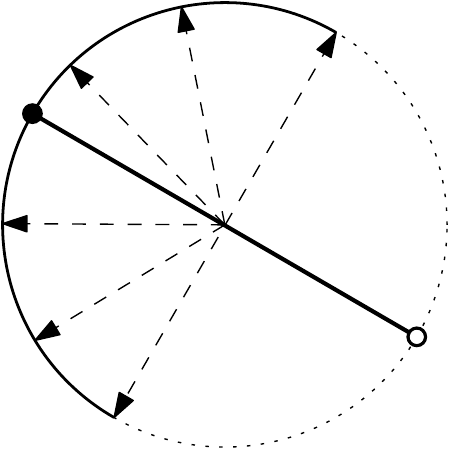}
				\caption{The open
arc $a(A)$ for a matching segment $A$, used in the proof of Lemma~\ref{lem:typeL}, $3 \Rightarrow 4$.}
				\label{fig:halfcircle}
			\end{center}
\end{figure}

Fix a segment $S$ of $M$.
For any segments $A\in M$.
 $\{S, A\}$ is a quasi-parallel matching,
and hence the intersection of the corresponding arcs
$a(S)\cap a(A)$ is a non-empty sub-interval of $a(S)$,
which we denote by $a'(A)$.
Now, for any two segments $A, B \in M$,
$\{S, A, B\}$ is a quasi-parallel matching,
and hence the intersection of the corresponding arcs is non-empty.
In other words, $a'(A)\cap a'(B)\ne \emptyset$.
We can apply Helly's Theorem to the intervals $a'(A)$ (considering
them as one-dimensional subintervals of $a(S)$) and conclude that
there exists a direction in the intersection of the arcs corresponding
to all segments of $M$.
A line $\ell$ in this direction will fulfill conditions~(i) and~(ii) of the
definition of quasi-parallel matching.
Finally, the absence of forbidden patterns implies that
 condition~(iii) 
 is satisfied as well.
%


\smallskip

``$4 \Rightarrow 1$''
 Condition~(iii) in the definition of quasi-parallel matchings implies that for any $A, B \in M$, $A \neq B$,
the lines $g(A)$ and $g(B)$ are either parallel, or the outer rays of the same color cross.
It follows from Lemma~\ref{lem:chCCut} that there is no chromatic cut.

The highest point and the lowest point of $F$,
with respect to $\ell$, belong to the boundary of the convex hull and have different colors.
Therefore, $M$ is of linear type.
\end{proof}

\begin{remark} The characterization of quasi-parallel matchings by
forbidden patterns was found earlier by
Rote in~\cite{RoteDis1988}.
We have fewer forbidden patterns,
because we deal with colored segments.
(In the journal version~\cite{RoteJan1992} of this result,
the list of patterns was incomplete: the pattern corresponding to
Fig.~\ref{fig:allConfig}~(c) had been overlooked.)
\end{remark}

Lemma~\ref{lem:typeL} proves the equivalence of conditions
\ref{type_l},
\ref{order},
\ref{forbidden},
and \ref{qp} in Theorem~\ref{thm:main}.
Condition~\ref{order} justifies the term ``matching of linear type''.
Now we prove that they imply the uniqueness of $M$.

\begin{theorem}\label{thm:type2unique}
	Let $M$ be a linear matching on the point set $F$.
	Then $M$ is the only matching of~$F$.
\end{theorem}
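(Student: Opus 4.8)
The plan is to induct on $n$, at each step identifying one segment of $M$ that is forced to lie in \emph{every} BR-matching of $F$, and then passing to the smaller instance. By Lemma~\ref{lem:typeL} I may freely switch between the equivalent descriptions of a linear matching; I would use that $\triangleleft$ is a strict linear order on $M$, writing $M=\{A_1,\dots,A_n\}$ with $A_1\triangleleft\cdots\triangleleft A_n$. The case $n=1$ is trivial.

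For $n\ge 2$, look at the $\triangleleft$-minimum segment $A_1=\circ_1\bullet_1$. By Lemma~\ref{lem:min_max} together with Lemma~\ref{lem:otherPS_v2}, $A_1$ is an edge of $\CH(F)$ joining two consecutive hull vertices of opposite color, and every point of $F$ other than $\circ_1,\bullet_1$ — hence every segment of $M$ other than $A_1$ — lies strictly to the right of the directed line $g(A_1)$. Since $M$ has no chromatic cut, Lemma~\ref{lem:half_triangle} upgrades ``$A_i$ lies right of $g(A_1)$'' to $A_1\triangleleft A_i$, which gives the extra fact $(\star)$: for every $i\ge 2$, both $\circ_1$ and $\bullet_1$ lie strictly to the left of the directed line $g(A_i)$.

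The crux is the claim that $A_1\in M'$ for every BR-matching $M'$ of $F$. Suppose not: then in $M'$ the point $\circ_1$ is matched to some $\bullet^{*}=\bullet_j\neq\bullet_1$ and $\bullet_1$ to some $\circ^{*}=\circ_m\neq\circ_1$, with $\bullet_j,\circ_m$ strictly right of $g(A_1)$. Choosing coordinates so that $g(A_1)$ is the $x$-axis directed from $\circ_1$ to $\bullet_1$ and all remaining points lie in the open lower halfplane, I would show that the two $M'$-segments $\circ_1\bullet_j$ and $\bullet_1\circ_m$, whatever their mutual position, cannot coexist crossing-free with the rest of $M'$, using $(\star)$ for $A_j$ and $A_m$ (which forces $A_1$ strictly left of $g(A_j)$ and of $g(A_m)$). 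Equivalently: the alternating cycle of the symmetric difference $M\triangle M'$ that contains $A_1$ cannot be drawn with its $M$-edges pairwise non-crossing and its $M'$-edges pairwise non-crossing — its $M$-edges inherit the linear order $\triangleleft$, and the supporting line of the $\triangleleft$-largest among them confines every other vertex of the cycle to one open side while the two $M'$-edges meeting that largest edge must leave the line, which combined with $(\star)$ forces a crossing among the $M'$-edges. I expect this step — turning ``no chromatic cut of $M$'' into a genuine crossing of $M'$ — to be the main obstacle of the proof; the cleanest bookkeeping is probably via Lemma~\ref{lem:chCCut}, checking that any placement of $\bullet_j,\circ_m$ keeping $M'$ crossing-free would already create a chromatic cut of $M$ (an outer ray of some $A_i$ meeting another segment of $M$, or two supporting lines meeting on equally-colored rays), contradicting the hypothesis.

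Granting the claim, $M'\setminus\{A_1\}$ is a BR-matching of $F':=F\setminus\{\circ_1,\bullet_1\}$, while $M\setminus\{A_1\}$ is a linear matching of $F'$: it is a submatching of $M$, and the restriction of $\triangleleft$ to it is again a strict linear order, so Lemma~\ref{lem:typeL} applies. By the induction hypothesis $M'\setminus\{A_1\}=M\setminus\{A_1\}$, and therefore $M'=M$, which is what we wanted.
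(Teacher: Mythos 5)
Your induction scaffolding is sound: \emph{if} the $\triangleleft$-minimum segment $A_1$ (a bichromatic hull edge, by Lemmas~\ref{lem:min_max} and~\ref{lem:otherPS_v2}) belonged to every BR-matching of $F$, then passing to $F\setminus\{\circ_1,\bullet_1\}$ works, since $M\setminus\{A_1\}$ is again linear (heredity of condition~\ref{order}) and induction finishes the proof. But that key claim is exactly where the whole difficulty of the theorem sits, and you do not prove it --- you say yourself it is ``the main obstacle.'' Neither of your two sketches closes it. The assertion that the supporting line of the $\triangleleft$-largest cycle edge ``forces a crossing among the $M'$-edges'' is unsubstantiated: confining all cycle vertices to one side of that line and noting that the two incident $M'$-edges leave it does not produce two crossing $M'$-edges; at best it forces the closed alternating cycle to cross \emph{itself}, and such a self-crossing may be between an $M$-edge and an $M'$-edge, which is perfectly legal since $M$ and $M'$ need not be compatible. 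Your fallback via Lemma~\ref{lem:chCCut} cannot work as described either: the points are fixed and $M$ has no chromatic cut by hypothesis, so there is no ``placement of $\bullet_j,\circ_m$'' to rule out, and no local analysis of $A_1,A_j,A_m$ (or of any bounded number of segments) can yield the needed contradiction --- the contradiction must be a crossing \emph{within} $M'$, and that requires a global argument about the entire alternating cycle.

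This is precisely the subtlety the paper's proof is built to handle. It first shows (using the minimum and maximum $M$-segments on the cycle together with the vertical reference line of the quasi-parallel structure) that any alternating cycle of $M\,\triangle\,M'$ must intersect itself; then, because that intersection may be an allowed $M$-versus-$M'$ crossing, it takes the \emph{first} self-intersection point $r$ along the cycle and forms the truncated quasi-parallel matching $N=\{rq_i,p_{i+1}q_{i+1},\dots,p_jq_j\}$, whose alternating cycle does not self-intersect --- contradicting the claim applied to $N$. (The alternative proof via the Fishnet Lemma plays the same role.) To rescue your induction you would need an argument of comparable global nature for the single cycle through $A_1$; as written, the step from ``$A_1\notin M'$'' to a contradiction is a genuine gap, not a routine verification.
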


\begin{proof}
By Lemma~\ref{lem:typeL}, the matching $M$ is quasi-parallel,
%
with reference line~$\ell$.
We assume without loss of generality that $\ell$ is vertical.

Assume for contradiction that another matching $M'$ exists.
(In the figures below, the segments of $M$ are denoted by solid lines, and the segments of $M'$ by dashed lines.)
The symmetric difference of $M$ and $M'$ is the union of alternating
cycles.
We now claim that \emph{an alternating cycle must intersect itself}.

Consider the alternating cycle $\Pi = p_1 q_1 p_2 q_2 p_3 q_3 \dots
p_1$ that consists of segments of $p_iq_i\in M$ and $q_ip_{i+1}\in
M'$.
We assume that $p_i$ are \ci-vertices and $q_i$ are \x-vertices.
%
%
%
Let $B$ be the minimum (with respect to $\triangleleft$) segment and let $C$ be the maximum segment of $M$ that belongs to $\Pi$.
Then no points of $\Pi$ lie left of $g(B)$ or right of $g(C)$.
Since for both $B$ and $C$ the \x-end is higher than the \ci-end, the
path $\Pi$ must cross itself at least once, establishing the claim,
see Fig.~\ref{fig:path}.

\begin{figure}[h]
$$\includegraphics{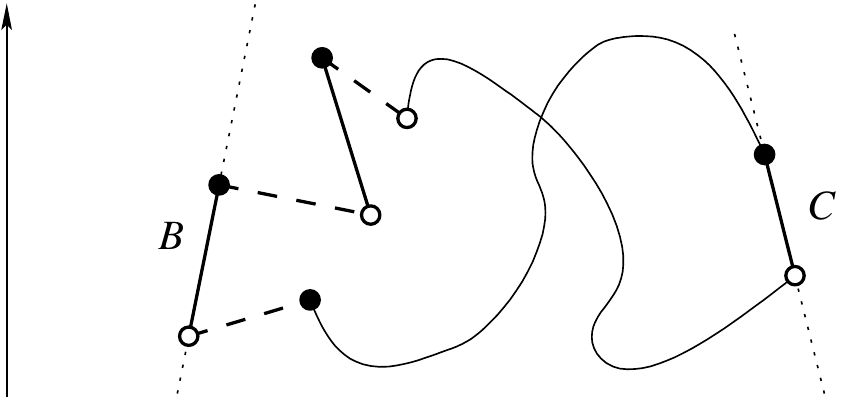}$$
\caption{Illustration for the proof of Theorem~\ref{thm:type2unique}: an alternating path for $M$ crosses itself.}
\label{fig:path}
\end{figure}

We now traverse the path $\Pi$, starting from $p_1 q_1 p_2 q_2\dots$, until it crosses itself {for the first time}, say, in a point $r$.
There can be no crossing $r$ between two segments of $M$ or two segments of
$M'$.
Hence,
the first occurrence of $r$ on $\Pi$ is on a segment $p_i q_i$ of $M$,
and the second is on a segment $q_jp_{j+1}$ of $M'$,
or vice versa. We consider only the first case, the other being similar.
In this case, we consider the matching $N$ that consists of segments
$r q_i, p_{i+1} q_{i+1}, p_{i+2} q_{i+2}, \dots p_{j}q_{j}$
(that is, $N$ consists of the segments of $M$ that occur on $\Pi$ between the two times that it visits $r$,
and the part of segment of $M$ that contains $r$).
It is clear that $N$ is also quasi-parallel.

\begin{figure}[h]
$$\includegraphics{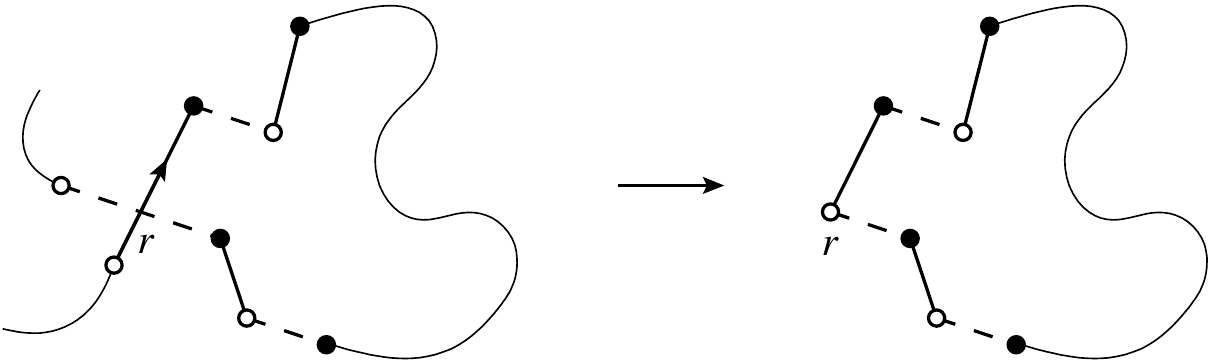}$$
\caption{Illustration to the proof of Theorem~\ref{thm:type2unique}: an alternating path for $N$.}
\label{fig:path2}
\end{figure}

The closed path $r q_i p_{i+1} q_{i+1} p_{i+2} q_{i+2} \dots p_{j} q_{j} r$ is an alternating path for $N$.
By the choice of~$r$, this path does not intersect itself, see
Fig.~\ref{fig:path2},
which contradicts the claim proved above that an alternating path of a quasi-parallel matching always intersects itself.
\end{proof}

The proof we have just given established the uniqueness of $M$
directly. A weaker version of Theorem~\ref{thm:type2unique} was known
before~\cite[Lemma 2]{RoteDis1988,RoteJan1992}: for a linear matching $M$, there is no other \emph{compatible} matching $M'\ne M$.
This 
implies Theorem \ref{thm:type2unique} by the fact that the compatible
matching graph is connected \cite{abls-bcm-13}.

\medskip

The proof of Theorem~\ref{thm:type2unique} tells us that a closed alternating path cannot exist.
In contrast, it is always possible to construct two \textit{open} alternating paths from the minimum to the maximum element of $M$:

\begin{lemma}\label{lem:altpath}
	Let $M$ be a linear matching. 
	Then there exist two alternating paths containing all segments of $M$, appearing according to the order \ord .
\end{lemma}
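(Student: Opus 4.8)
The plan is to use the linear order $A_1 \triangleleft A_2 \triangleleft \dots \triangleleft A_n$ on the segments of $M$, guaranteed by Lemma~\ref{lem:typeL}, together with the reference line $\ell$ (say vertical) from the quasi-parallel structure. Write $A_i = \circ_i\bullet_i$ with $\circ_i$ the white end and $\bullet_i$ the black end. Since $M$ is quasi-parallel, the projection of each $A_i$ onto $\ell$ points in the direction of $\ell$, so each black end is ``higher'' than the corresponding white end; moreover by condition~(iii) the projections of consecutive segments interleave in a controlled way along $\ell$. The idea is to build the two alternating paths by zig-zagging through the segments $A_1, A_2, \dots, A_n$ in order, connecting $\bullet_i$ to $\circ_{i+1}$ along one path and $\circ_i$ to $\bullet_{i+1}$ along the other — i.e. the two paths are
\[
\circ_1 \,\bullet_1 \,\circ_2\, \bullet_2 \,\circ_3 \dots \qquad\text{and}\qquad \bullet_1\, \circ_1 \,\bullet_2 \,\circ_2 \,\bullet_3 \dots
\]
reading off the segments of $M$ in the order $\triangleleft$ and inserting the connector edges of $M'$-type between them. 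Each of these uses every segment of $M$ exactly once, and the segments appear in the order $\triangleleft$, which is exactly what the statement asks for.

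The main work is to show these two paths are genuinely \emph{non-crossing} (so that they are legitimate alternating paths, not self-intersecting curves). Here is where the quasi-parallel reference line does the job: project everything onto $\ell$. First I would argue that for $i < j$, since $A_i \triangleleft A_j$, the segment $A_j$ lies strictly to the right of $g(A_i)$ and $A_i$ lies strictly to the left of $g(A_j)$; combined with condition~(iii) this forces the projections onto $\ell$ to be ``nested or disjoint in the right way,'' and in fact one can show the connector edge from $A_i$ to $A_{i+1}$ can be drawn monotonically with respect to $\ell$ staying in the slab between the two supporting lines. The key geometric claim to isolate and prove is: \emph{the union of the two paths, which is precisely $M \cup M'$ where $M'$ is the matching of connector edges, is non-crossing} — equivalently, $M'$ is a matching compatible with $M$. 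This is the crux, and it is essentially the content of the statement quoted in the excerpt just before the lemma (the known weaker result that a linear matching has no \emph{compatible} partner beyond... wait — more precisely, one produces two specific compatible alternating paths). I would prove it by the same rotating/monotonicity argument as in Lemma~\ref{lem:chCCut} and the proof of Theorem~\ref{thm:type2unique}: no two edges among the $\circ_i\bullet_i$ and the connectors can cross, because a crossing would produce a chromatic cut or violate condition~(iii).

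Concretely, the steps in order: (1) invoke Lemma~\ref{lem:typeL} to get the linear order and a vertical reference line $\ell$; (2) relabel $A_1 \triangleleft \dots \triangleleft A_n$ and set up the two candidate vertex sequences above; (3) show each candidate sequence visits every segment of $M$ once, in $\triangleleft$-order, with $M$-edges and connector-edges alternating; (4) prove the non-crossing property of each path, by showing that for $i<j$ the portions of the path near $A_i$ and near $A_j$ are separated by the supporting line $g(A_i)$ (using $A_i \triangleleft A_j$ and Lemma~\ref{lem:half_triangle}), with condition~(iii) of quasi-parallelism handling the behaviour of the connector edges at the ``turning points''; (5) conclude. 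The hard part will be step~(4), specifically controlling the connector edges $\bullet_i \circ_{i+1}$ and $\circ_i \bullet_{i+1}$: one must check that drawing these short edges does not create a crossing with a far-away segment $A_j$, which again reduces to the chromatic-cut/condition-(iii) dichotomy but requires a careful case analysis of where $A_j$ sits relative to the slab bounded by $g(A_i)$ and $g(A_{i+1})$.
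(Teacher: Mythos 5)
Your construction is exactly the paper's: order the segments $A_1\triangleleft A_2\triangleleft\dots\triangleleft A_n$ via Lemma~\ref{lem:typeL} and join consecutive segments by the two possible color-conforming connector edges, giving the two zig-zag paths; so the approach is the same. Two corrections, however. First, your ``key geometric claim'' is stated too strongly: the union of \emph{both} paths need not be non-crossing, since the two connectors between $A_k$ and $A_{k+1}$ (namely $\bullet_k\circ_{k+1}$ and $\circ_k\bullet_{k+1}$) can cross each other --- take two parallel segments side by side. What the lemma needs, and what is true, is that \emph{each} path separately is non-crossing together with $M$ (the connectors of one path form one compatible matching $M'$, those of the other a second one $M''$). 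Second, the step you defer as ``the hard part'' requiring a case analysis with the reference line and condition~(iii) of quasi-parallelism needs none of that machinery; it follows in one line from the sidedness relation itself. Since $A_k\triangleleft A_{k+1}$, both endpoints of the connector lie in the convex region that is the intersection of the closed halfplane right of $g(A_k)$ with the closed halfplane left of $g(A_{k+1})$, hence so does the whole (straight) connector; since $\triangleleft$ is a linear order, every $A_j$ with $j<k$ lies strictly left of $g(A_k)$ and every $A_j$ with $j>k+1$ lies strictly right of $g(A_{k+1})$, so the connector meets no other segment of $M$, and two connectors of the same path can meet the common line $g(A_k)$ only at the two distinct endpoints of $A_k$ where the path enters and leaves it. This slab argument is precisely what the paper's proof compresses into ``there is no other segment of $M$ between $A_k$ and $A_{k+1}$''; with it, your plan closes and coincides with the paper's induction.
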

\begin{proof}
Let $A_1, \dots, A_n$ be the segments of $M$, ordered by \ord.
We proceed by induction.
Let $R_k$ be a path from $A_1$ to $A_k$ in which the segments of $M$ appear according to \ord. 
We obtain $R_{k+1}$ by taking $R_k$ and adding a color-conforming segment from $A_k$ to $A_{k+1}$. 
This is possible because there is no other segment of $M$ between $A_k$ and $A_{k+1}$. 
The color of the starting point can be chosen and thus we have two such paths, see Fig.~\ref{fig:open_alt_path}.
\end{proof}

\begin{figure}[h]
\centering
\includegraphics{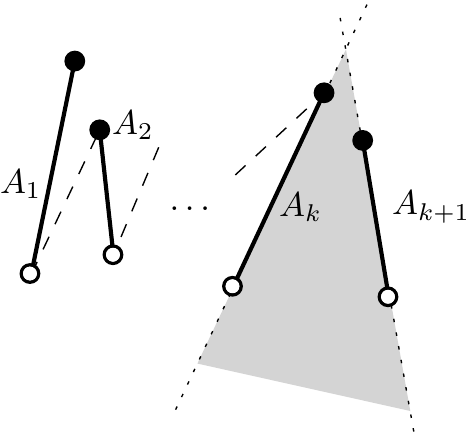}
\caption{Illustration to the proof of Lemma~\ref{lem:altpath}.}
\label{fig:open_alt_path}
\end{figure}

\paragraph{Proof of Theorem \ref{thm:type2unique} by the Fishnet Lemma.}

We will give another proof, which
captures the intuition that one gets when drawing an
alternating path and trying to close it.
Indeed,
when one starts to construct an alternating path
as in the first proof,
one quickly gets the feeling of being stuck:
even though it is permitted that segments of
$M$ and $M'$ may cross, one cannot close the path because one is
forced in one direction.
This feeling can be made precise with the following \emph{Fishnet Lemma}.
We will apply it only to polygonal curves, but we formulate it for
arbitrary curves, see Fig.~\ref{snakes}.

Consider a set $V=\{v_1,\dots,v_m\}$ of pairwise noncrossing unbounded Jordan
curves (``ropes'').
They partition the plane into $m+1$ connected regions. We assume that
they are numbered in such a way that in going from $v_i$ to $v_j$
($j>i+1$), one has to cross $v_{i+1},v_{i+2},\dots,v_{j-1}$.
These curves will be called the
 \emph{vertical} curves.
In the illustrations, they will be black, and
 we think of them as 
 numbered from left to right.

Consider another set $G=\{g_1,\dots,g_n\}$ of pairwise noncrossing Jordan
arcs,
called the \emph{horizontal} arcs and drawn in {green},
such that every curve $g_k$ has its
endpoints on two different vertical curves
 $v_i$ and $v_j$
($j>i$), has exactly one intersection point with each vertical curve
 $v_i,v_{i+1},v_{i+2},\dots,v_j$, and no intersection with the other
 curves.
See Fig.~\ref{snakes}~(a) for an example.
We say that the curves $V\cup G$ form a \emph{partial
(combinatorial)
grid.}

\begin{figure}[ht]
  \centering
  \includegraphics{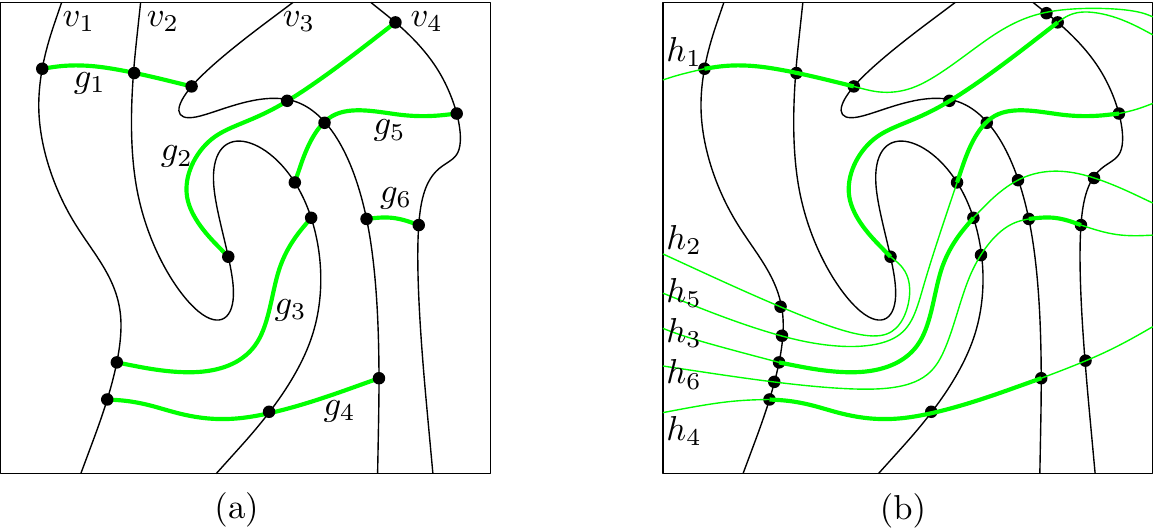}
  \caption{(a) A partial grid. (b) Extension to a full grid of ropes.}
  \label{snakes}
\end{figure}

\begin{lemma}[The Fishnet Lemma]\label{lem:fishnet}
  The horizontal arcs $g_k$ of a
partial
combinatorial
grid $V\cup G$ can be extended to pairwise noncrossing unbounded Jordan arcs $h_k$ in such a way
that the curves
 $H=\{h_1,\dots,h_n\}$ together with $V$ form a full
combinatorial
grid $V\cup H$: each horizontal curve $h_k$ crosses each vertical
curve $v_i$ exactly once.
See Fig.~\ref{snakes}~(b).
\end{lemma}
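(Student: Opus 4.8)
The plan is to prove the Fishnet Lemma by induction on the number $n$ of horizontal arcs, peeling off one horizontal arc at a time — but to make the induction go through, the right thing is to induct on the number of "missing crossings", i.e.\ on $\sum_k (\text{number of vertical curves that } g_k \text{ does not yet meet})$. If this quantity is $0$ we are already done, so assume some $g_k$ has an endpoint, say the right one, on a vertical curve $v_j$ with $j<m$. The idea is to \emph{prolong} $g_k$ a little past $v_j$ so that it now also crosses $v_{j+1}$, without introducing any crossings among the horizontal arcs or with the wrong vertical curves; this decreases the potential by one and we invoke the inductive hypothesis.

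To carry out the prolongation step cleanly, first I would pass to a combinatorial/topological normal form: since all the data is up to homeomorphism, I can assume the vertical curves $v_1,\dots,v_m$ are the vertical lines $x=1,\dots,x=m$, and each horizontal arc $g_k$ is a monotone (strictly increasing in $x$) path from some line $x=i_k$ to some line $x=j_k$. Concretely, I would use that $v_1,\dots,v_m$ cut the plane into $m+1$ vertical strips $R_0,\dots,R_m$; within the closed strip between $v_j$ and $v_{j+1}$, the arc $g_k$ currently enters only at its endpoint $p$ on $v_j$. I want to route a short continuation of $g_k$ from $p$ across this strip to a point on $v_{j+1}$. The continuation must avoid all other horizontal arcs. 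The key observation is that the other horizontal arcs meet $v_j$ in finitely many points and meet $v_{j+1}$ in finitely many points, and — crucially — an arc $g_\ell$ that separates $p$ from some target region on $v_{j+1}$ inside this strip must itself cross both $v_j$ and $v_{j+1}$ (by the combinatorial grid hypothesis, its span of vertical curves is an interval), hence it has a well-defined position "above" or "below" $p$ all the way across the strip. So the points of $v_{j+1}$ are split by these arcs into intervals, and I pick the interval that lies on the same side (locally at $v_j$, just above or just below $p$, consistently) — there is always such an interval because $g_k$ itself, arriving at $p$, occupies a sliver of the strip adjacent to $p$ that no other horizontal arc enters. I then route the continuation of $g_k$ monotonically through that sliver to that interval on $v_{j+1}$; it crosses $v_{j+1}$ exactly once, crosses no other vertical curve, and crosses no other horizontal arc. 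This produces a new partial grid with potential one smaller, and induction finishes the proof. (When $g_k$'s deficient endpoint is the left one, on some $v_i$ with $i>1$, the mirror-image argument prolongs it across to $v_{i-1}$.)

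The main obstacle I anticipate is making the "pick the right interval on $v_{j+1}$" step airtight — i.e.\ arguing that the region of the strip reachable from $p$ without crossing any $g_\ell$ actually touches $v_{j+1}$ in a nondegenerate interval, and that routing through it keeps the new arc disjoint from all old horizontal arcs \emph{and} from the parts of $g_k$ and $g_\ell$ lying in other strips. This is where I'd want a clean planarity/Jordan-curve argument rather than hand-waving: in the closed strip between $v_j$ and $v_{j+1}$, the horizontal arcs that cross it are pairwise disjoint monotone arcs from $v_j$ to $v_{j+1}$, hence they linearly order the components of the strip minus these arcs, and $p\in v_j$ sits in the closure of exactly one such component $U$; that component has a boundary arc on $v_{j+1}$ which is the desired interval. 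The continuation stays inside $U$, so it cannot meet any horizontal arc in this strip; and it stays in this strip, so it cannot meet any horizontal arc's portion lying in a different strip, nor any $v_i$ with $i\neq j,j+1$. I would also remark that the unboundedness of the $h_k$ is automatic: once every horizontal arc meets every vertical curve, its two extreme endpoints lie on $v_1$ and $v_m$, and we prolong each one straight out to infinity into $R_0$ and $R_m$ respectively, which are unbounded and can absorb all $n$ tails disjointly since they inherit the vertical order of the arcs along $v_1$ and $v_m$.
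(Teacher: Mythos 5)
Correct, and essentially the paper's own argument: the paper likewise grows the horizontal arcs one crossing at a time, extending an endpoint across the adjacent quadrilateral face between two consecutive vertical curves (your component $U$ of the strip, justified via Jordan--Schoenflies), and terminates by counting newly created intersection points, which is exactly your induction on the number of missing crossings; both proofs finish by extending the completed arcs to infinity beyond $v_1$ and $v_m$. Two small polish points: when landing on $v_{j+1}$, the new endpoint should be chosen distinct from the finitely many endpoints of other arcs already on $v_{j+1}$ (the paper states this explicitly), and since the interior of $g_k$ lies strictly on the far side of $v_j$, the ``sliver occupied by $g_k$'' remark is unnecessary --- the clean statement is the one in your last paragraph.
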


\begin{proof}
  This is an easy construction, which incrementally grows the
  horizontal segments.

  The bounded faces of the given curve arrangement $V\cup G$ are
  quadrilaterals: they are bounded by two consecutive vertical curves
  and two horizontal curves.

 The bounded faces of the
  desired final curve arrangement  $V\cup H$ are also such
  quadrilaterals,
with the additional property that they have no extra vertices on their
boundary besides the four corner intersections.
In $V\cup G$, such extra vertices arise as the endpoints of the
segments $g_k$.

Let us take such a bounded face, between two vertical curves $v_i$ and
$v_{i+1}$, with an endpoint of $g_k$ on one of its vertical sides, see
Fig.~\ref{snakes-add}~(a)--(b).  We can easily extend $g_k$ to some point
on the opposite vertical side, chosen to be distinct from all other
endpoints, splitting the face into two and creating a new intersection
point.  (The existence of such an extension follows from the
Jordan--Schoenflies Theorem, by which the bounded face is homeomorphic
to a disc.)
\begin{figure}[ht]
  \centering
  \includegraphics{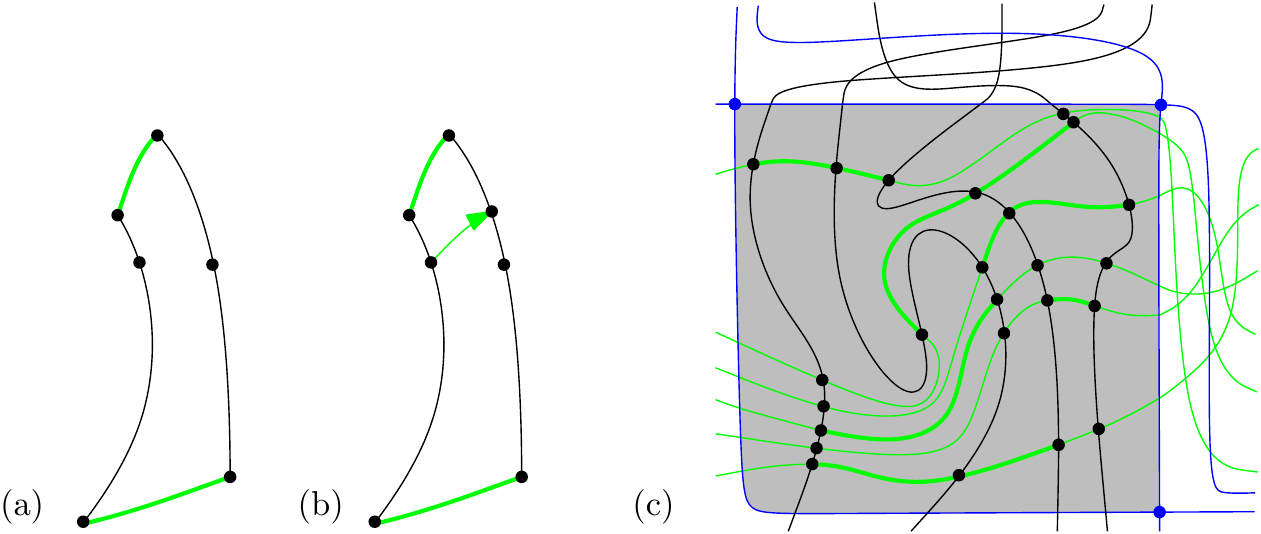}
  \caption{(a) A quadrilateral face with extra vertices;
the shaded face from Fig.~\ref{snakes}~(a).
 (b)~Adding an
    edge.
(c)~Embedding the grid into a pseudoline arrangement.}
  \label{snakes-add}
\end{figure}
An unbounded face between two successive vertical curves $v_i$ and
$v_{i+1}$ that has an extra vertex on a vertical side can be treated
similarly.

We continue the above extension procedure as long as possible.
Since we are adding new intersection points, but no two curves can
intersect twice, this must terminate. Now we are almost done:
each horizontal curve extends from $v_1$ to $v_m$ and
crosses each vertical curve exactly once.
Now we just extend the horizontal curves to infinity,
left of $v_1$, and right of $v_m$, without crossings.
\end{proof}

This lemma can be interpreted in the context of pseudoline
arrangements.
In an  arrangement of pseudolines, each pseudoline is an unbounded
Jordan curve, and every pair of pseudolines has to cross
\emph{exactly} once. The grid construction can be embedded in a true
pseudoline arrangement, see Fig.~\ref{snakes-add}~(c): simply enclose all crossings in a bounded region
formed by three new (blue) pseudolines and let the crossings
between vertical lines and between horizontal lines occur outside this region.

We return to the proof of Theorem~\ref{thm:type2unique}.

%

\begin{figure}[htb]
  \centering
\includegraphics[]{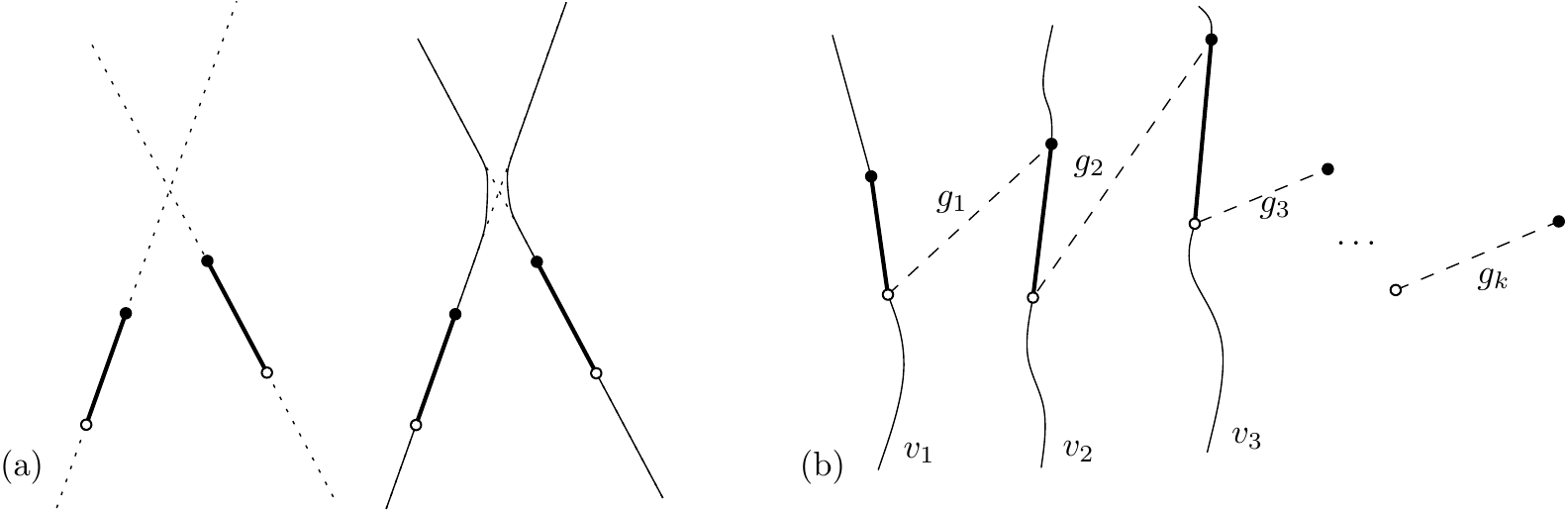}
  \caption{Applying the Fishnet Lemma.}
  \label{fig:ApplyFishnet}
\end{figure}

\begin{proof}
  Given a quasi-parallel matching $M$ we construct a set 
of Jordan curves $V$ as in Lemma~\ref{lem:fishnet}
by considering the
line arrangement formed by the segments $s_1 \ord \dots \ord s_n $ with the corresponding 
lines $g(s_1),\dots,g(s_n)$. We construct curve $v_i$ by going along $g(s_i$). At each intersection, the curves switch from one line to the other, and after a slight deformation in
the vicinity of the intersections, they become non-crossing, see
Fig.~\ref{fig:ApplyFishnet}~(a). These crossings lie outside the parts
of the lines where the segments lie; therefore the switching have no
influence on the left-to-right order of the segments $s_i$;

Now assume there is another matching $M'$;
$M$ and $M'$ form at least one alternating cycle.
Let $G = \{ g_1,\dots,g_k \}$ be the segments of $M'$
on such a cycle in the 
order in which they are traversed.
 $V$ and $G$ satisfy the condition of the Fishnet Lemma
and thus can be extended to a full combinatorial grid.
Assume without loss of generality that $g_1$ is above $g_2$ on the
common incident edge of $M$. Then $g_2$ must also lie above $g_3$, and so on.
It follows that the extended horizontal curves $h_1,\dots ,h_k$ also
must be in this order, and
 $g_k$ would lie below $g_1$.
So they cannot be connected to the same segment in $M$.
$\lightning$
\end{proof}


\section{Circular Matchings}
\label{sec:type_c}

In this section we study circular matchings in more detail.
%
Recall that such a matching is
a BR-matching without a chromatic cut
for which all points on the convex hull have the same color.
We assume without loss of generality that
this color is 
\x.


 We prove that if $M$ is of circular type, then its point set has at least two other matchings.
Moreover, we show that for a circular matching,
 the relation $\triangleleft$ induces a \emph{circular order}
(this will justify the term  ``matching of circular type''),
and describe such matchings in terms of forbidden patterns.






\begin{lemma}\label{lem:forbiddenConfig}
	A BR-matching $M$ is of circular type if and only if it has no patterns (a) and (b) from
	Fig.~\ref{fig:allConfig}, and has at least one pattern (c) (a ``$3$-star'').
\end{lemma}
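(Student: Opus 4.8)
The plan is to prove the two directions separately, using the already-established machinery about chromatic cuts and the sidedness relation $\triangleleft$.

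\textbf{Direction 1: circular $\Rightarrow$ no (a),(b) and at least one (c).}
Suppose $M$ is of circular type, so $M$ has no chromatic cut and all points of $\partial\CH(F)$ are \x. By Lemma~\ref{lemma-total-order}, since $M$ has no chromatic cut, $\triangleleft$ is total; and by the proof of that lemma (or Lemma~\ref{lem:typeL}, $2\Rightarrow 3$), patterns (a) and (b) are exactly the configurations that produce incomparable pairs, so the absence of a chromatic cut forces the absence of patterns (a) and (b). It remains to produce a $3$-star. Here I would argue by contradiction: if $M$ contained no pattern (c) either, then by Lemma~\ref{lem:typeL} ($3\Rightarrow 1$, or equivalently $3\Rightarrow 2$) $M$ would be a linear matching, hence $\partial\CH(F)$ would have two color intervals — contradicting that all hull points are \x. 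Alternatively, and perhaps more in the spirit of this section, one can argue directly: a \ci-point $p$ of $F$ is matched to some \x-point; consider the segment $A_1\in M$ containing $p$, and since $p$ is interior to $\CH(F)$, rays/segments of $M$ must surround it, and a counting or winding argument on the directed lines $g(B)$ through $B\in M$ produces three segments whose supporting lines pairwise cross in outer rays of the same color with the cyclic "star" incidence pattern of (c). The cleanest route is the first one (reduce to Lemma~\ref{lem:typeL}): if no (a),(b),(c) then linear, contradiction; so some (c) is present.

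\textbf{Direction 2: no (a),(b) and at least one (c) $\Rightarrow$ circular.}
Assume $M$ has no pattern (a) or (b) and contains a $3$-star $\{A,B,C\}$. First, absence of (a) and (b) means (by Lemma~\ref{lemma-total-order} and Lemma~\ref{lem:chCCut}, as used in the proof of $4\Rightarrow1$ in Lemma~\ref{lem:typeL}) that no two segments of $M$ have a chromatic cut locally, i.e. every pair is comparable under $\triangleleft$, hence by Lemma~\ref{lemma-total-order} $M$ has no chromatic cut. So by Theorem~\ref{the:two_types} (Lemma~\ref{lem:otherPS_v2}), $M$ is of linear or circular type. To exclude the linear case, invoke Lemma~\ref{lem:typeL}: a linear matching has no pattern of kind (c), contradicting the presence of the $3$-star $\{A,B,C\}$. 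Therefore $M$ is of circular type.

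\textbf{Expected main obstacle.} Both directions lean heavily on Lemma~\ref{lem:typeL}, so the only genuinely new content is checking that the presence of pattern (c) is incompatible with linear type — but that is already covered by the equivalence "linear $\Leftrightarrow$ no patterns (a),(b),(c)" in Lemma~\ref{lem:typeL}, so there is essentially nothing left to prove beyond assembling these implications with Theorem~\ref{the:two_types}. The one place to be careful is the logical bookkeeping: I must make sure the statement "(a) and (b) absent $\Rightarrow$ no chromatic cut" is used correctly (it is the contrapositive of "chromatic cut $\Rightarrow$ supporting lines cross as in (a) or (b)", which is the $\Rightarrow$ half of Lemma~\ref{lemma-total-order}/the order-type reading of Lemma~\ref{lem:chCCut}), and that "$3$-star present" genuinely rules out linear type without accidentally also needing (a),(b) to be present. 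If one prefers a self-contained argument for why a circular matching must exhibit a $3$-star rather than quoting Lemma~\ref{lem:typeL}, the direct winding-number argument around an interior \ci-point is the fallback, but it is strictly longer and not needed.
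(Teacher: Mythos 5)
Your proposal is correct and follows essentially the same route as the paper: it assembles Lemma~\ref{lem:chCCut}/Lemma~\ref{lemma-total-order} (no chromatic cut $\Leftrightarrow$ no patterns (a),(b)), Lemma~\ref{lem:otherPS_v2} (no chromatic cut $\Rightarrow$ linear or circular), and Lemma~\ref{lem:typeL} (linear $\Leftrightarrow$ no (a),(b),(c)), exactly as the paper does, merely split into two explicit directions. The sketched ``winding-number'' fallback is unnecessary, as you note, and the clean reduction you designate as primary is the paper's argument.
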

\begin{proof}
We saw in Lemma~\ref{lem:chCCut} that a BR-matching has no chromatic cut if and only if it avoids the patterns (a) and (b).
By Lemma~\ref{lem:otherPS_v2}, a BR-matching without chromatic cut is either of type L or of type C.
By Lemma~\ref{lem:typeL}, a BR-matching is of type L if and only if it avoids (a), (b) and (c).
Therefore, a BR-matching is of type C if and only if it avoids (a) and (b), but contains (c).
\end{proof}


\begin{theorem}\label{thm:typeCNotUnique}
Let $M$ be a matching of circular type on the point set $F$.
	Then there exist
(at least) two disjoint BR-matchings on $F$, compatible to $M$.
\end{theorem}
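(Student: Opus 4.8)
The plan is to exploit the structure of a circular matching $M$ in which all convex-hull points are $\bullet$-colored. Since no $\circ$-point lies on $\partial\CH(F)$, every $\circ$-point is interior, and I want to use this to rotate the assignment of $\bullet$-partners ``around the circle.'' Concretely, I would order the $\bullet$-points $b_1,\dots,b_n$ by their cyclic position on $\partial\CH(F)$ (or, more robustly, by the circular order induced by $\triangleleft$ on the segments of $M$, which exists by property~p1 of Theorem~\ref{thm:circularMatchings}), with $b_i$ matched to $\circ$-point $w_i$ in $M$. The two candidate matchings are $M'=\{b_i w_{i+1}\}$ and $M''=\{b_i w_{i-1}\}$, indices mod $n$ --- i.e., rotate each $\bullet$-point to the next (resp.\ previous) $\circ$-partner in the cyclic order. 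By construction $M$, $M'$, $M''$ use pairwise disjoint edge sets (no $b_iw_i$ appears in $M'$ or $M''$, and $M'\cap M''=\emptyset$ as long as $n\ge 3$, which holds because a $3$-star pattern~(c) is present), so it only remains to show $M'$ and $M''$ are non-crossing and compatible with $M$.

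The key geometric step is to show that $M\cup M'$ (and symmetrically $M\cup M''$) is non-crossing. Here I would use the circular order: for three consecutive segments $A=b_{i-1}w_{i-1}$, $B=b_iw_i$, $C=b_{i+1}w_{i+1}$ in the circular order $\triangleleft$, the relevant crossing checks are purely local. The new segment $b_iw_{i+1}$ of $M'$ connects the $\bullet$-end of $B$ to the $\circ$-end of $C$; I claim it stays inside the ``wedge'' between consecutive matching segments and hence crosses neither $M$ nor the other segments of $M'$. To make this precise I would invoke Lemma~\ref{lem:forbiddenConfig}: the absence of patterns~(a),(b) means that $B$ and $C$ are comparable by $\triangleleft$, so $C$ lies strictly right of $g(B)$ and $B$ strictly left of $g(C)$; this sidedness forces the ``short'' connector $b_i w_{i+1}$ to lie in the lens-shaped region bounded on one side by (part of) $B$ and on the other by (part of) $C$, and analogous reasoning, walking around the cycle, shows these lens regions are interior-disjoint. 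Thus the segments of $M'$ are mutually non-crossing and each avoids all of $M$, giving compatibility.

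The main obstacle I anticipate is making the ``connector stays in the lens'' claim fully rigorous when $g(B)$ and $g(C)$ are parallel, or when $B$ and $C$ are not cyclically adjacent but happen to be geometrically close --- one must be careful that the circular order from p1 genuinely reflects the cyclic geometric arrangement of the $\bullet$-endpoints on $\partial\CH(F)$, not just an abstract circular order. I would resolve this by first proving a small lemma: for a circular matching, the circular order induced by $\triangleleft$ on $M$ agrees with the cyclic order of the $\bullet$-endpoints around $\partial\CH(F)$ (using that all hull vertices are $\bullet$ and that each segment $b_iw_i$ separates the remaining segments ``to one side''). Once this alignment is established, the rotation $w_i\mapsto w_{i+1}$ is literally a rotation of the assignment along the convex hull, and standard convexity arguments (each $\circ$-point is inside $\CH(F)$, and ``advance by one'' rotations of a non-crossing matching on points in this configuration remain non-crossing) finish it. As a sanity check, the dashed matching in Fig.~\ref{fig:def}(c) is exactly one of these rotations, confirming the construction.

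\begin{proof}
Let the $\bullet$-points be $b_1,\dots,b_n$, indexed by their cyclic order along $\partial\CH(F)$; since $M$ is of circular type, all hull vertices are $\bullet$, so this is well defined, and each $b_i$ is matched in $M$ to an interior $\circ$-point $w_i$. One shows (using Lemma~\ref{lem:forbiddenConfig} and property~p1) that the circular order induced by $\triangleleft$ on the segments $b_iw_i$ coincides with this cyclic indexing. Define $M'=\{\,b_i w_{i+1} : 1\le i\le n\,\}$ and $M''=\{\,b_i w_{i-1} : 1\le i\le n\,\}$, indices taken modulo $n$. Because a $3$-star~(c) occurs, $n\ge 3$, so $M$, $M'$, $M''$ have pairwise disjoint edge sets. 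For adjacent segments $B=b_iw_i$ and $C=b_{i+1}w_{i+1}$ in the circular order, the absence of patterns~(a),(b) (Lemma~\ref{lem:forbiddenConfig}) gives $B\triangleleft C$, i.e.\ $C$ lies strictly right of $g(B)$ and $B$ strictly left of $g(C)$; hence the connector $b_iw_{i+1}$ lies in the lens-shaped region between $B$ and $C$, and walking around the cycle these lenses are interior-disjoint. Therefore the segments of $M'$ are pairwise non-crossing and each avoids all segments of $M$, so $M\cup M'$ is non-crossing, i.e.\ $M'$ is compatible with $M$; the same argument applied to the reverse rotation shows $M\cup M''$ is non-crossing. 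Thus $M'$ and $M''$ are two disjoint BR-matchings on $F$, each compatible with $M$.
\end{proof}
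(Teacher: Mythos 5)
Your construction (rotate each $\bullet$-end to the $\circ$-end of the next segment in the canonical circular order) does produce exactly the two matchings the paper obtains, but the step on which everything hinges is not proved, and the reason you offer for it would fail. First, a minor point: indexing the $\bullet$-points ``by their cyclic order along $\partial\CH(F)$'' is not well defined, since only the hull vertices are forced to be $\bullet$ and there may be interior $\bullet$-points; your fallback to the canonical circular order on segments is the right move, but then the asserted agreement with a hull order is both unproved and unnecessary. The genuine gap is the claim that the ``lens-shaped regions'' between circularly consecutive segments are interior-disjoint as one walks around the cycle, and that this yields non-crossing. The lens of the pair $(A_i,A_{i+1})$ is the intersection of two half-planes, i.e.\ an unbounded wedge with apex at $g(A_i)\cap g(A_{i+1})$; for non-adjacent pairs these wedges can overlap (already a small perturbation of a radial matching can place the apex of one wedge inside another wedge), so interior-disjointness is simply not available. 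Moreover, even granting it, it would not show that the connector avoids the \emph{segments} of $M$, because you never argue that other segments of $M$ stay out of the lens of $(A_i,A_{i+1})$. Both facts are true, but they need the machinery you do not invoke: by Lemma~\ref{lem:circ_succ} the successor $A_{i+1}$ is the minimum of $M_{A_i}^{R}$, and by Lemma~\ref{lem:foursets} this set is linearly ordered, so every other segment lies strictly left of $g(A_i)$ or strictly right of $g(A_{i+1})$, whence the connector misses all of $M$; and for two non-adjacent connectors $c_i,c_j$ one compares $A_i$ with $A_j$ directly (if $A_i\triangleleft A_j$, then $A_i$ and $A_{i+1}=\min M_{A_i}^{R}$ are both strictly left of $g(A_j)$, so $c_i$ lies strictly left of $g(A_j)$ while $c_j$ lies in the closed right half-plane; the other case is symmetric). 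Without such an argument the sentence ``therefore the segments of $M'$ are pairwise non-crossing and each avoids all segments of $M$'' is unsupported, and note that the antipodal pair of $A_i$ (Lemma~\ref{lem:antipodes}) really does straddle $g(A_i)$, so some care is unavoidable.

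For comparison, the paper avoids this global bookkeeping entirely: it picks one $3$-star (Lemma~\ref{lem:forbiddenConfig}), which partitions the plane into three convex regions and an empty triangle; the segments in each region together with the two bounding star segments form a linear matching, so Lemma~\ref{lem:altpath} supplies an alternating path inside each region, and convexity of the regions makes the three paths glue into a non-crossing alternating polygon, giving $M'$ and (reversing the paths) $M''$. So your approach is right in spirit and can be repaired with the sidedness/successor argument sketched above, but as written the key non-crossing step does not go through.
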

\begin{proof}
According to Lemma~\ref{lem:forbiddenConfig} there are segments in the $3$-star configuration as in Fig.~\ref{fig:allConfig}. 
They partition
%
the plane into three convex regions $Q_1$, $Q_2$ and $Q_3$ and a triangle as in Fig.~\ref{fig:regions}~(a).
The triangle is bounded (without loss of generality) by three \ci-rays, 
and it is empty of segments: indeed, any segment of $M$ inside the triangle would emit a \x-ray, 
which would cross a \ci-ray --- a contradiction to the fact that $M$ has no chromatic cut, see Lemma~\ref{lem:chCCut}.

\begin{figure}[h]
\centering
\includegraphics{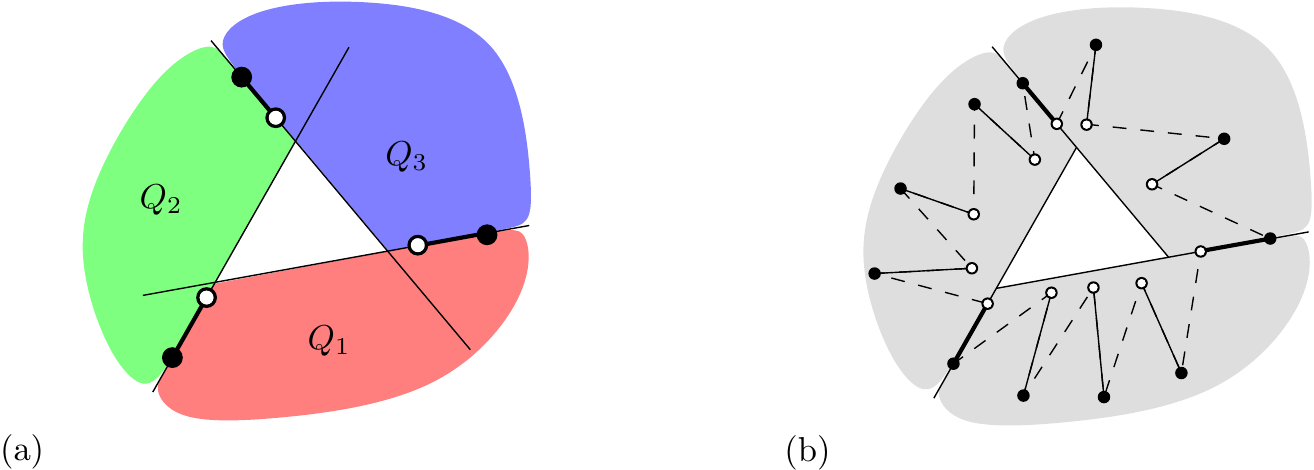}
\caption{Illustration to the proof of Theorem~\ref{thm:typeCNotUnique}.}
\label{fig:regions}
\end{figure}

All segments in a region $Q_i$ together with the two defining segments are of linear type
(indeed, they have no chromatic cut but have both colors on the boundary of the convex hull).
Thus, by Lemma~\ref{fig:open_alt_path}, in each region there is an alternating path from the \x-point of the left bounding segment to the \ci-point of the right bounding segment (or vice versa). 
The union of the three paths forms an alternating polygon and thus we have found a different compatible BR-matching $M'$. 
If we choose the paths in the other direction (\ci-point of the left bounding segment to the \x-point of the right bounding segment), 
we get another BR-matching $M''$.
\end{proof}


Now we study in more detail the relation \ord\ for circular matchings.
In the proof of Theorem~\ref{thm:typeCNotUnique} we saw that a circular matching
is a union of three linear matchings (see Fig.~\ref{fig:regions}~(b)).
In the next Proposition we prove that in fact it is a union of 
\textit{two} linear matchings.

\begin{lemma}\label{lem:foursets}
Let $M$ be a circular matching, and let $B$ be a segment of $M$.
 The matchings
\begin{align*}
M_B^R&=\{X \in M: B \triangleleft X \},\\
M_B^{R+}&=\{X \in M: B \triangleleft X \}\cup\{B\},\\
M_B^L&=\{X \in M: X \triangleleft B \},\\
M_B^{L+}&=\{X \in M: X \triangleleft B \}\cup\{B\}.
\end{align*}
are not empty, and they are of linear type.
\end{lemma}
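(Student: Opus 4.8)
The plan is to prove that each of the four sets $M_B^R$, $M_B^{R+}$, $M_B^L$, $M_B^{L+}$ is a linear matching by exhibiting the convex-hull / chromatic-cut structure required by the definition. Since $M$ is circular, it has no chromatic cut, and any subset of a matching without a chromatic cut still has no chromatic cut (a chromatic cut of a subset would be a chromatic cut of $M$). So the only thing to verify in each case is the convex-hull condition: for linear type we need exactly two color intervals, both of size at least $2$, on $\partial \CH$ of the relevant point set, and we must rule out the circular-type alternative. By Lemma~\ref{lem:otherPS_v2} / Theorem~\ref{the:two_types}, each submatching without a chromatic cut is automatically linear or circular, so it suffices to exclude the circular case, i.e.\ to show the boundary of the convex hull of the submatching's point set is \emph{not} monochromatic.

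First I would handle non-emptiness. Since $M$ is circular, by Lemma~\ref{lem:forbiddenConfig} it contains a $3$-star, hence the relation $\triangleleft$ is total but not a linear order (Theorem~\ref{thm:circularMatchings}); in particular there is no minimum and no maximum element, so for \emph{every} $B$ there is some $X$ with $B \triangleleft X$ and some $Y$ with $Y \triangleleft B$. Thus $M_B^R$ and $M_B^L$ are non-empty, and a fortiori so are $M_B^{R+}$ and $M_B^{L+}$.

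Next, the key geometric step. Consider $M_B^{R+} = \{X : B \triangleleft X\} \cup \{B\}$. I claim its point set has both colors on the boundary of its convex hull. The segment $B$ itself lies in this matching; since every other segment $X$ of $M_B^{R+}$ satisfies $B \triangleleft X$, by Definition~\ref{def:orderM} every such $X$ lies strictly to the right of $g(B)$. Hence all points of $M_B^{R+}$ lie in the closed right half-plane of $g(B)$, with $B$'s two endpoints on the bounding line. Therefore both endpoints of $B$ — one \ci\ and one \x\ — appear on $\partial \CH$ of the point set of $M_B^{R+}$ (any point on the boundary line of a supporting half-plane that is an extreme point, which the two endpoints of a matching segment are, since the segment interior contains no other points of $F$ and $B$ is a diameter-edge of the hull in that direction). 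So the hull is not monochromatic, and by Theorem~\ref{the:two_types} applied to $M_B^{R+}$ (which has no chromatic cut), it is of linear type. Symmetrically, $M_B^{L+}$ has all its points in the left half-plane of $g(B)$, again with both colors of $B$ on the hull, so it too is linear. For $M_B^R = M_B^{R+} \setminus \{B\}$ and $M_B^L = M_B^{L+} \setminus \{B\}$, I would invoke the observation in the excerpt that a submatching of a linear matching is again linear (conditions \ref{order}--\ref{qp} of Theorem~\ref{thm:main} are inherited by subsets), together with non-emptiness established above; alternatively one argues directly that removing the extreme segment $B$ still leaves a matching whose hull carries both colors, using that $M_B^R$ is non-empty and picking its $\triangleleft$-minimum and $\triangleleft$-maximum segments, whose outer half-plane arguments as in Lemma~\ref{lem:min_max} force both colors onto the boundary.

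The main obstacle I anticipate is the care needed in the hull argument for the sets \emph{without} $B$: one must make sure that after deleting $B$, the point set still genuinely has two color intervals of size $\ge 2$ rather than degenerating, and one must confirm that $M_B^R$ is non-empty (handled above via totality-but-not-linearity of $\triangleleft$). The cleanest route is probably to prove the $M_B^{R+}$ and $M_B^{L+}$ cases from scratch via the half-plane-of-$g(B)$ observation, and then derive the $M_B^R$, $M_B^L$ cases as submatchings of these, citing the inheritance property of linear matchings stated right after Theorem~\ref{thm:main}. I would also remark that this lemma sets up the promised decomposition: taking $B$ to be any segment, $M = M_B^{L+} \cup M_B^R$ exhibits $M$ as a union of two linear matchings, which is exactly what the following Proposition needs.
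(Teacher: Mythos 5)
Your main argument is essentially the paper's own proof: $M_B^{R+}$ contains $B$ and is therefore non-empty; every other segment of it lies strictly right of $g(B)$, so both endpoints of $B$ (one \ci, one \x) are vertices of $\partial\CH$ of its point set; a submatching of $M$ inherits the absence of chromatic cuts; hence by Lemma~\ref{lem:otherPS_v2} the submatching is of linear type (symmetrically for $M_B^{L+}$); and $M_B^R$, $M_B^L$ are then linear as submatchings by the remark after Theorem~\ref{thm:main}. All of that is correct and matches the paper.

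The one genuine gap is your non-emptiness argument for $M_B^R$ and $M_B^L$. You infer ``no minimum and no maximum element'' from the fact that $\triangleleft$ is total but not a linear order. That inference is invalid: an asymmetric total relation that merely fails transitivity can still have a minimum or maximum (for instance $A\triangleleft B$, $A\triangleleft C$, $A\triangleleft D$ together with the cycle $B\triangleleft C\triangleleft D\triangleleft B$ has minimum $A$), so ``total but not linear'' by itself does not exclude $M_B^R=\emptyset$ for some $B$. And the statement you would really want here, Corollary~\ref{no-min-max}, is in the paper a corollary \emph{of} this lemma, so citing it would be circular. The repair is one line and uses only what you have already proved, and it is exactly what the paper does: if $M_B^R=\emptyset$, then $M=M_B^{L+}$, which you have shown to be of linear type, contradicting that $M$ is of circular type (the two types are mutually exclusive, since a circular matching has a monochromatic hull and a linear one does not); similarly $M_B^L\neq\emptyset$. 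Your sketched alternative for handling $M_B^R$ directly, by taking its $\triangleleft$-minimum and $\triangleleft$-maximum ``as in Lemma~\ref{lem:min_max}'', does not work as a substitute either, since that lemma presupposes the matching is already known to be linear.
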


\begin{proof}
Consider first the matching $M_B^{R+}$.
Since it contains $B$, it is non-empty.
Since it is a submatching of $M$, it has no chromatic cut.
Both the $\circ$- and the $\bullet$-end of $B$ belong to the boundary of its convex hull;
therefore $M_B^{R+}$ must be of linear type .
Similarly, $M_B^{L+}$ is of linear type.

If $M_B^{R}$ is empty, then $M_B^{L+} = M$,
which is impossible since $M$ is of circular type, and $M_B^{L+}$ of linear type.
Now, since $M_B^{R+}$ is of linear type, and $M_B^{R}$ is a subset of this matching,
$M_B^{R}$ is of linear type as well
(see the remark after Theorem~\ref{thm:main}).
The proof for $M_B^{L}$ is similar.
\end{proof}

\begin{corollary} \label{no-min-max}
  The relation \ord\ in a matching of circular type has no minimum or maximum element:
  \begin{align*}
\forall B\colon \exists A\colon A \ord B\\
\forall B\colon \exists A\colon B \ord A
  \end{align*}
\end{corollary}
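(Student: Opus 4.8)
The plan is to read off Corollary~\ref{no-min-max} directly from Lemma~\ref{lem:foursets}, since that lemma is precisely engineered to produce the witnesses we need. The statement to prove has two symmetric halves; I will do one in detail and remark that the other follows by the mirror argument (swapping the roles of ``left'' and ``right'', i.e. replacing $\triangleleft$ by its converse).

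For the first line, $\forall B\colon \exists A\colon A \triangleleft B$: fix an arbitrary segment $B\in M$. By Lemma~\ref{lem:foursets}, the set $M_B^{L}=\{X\in M: X\triangleleft B\}$ is non-empty. Pick any $A\in M_B^{L}$; then by definition $A\triangleleft B$, which is exactly what we want. For the second line, $\forall B\colon \exists A\colon B \triangleleft A$: again fix $B$, and now invoke Lemma~\ref{lem:foursets} for the set $M_B^{R}=\{X\in M: B\triangleleft X\}$, which is likewise non-empty; any $A$ in it satisfies $B\triangleleft A$. That is the entire argument.

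It is worth pausing to note why this does not collapse to something trivial: the point is that in a \emph{circular} matching $\triangleleft$ is total (Lemma~\ref{lemma-total-order}, since there is no chromatic cut) but \emph{not} a linear order (Lemma~\ref{lem:typeL}, since the boundary is monochromatic), so a priori a maximum could still exist as an artifact of non-transitivity — it is the non-emptiness claim in Lemma~\ref{lem:foursets}, proved there via the observation that otherwise $M_B^{L+}=M$ would force $M$ to be linear, that rules this out. So there is no real obstacle here; the ``hard part'' was already absorbed into the proof of Lemma~\ref{lem:foursets}. The only thing to be careful about is to phrase the two halves as genuinely separate consequences (of $M_B^L\neq\emptyset$ and $M_B^R\neq\emptyset$ respectively) rather than trying to derive one from the other.

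\begin{proof}
We prove the first assertion; the second is symmetric, obtained by interchanging the roles of $M_B^{L}$ and $M_B^{R}$ below (equivalently, by replacing $\triangleleft$ with its converse).

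Let $B\in M$ be arbitrary. By Lemma~\ref{lem:foursets}, the matching $M_B^{L}=\{X\in M: X\triangleleft B\}$ is non-empty. Choose any $A\in M_B^{L}$. Then $A\triangleleft B$ by the definition of $M_B^{L}$. Since $B$ was arbitrary, this shows $\forall B\colon\exists A\colon A\triangleleft B$, so \ord\ has no minimum element.

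Likewise, for an arbitrary $B\in M$, Lemma~\ref{lem:foursets} gives that $M_B^{R}=\{X\in M: B\triangleleft X\}$ is non-empty; any $A\in M_B^{R}$ satisfies $B\triangleleft A$. Hence $\forall B\colon\exists A\colon B\triangleleft A$, so \ord\ has no maximum element.
\end{proof}
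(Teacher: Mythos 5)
Your proof is correct and follows the paper's own argument exactly: both read the statement off from the non-emptiness of $M_B^{L}$ and $M_B^{R}$ established in Lemma~\ref{lem:foursets}. Nothing further is needed.
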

\begin{proof}
  For such an element $B$,
$M_B^{L}$
or $M_B^R$
would be empty.
\end{proof}

\begin{lemma}\label{lem:circ_triple}
Let $M$ be a circular matching.
Let $B$ be any segment of $M$.
Let $A$ be the minimum (with respect to $\triangleleft$) element of $M_B^{L}$,
and let $Z$ be the maximum element of $M_B^{R}$.
Then the triple $\{A, B, Z\}$ is a circular matching (a 3-star).
\end{lemma}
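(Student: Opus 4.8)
The goal is to show $\{A,B,Z\}$ forms a $3$-star (pattern (c) of Fig.~\ref{fig:allConfig}), where $A$ is the minimum of $M_B^L$ and $Z$ is the maximum of $M_B^R$. First I would note that all three pairwise relations are known: since $A\in M_B^L$ we have $A\ord B$; since $Z\in M_B^R$ we have $B\ord Z$; and by Lemma~\ref{lemma-total-order} (totality, as $M$ has no chromatic cut) either $A\ord Z$ or $Z\ord A$. So $\{A,B,Z\}$ is a submatching of $M$ with no chromatic cut, hence by Lemma~\ref{lem:typeL} it is either linear (pattern-free) or circular (a $3$-star). It therefore suffices to rule out the linear case, i.e.\ to show that $\ord$ restricted to $\{A,B,Z\}$ is \emph{not} a strict linear order. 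Equivalently, I must show we do \emph{not} have $A\ord Z$; combined with $A\ord B\ord Z$ this would make it linear, so the only way to be circular is $Z\ord A$, and showing $Z\ord A$ is exactly what pins down the $3$-star.

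The heart of the argument is thus: \emph{show $Z\ord A$}. Assume for contradiction that $A\ord Z$. I would use the extremality of $A$ and $Z$ together with Lemma~\ref{lem:foursets}. Consider the submatching $M_A^{R+}$: it is linear with minimum $A$. Since $A\ord B$, $B\in M_A^{R}\subseteq M_A^{R+}$, and since $A\ord Z$ (our assumption), $Z\in M_A^{R+}$ as well. But $M_A^{R+}$ being linear has a maximum element by Lemma~\ref{lem:min_max}; call it $W$. Now I want to derive that $M$ itself would then have a maximum, contradicting Corollary~\ref{no-min-max}. The key sub-claim is that \emph{everything in $M$ lies in $M_A^{L+}\cup M_A^{R+}$} — which is immediate from totality ($M_A^{L+}\cup M_A^{R+}=M$ since every segment is either $\ord A$, equal to $A$, or $A\ord\cdot$) — and that the maximum of the linear matching $M_A^{L+}$ is $A$ itself (because $A$ is $\ord$-above everything in $M_A^L$), while the maximum of $M_A^{R+}$ is $W$. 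Then $W$ would be $\ord$-maximal in all of $M$: for any $X\in M$, either $X\in M_A^{L+}$ so $X\ord A \ord W$ (using $A\ord W$ which holds since $A$ is the minimum and $W$ the maximum of the linear matching $M_A^{R+}$, with $A\neq W$ because $M_A^{R+}$ contains $B\ne A$), or $X\in M_A^{R+}$ so $X\ord W$ or $X=W$. This contradicts Corollary~\ref{no-min-max}. Hence $A\ord Z$ is impossible, so $Z\ord A$, and $\{A,B,Z\}$ is a $3$-star.

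A subtlety I would need to handle carefully: the step "$A$ is the maximum of $M_A^{L+}$" requires that for every $X\ord A$ we actually have $X\ord A$ in the order sense on the submatching, which is fine since $\ord$ on $M_A^{L+}$ is a linear order (Lemma~\ref{lem:typeL}) extending the restriction of $\ord$; and I should make sure $M_A^{R+}$ genuinely has $B$ as a non-extreme or at least distinct element so that its minimum $A$ and maximum $W$ differ — this holds because $A\ord B$ gives $B\in M_A^{R+}\setminus\{A\}$. The main obstacle is getting the transitivity bookkeeping right, i.e.\ correctly invoking that within each of the two linear pieces $\ord$ is a genuine total order so that "minimum" and "maximum" behave as expected, and then stitching the two pieces together at $A$ to contradict the no-maximum property of circular matchings. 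Once that contradiction is in place, the dichotomy from Lemma~\ref{lem:typeL} finishes the proof: $\{A,B,Z\}$ has no chromatic cut and is not linear, hence it is a $3$-star.
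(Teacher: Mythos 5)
Your reduction is fine: $A\ord B$, $B\ord Z$, totality, and the classification (submatching of $M$ has no chromatic cut, hence is linear or a $3$-star) correctly reduce the lemma to showing $Z\ord A$. The gap is in the contradiction argument. To conclude that $W$ (the maximum of the linear matching $M_A^{R+}$) is a maximum of all of $M$, you argue for $X\in M_A^{L}$ that ``$X\ord A\ord W$, hence $X\ord W$''. This uses transitivity of $\ord$ across the two pieces $M_A^{L+}$ and $M_A^{R+}$, and transitivity is exactly what fails for circular matchings: $\ord$ is only known to be transitive inside a linear submatching, while $X$, $A$, $W$ need not lie in one. Note also that this step never uses your hypothesis $A\ord Z$; if the inference were valid as stated, it would show that for \emph{every} circular matching and every segment $A$ the maximum of $M_A^{R+}$ is a global maximum of $M$, contradicting Corollary~\ref{no-min-max}. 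In fact the lemma itself (applied with $B:=A$) says the opposite: for $X$ the minimum of $M_A^{L}$ and $W$ the maximum of $M_A^{R}$ one has $W\ord X$, so the chain $X\ord A$, $A\ord W$ genuinely does not yield $X\ord W$.

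The paper avoids any global order reasoning by staying inside a constant-size submatching: it takes a witness $D$ with $D\ord A$ (nonempty $M_A^{L}$, by Lemma~\ref{lem:foursets}), observes $D\in M_B^{R}$ so $B\ord D$ and $D\ord Z$ (or $D=Z$), and then applies Lemma~\ref{lem:foursets} to the four-element circular matching $\{A,B,D,Z\}$ to force $Z\ord A$. Your approach could be repaired in a similar local spirit — e.g.\ under the assumption $A\ord Z$, a segment $X\in M_A^{L}$ can be shown to lie in $M_B^{R}$ with $X\ord Z$, so that $\{A,B,X,Z\}$ contains the $3$-cycle $A\ord B\ord X\ord A$ yet has $Z$ as a maximum, contradicting Corollary~\ref{no-min-max} for that four-element circular submatching — but as written your proof does not do this, and the transitivity step as stated is invalid.
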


\begin{proof}

If $M$ is of size $3$, that is, $M = \{A, B, Z\}$, there is nothing to prove.
So, we assume that there is at least one more segment in $M$.
Assume without loss of generality that $M_B^{R}$ contains at least one segment in addition to $Z$.

Let $D$ be a segment of $M$ such that $D \triangleleft A$
(such a segment exists by Proposition~\ref{lem:foursets}).
Since $A$ is the minimum element of $M_B^{L}$,
we have $D \in M_B^{R}$, that is, $B \triangleleft D$.

If $D=Z$ then we have $Z \triangleleft  A \triangleleft B \triangleleft Z $:
that is, the relation $\triangleleft$ in the triple $\{A, B, Z\}$ is not linear;
therefore $\{A, B, Z\}$ is of circular type.

Suppose now that $D \neq Z$, and
consider the matching $\{A, B, D, Z\}$.
We have $D \triangleleft A \triangleleft B  \triangleleft D $.
So, the relation $\triangleleft$ in the matching $\{A, B,  D, Z\}$ is not linear;
therefore, $\{A, B, D, Z\}$ is of circular type.
Now, by Lemma~\ref{lem:foursets}, some segment in $\{A, B, D, Z\}$ must lie to the right of $Z$.
Since $B \triangleleft Z$ and $D \triangleleft Z$, we have $Z \triangleleft A$.
So, we have $Z \triangleleft A \triangleleft  B \triangleleft Z $,
and this means that $\{A, B, Z\}$ is of circular  type.
\end{proof}


 We shall show that if $M$ is a circular matching, then
 there exists a natural \emph{circular order} of its members.
A circular (or cyclic) order is a ternary relation
which models the ``clockwise'' relation among elements arranged on a cycle. 
A standard way of constructing a circular order from 
$j$ linear orders
$A_{11} \leq A_{12} \leq \dots \leq A_{1i_1} $,
$A_{21} \leq A_{22} \leq \dots \leq A_{2i_2} $,
$\dots,$
$A_{j1} \leq A_{j2} \leq \dots \leq A_{ji_j} $
is their ``gluing'':
we say that $[X, Y, Z]$ (and, equivalently, $[Y, Z, X]$ and $[Z, X, Y]$)
if we have $X \leq \dots \leq Y \leq \dots \leq Z$
in the sequence
\[ A_{11} \leq A_{12} \leq \dots \leq A_{1i_1} \leq A_{21} \leq A_{22} \leq \dots \leq A_{2i_2} \leq \dots
 \leq A_{j1} \leq A_{j2} \leq \dots \leq A_{ji_j} \leq A_{11} \]
 In this line $\leq$ relates only to pairs of neighbors; in particular, it is not transitive in this line.

We fix $B \in M$ and apply this procedure on $M_B^{L+}$ and $M_B^{R}$ in which $\triangleleft$ is linear by Lemma~\ref{lem:foursets}.
Let $A_1, A_2, \dots, A_k$ be the segments of $M_B^{L}$ labeled so that $A_1 \triangleleft A_2 \triangleleft \dots \triangleleft A_k $,
and let $C_1, C_2, \dots, C_m$ be the segments of $M_B^{R}$ labeled so that $C_1 \triangleleft C_2 \triangleleft \dots \triangleleft C_m $.
By Lemma~\ref{lem:circ_triple} we have $C_m \triangleleft A_1$.
Thus, we consider the circular order $[*,*,*]$ induced by
\begin{equation}\label{eq:cyclic}
B \triangleleft C_1 \triangleleft C_2 \triangleleft \dots \triangleleft C_m \triangleleft A_1 \triangleleft A_2 \triangleleft \dots \triangleleft A_k \triangleleft B.
\end{equation}
That is, for $X, Y, Z \in M$ we have $[X, Y, Z]$ 
(and, equivalently $[Y, Z, X]$ and $[Z, X, Y]$)
if and only of we have in \eqref{eq:cyclic}
$X \triangleleft \dots \triangleleft Y \triangleleft \dots \triangleleft Z$, or
$Y \triangleleft \dots \triangleleft Z \triangleleft \dots \triangleleft X$, or
$Z \triangleleft \dots \triangleleft X \triangleleft \dots \triangleleft Y$.
Notice that 
we always have either $[X, Y, Z]$ or $[X, Z, Y]$ (but never both).

The circular order $[*,*,*]$ will be referred to as the \textit{canonical circular order on $M$}.
The next results describe the geometric intuition beyond this definition:
we shall see that $[X,Y,Z]$ means in fact that these segments appear in this order clockwise.
Moreover, we shall see that the definition of $[*,*,*]$ does not depend on the choice of $B$.

\begin{lemma}\label{lem:circ_triples}
Let $M$ be a circular matching,
and let $X, Y, Z \in M$.
Then we have $[X, Y, Z]$ if and only if 
  at least two among the following three conditions hold:
$X \triangleleft Y $;
$Y \triangleleft Z $;
$Z \triangleleft X $.
\end{lemma}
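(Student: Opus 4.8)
The plan is to prove Lemma~\ref{lem:circ_triples} directly from the definition of the canonical circular order in~\eqref{eq:cyclic}, by a systematic case analysis. Recall that $[X,Y,Z]$ was defined by fixing a segment $B$ and writing $M$ as the concatenation of the two $\triangleleft$-chains $B \triangleleft C_1 \triangleleft \dots \triangleleft C_m$ and $A_1 \triangleleft \dots \triangleleft A_k \triangleleft B$; we have $[X,Y,Z]$ iff, traversing this cyclic sequence once starting from $X$, we meet $Y$ before $Z$. The key structural fact we may rely on (from Lemma~\ref{lemma-total-order}) is that $\triangleleft$ is a \emph{total} relation on a circular matching, so for any two distinct segments exactly one of $X\triangleleft Y$, $Y\triangleleft X$ holds; hence for a triple $\{X,Y,Z\}$ there are exactly $2^3=8$ possibilities for the three comparisons, of which $6$ are ``acyclic'' (a genuine linear order on the triple) and $2$ are ``cyclic'' ($X\triangleleft Y\triangleleft Z\triangleleft X$ or its reverse). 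The statement to prove is that $[X,Y,Z]$ holds precisely when at least two of $X\triangleleft Y$, $Y\triangleleft Z$, $Z\triangleleft X$ hold, i.e.\ in $4$ of the $8$ cases.

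First I would dispose of the case where $\{X,Y,Z\}$ is comparable by $\triangleleft$ in a linear fashion, i.e.\ one of them precedes the other two and one follows the other two. Relabel so that, say, the $\triangleleft$-order on the triple is $X\triangleleft Y\triangleleft Z$; then exactly two of the three displayed conditions hold ($X\triangleleft Y$ and $Y\triangleleft Z$, but not $Z\triangleleft X$), so the lemma predicts $[X,Y,Z]$. To verify this from~\eqref{eq:cyclic}, observe that since $X\triangleleft Y\triangleleft Z$ within the matching and $\triangleleft$ is the linear order on each of the two sub-chains, the three segments must appear in the cyclic sequence~\eqref{eq:cyclic} in an order consistent with $X$ before $Y$ before $Z$ \emph{within one pass}: this follows because the cyclic word $B C_1\cdots C_m A_1 \cdots A_k$ read from $B$ linearizes $\triangleleft$ up to a single cyclic shift, and a chain of three mutually $\triangleleft$-comparable elements cannot wrap around (if it did, we would get $Z\triangleleft X$, contradicting linearity of the triple). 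Hence $[X,Y,Z]$. Running through the six permutations of which element is minimal/maximal, exactly the four configurations with ``two conditions true'' yield $[X,Y,Z]$ and the other two (the reverse orders) yield $[X,Z,Y]$; this matches the claim.

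Next comes the case where $\{X,Y,Z\}$ forms a $3$-star, i.e.\ the three comparisons cycle: either $X\triangleleft Y\triangleleft Z\triangleleft X$ (all three conditions hold --- more than two) or $X\triangleleft Z\triangleleft Y\triangleleft X$, equivalently $Y\triangleleft X$, $Z\triangleleft Y$, $X\triangleleft Z$ (none of the three displayed conditions hold --- zero, hence not ``at least two''). In the first subcase the lemma predicts $[X,Y,Z]$; in the second it predicts $[X,Z,Y]$. Here I use the geometric content of Lemma~\ref{lem:circ_triple}: a cyclic triple is itself a circular matching, and inside the global cyclic word~\eqref{eq:cyclic} the three elements $X,Y,Z$ must appear in an order that, read cyclically, agrees with their own $\triangleleft$-cycle. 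Concretely, starting the traversal at $X$ and going forward, the first of $\{Y,Z\}$ we encounter is the $\triangleleft$-successor of $X$ among the triple (because, within a single pass of~\eqref{eq:cyclic}, ``appears later'' restricted to the chain-structure respects $\triangleleft$ except possibly for one wrap, and the wrap is exactly what produces the third leg of the cycle $Z\triangleleft X$). So if $X\triangleleft Y$ then $Y$ comes first, giving $[X,Y,Z]$; if instead $Y\triangleleft X$ (the other orientation of the star) then $Z$ comes first, giving $[X,Z,Y]$. Either way the equivalence with ``at least two of the conditions'' holds.

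I expect the main obstacle to be the bookkeeping in relating the abstract ``traverse~\eqref{eq:cyclic} once starting at $X$'' to the $\triangleleft$-comparisons of the triple, in particular handling cleanly the single possible ``wrap'' of the triple around the junction between the $C$-chain and the $A$-chain (and around $B$). The cleanest way to manage this is probably to note that the definition of $[*,*,*]$ is manifestly invariant under cyclic rotation of its arguments, so without loss of generality we may assume $X$ is the element of the triple that appears earliest in the fixed sequence $B\triangleleft C_1\triangleleft\cdots\triangleleft C_m\triangleleft A_1\triangleleft\cdots\triangleleft A_k$; then ``starting at $X$'' is the same as ``reading the fixed sequence left to right'', there is no wrap, and $[X,Y,Z]$ reduces to: $Y$ precedes $Z$ in the fixed left-to-right order. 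From here one only needs that, for the earliest element $X$, we have $X\triangleleft Y$ and $X\triangleleft Z$ (this is where I would invoke that each sub-chain is linearly ordered by $\triangleleft$ together with Lemma~\ref{lem:circ_triple}'s $C_m\triangleleft A_1$), and that among $Y,Z$ the one appearing first is the $\triangleleft$-smaller --- again by linearity on whichever sub-chain(s) contain them, plus $C_m\triangleleft A_1$ for the mixed case. Counting the conditions in each resulting configuration then gives the stated equivalence. Once the ``WLOG $X$ is earliest'' reduction is in place, the remainder is a short finite check rather than a genuine difficulty.
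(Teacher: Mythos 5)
There is a genuine gap. Your argument treats the lemma as a purely combinatorial consequence of the decomposition \eqref{eq:cyclic} --- linearity of $\triangleleft$ on the two sub-chains, $B\triangleleft C_i$, $A_j\triangleleft B$, $C_m\triangleleft A_1$, and totality --- but these facts do not determine $\triangleleft$ on a ``cross'' pair consisting of one segment from the $C$-chain and one from the $A$-chain, and it is exactly these pairs that carry the content of the lemma. Your key assertions --- that the cyclic word read from $B$ ``linearizes $\triangleleft$ up to a single cyclic shift'', that a wrap of the triple would force $Z\triangleleft X$, that for the earliest element $X$ of the triple one has $X\triangleleft Y$ and $X\triangleleft Z$, and that among $Y,Z$ the one appearing first in the fixed sequence is the $\triangleleft$-smaller --- are all false. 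Concretely, take the basic circular matching of size $5$ (Lemma~\ref{lem:basic}) and $B=A_0$, so the canonical sequence is $A_0\triangleleft A_1\triangleleft A_2\triangleleft A_3\triangleleft A_4\triangleleft A_0$ with $C$-chain $A_1\triangleleft A_2$ and $A$-chain $A_3\triangleleft A_4$; then $A_4\triangleleft A_1$ although $A_1$ precedes $A_4$ in the sequence, and for the triple $\{A_0,A_1,A_4\}$ both of your auxiliary claims fail ($A_4\triangleleft A_0$ and $A_4\triangleleft A_1$), even though the lemma itself of course holds for it. Nor can you recover a cross relation by chaining $C_\gamma\triangleleft C_m\triangleleft A_1\triangleleft A_\beta$, since $\triangleleft$ is not transitive on a circular matching; in fact one can write down an abstract asymmetric total relation satisfying every fact you use for which the stated equivalence fails, so some genuinely geometric input is unavoidable. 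The appeal to ``the geometric content of Lemma~\ref{lem:circ_triple}'' in the $3$-star case has the same problem: that lemma concerns one specific triple, not an arbitrary $3$-star, and the statement you extract from it is essentially the claim being proved.

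What is missing is an argument excluding the bad straddling configurations, e.g.\ $X,Y\in M_B^{L+}$ with $X\triangleleft Y$, $Z\in M_B^{R}$, and $X\triangleleft Z\triangleleft Y$: positionally such a triple would satisfy $[X,Y,Z]$ while only one of the three conditions holds, so it must be shown impossible, and chain bookkeeping alone cannot do this. The paper's proof is also a case analysis over how $\{X,Y,Z\}$ distributes over $M_B^{L+}$ and $M_B^{R}$, but at exactly this point it passes to the four-element submatching $\{X,Y,Z,B\}$ and observes that it would have a minimum ($X$) but no maximum, contradicting the fact that every submatching of $M$ is of linear or circular type and hence has either both a minimum and a maximum or neither (Lemma~\ref{lem:foursets}, Corollary~\ref{no-min-max}). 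Without some such structural step, your ``WLOG $X$ is earliest'' finite check does not go through.
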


If all three conditions hold, then $\{X,Y,Z\}$ is a $3$-star;
and if exactly two among the statement hold, then $\{X,Y,Z\}$ is a linear matching.
All possible situations for $[X, Y, Z]$ (with respect to $\triangleleft$)
appear in Fig.~\ref{fig:cyclic}.

\begin{figure}[h]
$$\includegraphics{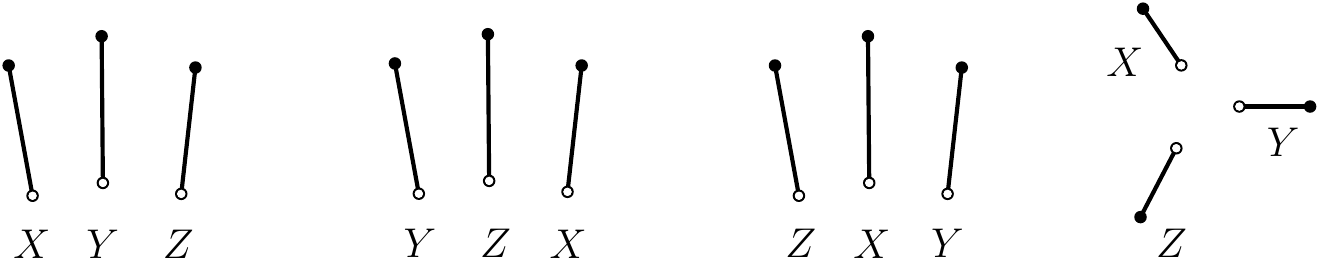}$$
\caption{Possible configurations of three segments that satisfy $[X, Y, Z]$.}
\label{fig:cyclic}
\end{figure}

\begin{proof}
The segment $B$ from the definition of $[*, *, *]$
is the maximum element of $M_B^{L+}$. Therefore, it is 
convenient to denote $A_{k+1}=B$. Now we have four cases.

\begin{itemize}
\item Case $1$: $X, Y, Z \in M_B^{L+}$.

In this case $\{X,Y,Z\}$ is of linear type (by Lemma~\ref{lem:foursets}).
Therefore either one or two of the conditions hold.
If exactly two conditions hold: assume without loss of generality that $X \triangleleft Y \triangleleft Z $.
Since $A_1 \triangleleft \dots \triangleleft A_{k+1}$ is a linear order in $M_B^{L+}$,
we have $X=A_\alpha, Y=A_\beta, Z=A_\gamma$ for some $1 \leq \alpha < \beta <\gamma \leq k+1$.
Now we have $[X,Y,Z]$ by definition.
If exactly one condition holds: assume that it is $X \triangleleft Y $;
then we have $Y \triangleleft Z \triangleleft X $, which implies ``not $[X, Y, Z]$''.

\item Case $2$: two members of $\{X,Y,Z\}$ belong to $M_B^{L+}$, and one to $M_B^{R}$.
Assume without loss of generality that $X, Y \in M_B^{L+}, Z \in M_B^{R}$
and that $X \triangleleft Y$.

Then we have  $X=A_\alpha, Y=A_\beta$ for some $\alpha < \beta$
and $Z = D_\gamma$ for some $\gamma$, and so $[X, Y, Z]$.

At the same time in this case at least two of the conditions hold:
indeed, assume $X \triangleleft Z\triangleleft Y$. 
Then $B$ is distinct from $X, Y, Z$ (in particular, $B \neq Y$ because $B \triangleleft Z$).
Now, in the matching $\{X, Y, Z, B\}$ there is a minimum element, $X$, 
but there is no maximum element. Therefore, $\{X, Y, Z, B\}$ is neither of linear nor of circular type --- a contradiction.
\item Case $3$: one member of $\{X,Y,Z\}$ belongs to $M_B^{L+}$, and two to $M_B^{R}$,
and Case $4$: all the members of $\{X,Y,Z\}$ belong to $M_B^{R}$,
are similar to cases $2$ and $1$. Therefore, we omit their proofs. 
\end{itemize}
\end{proof}

\begin{corollary}\label{cor:circ_indep}
The canonical circular order
doesn't depend on the choice of $B$.
\end{corollary}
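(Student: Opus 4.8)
The plan is to note that Lemma~\ref{lem:circ_triples} has already done essentially all the work. The canonical circular order was \emph{defined} through an auxiliary choice of a segment $B\in M$, by gluing the two linear orders on $M_B^{L+}$ and $M_B^{R}$ into the chain~\eqref{eq:cyclic}. However, Lemma~\ref{lem:circ_triples} gives an \emph{intrinsic} description of the resulting ternary relation: for any circular matching $M$ and any $X,Y,Z\in M$, one has $[X,Y,Z]$ if and only if at least two of the three conditions $X\triangleleft Y$, $Y\triangleleft Z$, $Z\triangleleft X$ hold. The right-hand side of this equivalence refers only to the sidedness relation $\triangleleft$ on $M$, which is a fixed datum of the matching and does not involve $B$ in any way.

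Therefore I would argue as follows. Let $B_1,B_2\in M$ be two segments, and let $[*,*,*]_{B_1}$ and $[*,*,*]_{B_2}$ be the two canonical circular orders obtained by running the gluing construction with basepoint $B_1$ and $B_2$ respectively. Applying Lemma~\ref{lem:circ_triples} once with $B=B_1$ and once with $B=B_2$, we get that for every triple $X,Y,Z\in M$,
\[
[X,Y,Z]_{B_1}\ \Longleftrightarrow\ \big(\text{at least two of } X\triangleleft Y,\ Y\triangleleft Z,\ Z\triangleleft X\big)\ \Longleftrightarrow\ [X,Y,Z]_{B_2}.
\]
Hence $[*,*,*]_{B_1}$ and $[*,*,*]_{B_2}$ are the same ternary relation, which is exactly the assertion of the corollary.

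I do not expect any genuine obstacle here: once Lemma~\ref{lem:circ_triples} is in place, the corollary is an immediate formal consequence. The only point worth a sentence of care is that Lemma~\ref{lem:circ_triples} is itself stated and proved with a fixed basepoint $B$, so one must emphasize that its \emph{conclusion} is basepoint-free even though its proof is not; applying that conclusion for the two different choices $B_1$ and $B_2$ is what yields independence. (Alternatively, one could give a direct combinatorial proof by checking that changing the basepoint merely cyclically rotates the chain~\eqref{eq:cyclic}, and that a cyclic rotation of the defining sequence induces the same glued circular order; but the argument via Lemma~\ref{lem:circ_triples} is shorter and is the one I would present.)
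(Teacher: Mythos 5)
Your proposal is correct and is essentially the paper's own argument: the paper likewise invokes Lemma~\ref{lem:circ_triples} as a basepoint-free characterization of the ternary relation, so two choices of $B$ must yield the same circular order. Your explicit remark that the lemma's conclusion (though proved with a fixed $B$) is intrinsic is exactly the point the paper relies on.
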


\begin{proof}
Indeed, if another choice of $B$ gave another circular order,
there would be a triple that belongs to one of them and doesn't belong to another.
However, in Lemma~\ref{lem:circ_triples} we saw an equivalent definition that only depends on relations between triples of segments.
\end{proof}

\begin{lemma}\label{lem:circ_succ}
Let $M$ be a circular matching,
and let $X \in M$.
Then the immediate successor of $X$ 
the canonical circular order
is the minimum element of $M_X^{R+}$.
\end{lemma}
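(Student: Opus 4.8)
The plan is to exploit Corollary~\ref{cor:circ_indep}: since the canonical circular order does not depend on the choice of the base segment, I may as well compute the successor of $X$ using the construction of the circular order with $B:=X$. With this choice, equation~\eqref{eq:cyclic} becomes
\[
X \triangleleft C_1 \triangleleft C_2 \triangleleft \dots \triangleleft C_m \triangleleft A_1 \triangleleft A_2 \triangleleft \dots \triangleleft A_k \triangleleft X,
\]
where $C_1 \triangleleft \dots \triangleleft C_m$ are the segments of $M_X^{R}$ in $\triangleleft$-order and $A_1 \triangleleft \dots \triangleleft A_k$ are the segments of $M_X^{L}$ in $\triangleleft$-order. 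By the very definition of the circular order as the ``gluing'' of the linear order on $M_X^{L+}$ and the linear order on $M_X^{R}$, the element that immediately follows $X$ in this cyclic sequence is $C_1$ if $M_X^{R}$ is non-empty. But $M_X^{R}$ is non-empty by Corollary~\ref{no-min-max} (equivalently Lemma~\ref{lem:foursets}), so the immediate successor of $X$ is precisely $C_1$, the $\triangleleft$-minimum of $M_X^{R}$. Since $M_X^{R+}=M_X^{R}\cup\{X\}$ and $X\triangleleft C_1$ for every $C_i\in M_X^{R}$, the $\triangleleft$-minimum of $M_X^{R+}$ is $X$ itself, and its successor within $M_X^{R+}$ under the linear order $\triangleleft$ is again $C_1$; this matches the phrasing in the statement (``the minimum element of $M_X^{R+}$'' should be read as the minimum element of $M_X^{R+}\setminus\{X\}$, i.e.\ the successor of $X$ inside $M_X^{R+}$).

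To make this airtight, I would first argue that $C_1$ is a successor of $X$, i.e.\ there is no $W\in M$ with $[X,W,C_1]$ and $W\neq X,C_1$. Using Lemma~\ref{lem:circ_triples}, $[X,W,C_1]$ holds iff at least two of $X\triangleleft W$, $W\triangleleft C_1$, $C_1\triangleleft X$ hold. Since $C_1\in M_X^{R}$ we have $X\triangleleft C_1$, so $C_1\triangleleft X$ fails; hence $[X,W,C_1]$ forces both $X\triangleleft W$ and $W\triangleleft C_1$. From $X\triangleleft W$ we get $W\in M_X^{R}$, and then $W\triangleleft C_1$ contradicts the minimality of $C_1$ in $M_X^{R}$ — unless $W=C_1$. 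This shows $C_1$ is an immediate successor of $X$. Conversely, any immediate successor $W$ of $X$ must satisfy $X\triangleleft W$ (else, by Lemma~\ref{lem:circ_triples} applied with the third element, $[X,C_1,W]$ would separate them — more directly, if $W\triangleleft X$ then $W\in M_X^{L}$, and one checks $[X,C_1,W]$ holds so $C_1$ lies strictly between $X$ and $W$), hence $W\in M_X^{R}$, and then minimality forces $W=C_1$; so the immediate successor is unique and equals $C_1=\min_\triangleleft M_X^{R}$.

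The only mild subtlety — and the step I would be most careful about — is the degenerate case $M_X^{R}=\emptyset$, which is ruled out by Corollary~\ref{no-min-max}, together with getting the bookkeeping of ``minimum of $M_X^{R+}$'' versus ``minimum of $M_X^{R}$'' right: because $X\triangleleft C_i$ for all $i$, the segment $X$ is the global $\triangleleft$-minimum of $M_X^{R+}$, and the statement intends the \emph{next} element, which is $\min_\triangleleft M_X^{R}=C_1$. Everything else is a direct unwinding of the definition of the canonical circular order plus one application of Lemma~\ref{lem:circ_triples}; no new geometric input is needed beyond what Lemma~\ref{lem:foursets} and Lemma~\ref{lem:circ_triple} already supply.
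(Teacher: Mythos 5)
Your proof is correct and takes essentially the same route as the paper's: the paper's own proof simply invokes Corollary~\ref{cor:circ_indep} to recompute the canonical circular order with base segment $B=X$ and reads off the successor as the $\triangleleft$-minimum of $M_X^{R}$, which is exactly your main argument, only stated more tersely. Your additional check via Lemma~\ref{lem:circ_triples} that nothing lies strictly between $X$ and $C_1$, and your reading of ``the minimum element of $M_X^{R+}$'' as the element following $X$, i.e.\ $\min_\triangleleft M_X^{R}$, are sound and match the intended meaning of the statement.
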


\begin{proof}
This is immediate for $B$ (as in definition of $[*,*,*]$), and, since we saw in Corollary~\ref{cor:circ_indep}
that the circular order $[*, *, *]$ does not depend on the choice of $B$, this is true for all segments.
\end{proof}

Lemmas~\ref{lem:circ_triples} and~\ref{lem:circ_succ} show that the canonical circular order 
describe the combinatorial structure of circular matchings in a natural way, 
similarly to that in which $\triangleleft$
describes the structure of linear matchings.
In Subsection~\ref{sec:triangle}
we'll provide a finer classification of relations $\triangleleft$
realizable in circular matchings.

\section{Summary of the Proof of the Characterization Theorem~\ref{thm:main} about Unique BR-Matchings and Theorem~\ref{thm:circularMatchings} about Circular Matchings}
\label{sec:summary}

%

We summarize the proofs of both Theorems.

We start with the equivalence of all five conditions in Theorem~\ref{thm:main}.
Equivalence of conditions~$2, 3, 4, 5$ is proven in Lemma~\ref{lem:typeL}.
Finally, $2 \Rightarrow 1$ (if a BR-matching $M$ is of linear type, then it is unique) is proven in Theorem~\ref{thm:type2unique};
and $1 \Leftrightarrow 2$ (if a BR-matching $M$ is unique, then it is of linear type) follows from Corollary~\ref{cor:bal} (if $M$ is unique, then it has no chromatic cut),
Lemma~\ref{lem:otherPS_v2} (if $M$ has no chromatic cut, then it is either of linear or circular type),
and Theorem~\ref{thm:typeCNotUnique} (if $M$ is of circular type, then it is not unique).

\smallskip

Consider the conditions~$1, 2, 3$ and Properties p1 and p2 in 
Theorem~\ref{thm:circularMatchings}.
All three conditions imply that $M$ has no chromatic cut and exclude that
$M$ is of linear type. Thus in all cases $M$ must be a circular matching.
Property p1 is explained in Section~\ref{sec:type_c} and 
p2 is proved by Theorem~\ref{thm:typeCNotUnique}.
%
%

\section{Miscellaneous}\label{sec:misc}

A \emph{parallel matching} is a BR-matching that consists of parallel segments.
As we saw in Theorem~\ref{thm:main},
quasi-parallel matchings generalize parallel matchings
in the sense that they are exactly the BR-matchings for which the relation $\triangleleft$ is a linear order.
Similarly, circular matchings generalize \emph{radial matchings} --
BR-matchings whose members lie on distinct rays with a common endpoint $O$
and oriented away from $O$.

In this section we study how far quasi-parallel (resp., circular) matchings
generalize parallel (resp., radial) matchings, in two aspects.
In Subsection~\ref{sec:parallel} we consider order types,
and in Subsection~\ref{sec:triangle} we deal with $\triangleleft$ relations realizable in such matchings.

\subsection{Order types in parallel vs.\ quasi-parallel matchings}\label{sec:parallel}

Since, as mentioned above, 
quasi-parallel matchings generalize parallel matchings,
it is natural to ask whether
all order types (determined by orientations of triples of points)
of bichromatic point sets with a unique BR-matching
are realizable by corresponding endpoints of a parallel matching.

We construct an example that shows
that the answer to this question is negative.
The construction is based on the following observation.

\begin{observation}\label{obs:long}
Let $A, B, C$ be three parallel vertical segments such that  $A \triangleleft B \triangleleft C$.
Denote by $a_1, b_1, c_1$ the upper ends, and by $a_2, b_2, c_2$ the lower ends of the corresponding segments.
If the triple $[a_1, b_1, c_2]$
is oriented counterclockwise,
and the triple $[a_2, b_2, c_1]$
clockwise,
then
$B$ is shorter than $A$.
\end{observation}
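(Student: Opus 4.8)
The plan is to reduce the statement to an elementary computation with $2\times 2$ determinants. First I would fix coordinates adapted to the fact that $A,B,C$ are parallel and vertical: put $A$ on the line $x=0$, say from $(0,a)$ up to $(0,a')$ with $a'>a$, so that $a_2=(0,a)$, $a_1=(0,a')$, and the length of $A$ is $\alpha:=a'-a>0$; similarly put $B$ on $x=p$ between $(p,b)$ and $(p,b')$ with length $\beta:=b'-b>0$, and $C$ on $x=q$ between $(q,c)$ and $(q,c')$ with length $\gamma:=c'-c>0$. The hypothesis $A\triangleleft B\triangleleft C$ simply says $0<p<q$: for vertical segments each of the two clauses in Definition~\ref{def:orderM} ($B$ right of $g(A)$ and $A$ left of $g(B)$, etc.) reduces to a comparison of $x$-coordinates.

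Next I would translate the two orientation hypotheses into inequalities, using the standard convention that a triple $[P_1,P_2,P_3]$ is counterclockwise iff $\det\bigl(P_2-P_1,\ P_3-P_1\bigr)>0$. The triple $[a_1,b_1,c_2]=[(0,a'),(p,b'),(q,c)]$ being counterclockwise reads
\[ p\,(c-a') - q\,(b'-a') > 0, \]
and the triple $[a_2,b_2,c_1]=[(0,a),(p,b),(q,c')]$ being clockwise reads
\[ p\,(c'-a) - q\,(b-a) < 0. \]
The key move is to subtract the first inequality from the second. Most of the data of $C$ cancels, and regrouping $c'-c=\gamma$, $a'-a=\alpha$, $b'-b=\beta$ leaves exactly
\[ p\,(\gamma+\alpha) < q\,(\alpha-\beta). \]
Since $p>0$ and $\gamma+\alpha>0$, the left-hand side is strictly positive, hence $q\,(\alpha-\beta)>0$, and as $q>0$ this forces $\alpha>\beta$; that is, $B$ is shorter than $A$, as claimed.

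I do not expect any genuine obstacle here: the computation is routine. The only point worth highlighting is the cancellation step — subtracting the two determinant inequalities removes the dependence on the actual height of $C$ while keeping $\gamma$ with a sign that works in our favour, so that the positivity of $p(\gamma+\alpha)$ alone already yields $\alpha>\beta$. One should just be mildly careful about the orientation convention and about the (immediate) translation of $A\triangleleft B\triangleleft C$ into $0<p<q$ for vertical segments; neither requires any work.
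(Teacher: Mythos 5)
Your proof is correct: the determinant conditions are transcribed accurately, subtracting a positive quantity from a negative one is legitimate, and the cancellation indeed yields $p(\gamma+\alpha)<q(\alpha-\beta)$, which with $p,q>0$ forces $\alpha>\beta$. However, it takes a different route from the paper. The paper argues qualitatively and by contradiction: the two orientation hypotheses say that $c_2$ lies above the line $a_1b_1$ and $c_1$ lies below the line $a_2b_2$, so $C$ must lie in the wedge where the ``line through the upper ends'' runs below the ``line through the lower ends''; if $B$ is not shorter than $A$, the vertical gap between these two lines is non-decreasing to the right, so that wedge lies strictly to the left of $A$, contradicting $A\triangleleft C$. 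Your coordinate computation proves the same fact directly, without contradiction, and makes visible that only $p>0$ and $q>0$ (i.e., $B$ and $C$ to the right of $g(A)$) are used --- the relation $B\triangleleft C$ plays no role in the algebra. What the paper's version buys is the geometric picture (the diverging top and bottom lines) that motivates the construction in Fig.~\ref{fig:example}; what yours buys is a fully explicit, checkable derivation. One small point common to both: you, like the paper, implicitly fix the orientation convention (segments directed upward, so that $A\triangleleft B\triangleleft C$ means left-to-right order $0<p<q$); it is worth saying this is the convention of the construction, since the orientation hypotheses on the triples are not symmetric under reflection.
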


\begin{proof}
The conditions mean that $c_2$ is situated above the line $a_1b_1$, and
$c_1$ below the line $a_2b_2$.
However, if $B$ is not shorter than $A$, then the
wedge that should contain $C$ is situated to the left of $A$.
Thus, $A \triangleleft C$ is impossible.
See Fig.~\ref{fig:long} for illustration.
\end{proof}

\begin{figure}
$$\includegraphics{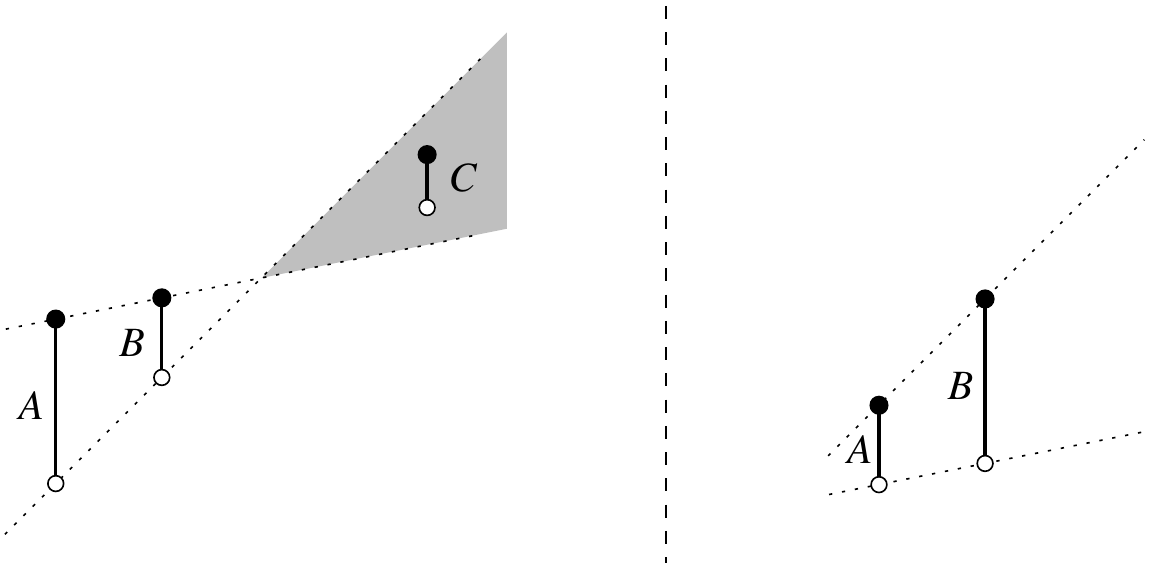}$$
\caption{Illustration to Observation~\ref{obs:long}.}
\label{fig:long}
\end{figure}

Now, the construction goes as follows.
Consider three pairs of parallel (auxiliary) lines with slopes, say, $0^\circ$, $60^\circ$, and $120^\circ$,
and three vertical segments $A_0, B_0, C_0$, as shown in
 Fig.~\ref{fig:example}a. 
Change slightly the slopes of the lines so that
each pair will intersect as indicated schematically in the right part,
and so that the new segments $A, B, C$ whose endpoints are intersection points of the modified lines
are almost vertical.
Add vertical segments in the wedges formed by the auxiliary lines, as
shown in
Fig.~\ref{fig:example}b. 
This can be done so that the new matching (consisting of six segments) is quasi-parallel;
denote it by $M$.
\begin{figure}[htb]
  \centering
\includegraphics
{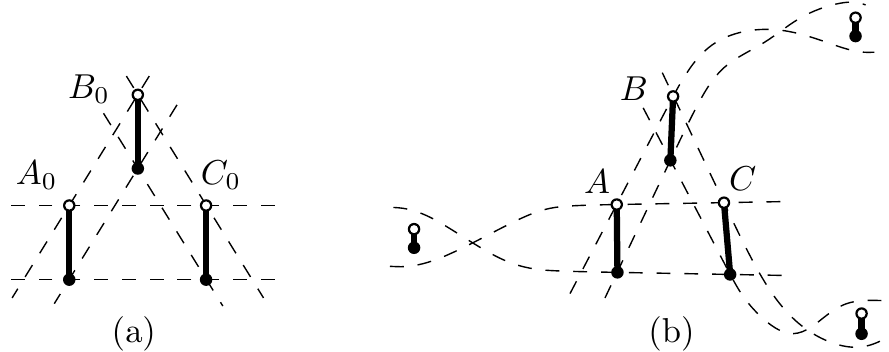}
\caption{The construction of a ``non-parallelizable'' quasi-parallel matching.}
\label{fig:example}
\end{figure}

Now, assume that there exists a parallel matching $M'$ with endpoints of the same order type,
and denote by $A', B', C'$ the segments that correspond in $M'$ to $A, B, C$.
Then, according to Observation~\ref{obs:long},
$A'$ is longer than $B'$,
$B'$ is longer than $C'$, and
$C'$ is longer than $A'$.
This is a contradiction. \hfill\qedsymbol

In the same manner radial matchings, as defined above, do not capture the order type.
To see this let $M$ be a non-parallelizable linear BR-matching and assume there exists a
radial representation $M'$ with the same order type and apex $O$. Project the point $O$ to infinity with a projective mapping such that the matching becomes parallel. As projective mappings conserve the order type we parallelized $M$ -- a contradiction. \hfill \qedsymbol


\subsection{Sidedness relation in circular matchings}\label{sec:triangle}

If $M = \{A_0, A_1, \dots, A_{n-1}\}$ is a linear matching,\footnote{
We denote the segments by $A_0, \dots, A_{n-1}$
rather than by $A_1, \dots, A_{n}$
because modular arithmetic will be used in this section.
}
and we know that $A_0 \triangleleft A_1 \triangleleft \dots \triangleleft A_{n-1}$,
then the relation $\triangleleft$ is completely determined (since it is linear by Lemma~\ref{lem:typeL}).
In contrast, for matchings of $n$ segments of circular type,
there are several relations $\triangleleft$ that satisfy $A_0 \triangleleft A_1 \triangleleft A_2 \triangleleft \dots \triangleleft A_{n-1} \triangleleft A_0$.
In Theorem~\ref{thm:enum_c} we enumerate such relations;
its proof also provides us with a classification of circular matchings in the sense of relations \ord\ realizable in such matchings.

\begin{lemma}\label{lem:antipodes}
Let $M$ be a circular matching,
and assume that its canonical circular order is induced by $A_0 \triangleleft A_1 \triangleleft A_2 \triangleleft \dots \triangleleft A_{n-1} \triangleleft A_0$.
Fix $i \in \{0, 1, \dots, n-1\}$.
Then there exist a unique $j\in \{0, 1, \dots, n-1\}$
such that $A_j$ and $A_{j+1}$ (mod $n$) are separated by $g(A_i)$.
\end{lemma}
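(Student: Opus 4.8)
The plan is to track, as $j$ runs over $\{0,1,\dots,n-1\}$, whether the segment $A_j$ lies to the left or to the right of the line $g(A_i)$, and to argue that exactly one "sign change" occurs in this cyclic sequence. First I would fix $i$ and set aside the trivial degenerate cases (notably $j=i$ and $j=i-1$, where $g(A_i)$ passes through an endpoint of $A_{j}$ or $A_{j+1}$, so "separated by $g(A_i)$" must be interpreted via the sidedness relation rather than literal set-theoretic separation). The clean way to phrase the argument uses the sidedness relation directly: for each $j\neq i$, exactly one of $A_j\triangleleft A_i$ or $A_i\triangleleft A_j$ holds, by Lemma~\ref{lemma-total-order} (the matching is circular, hence has no chromatic cut, hence $\triangleleft$ is total). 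Call $A_j$ a "left segment" if $A_j\triangleleft A_i$ and a "right segment" if $A_i\triangleleft A_j$; these are precisely the segments lying (strictly) left, resp.\ right, of $g(A_i)$ by the definition of $\triangleleft$ together with Lemma~\ref{lem:half_triangle}. The claim then becomes: walking around the cyclic order $A_0\triangleleft A_1\triangleleft\cdots\triangleleft A_{n-1}\triangleleft A_0$, the labels "left" and "right" form exactly two contiguous cyclic blocks, so there is exactly one index $j$ with $A_j$ and $A_{j+1}$ carrying different labels — that $j$ is the desired one, and the other label boundary is at $j=i$ itself (where, by convention, the "separation through an endpoint" is accounted for, or one checks $A_{i-1}$ and $A_{i+1}$ lie on opposite sides).

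The substance is therefore to show that the left segments and the right segments each form a contiguous arc in the canonical circular order. For this I would invoke Lemma~\ref{lem:foursets} and the structure it provides: $M_{A_i}^{R+}$ and $M_{A_i}^{L+}$ are both of linear type, and by Lemma~\ref{lem:circ_triple} the triple consisting of $A_i$, the minimum of $M_{A_i}^{L}$, and the maximum of $M_{A_i}^{R}$ is a $3$-star; more to the point, equation~\eqref{eq:cyclic} (instantiated with $B=A_i$) shows that the circular order is exactly $A_i\triangleleft C_1\triangleleft\cdots\triangleleft C_m\triangleleft A_1'\triangleleft\cdots\triangleleft A_k'\triangleleft A_i$, where the $C$'s are the right segments $M_{A_i}^{R}$ and the $A'$'s are the left segments $M_{A_i}^{L}$. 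Since the canonical circular order is independent of the base segment by Corollary~\ref{cor:circ_indep}, this is (a rotation of) the given cyclic order $A_0\triangleleft\cdots\triangleleft A_{n-1}\triangleleft A_0$. Hence the right segments occupy one contiguous arc (those strictly between $A_i$ and the first left segment) and the left segments occupy the complementary arc — exactly the block structure we wanted.

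Putting it together: in the cyclic sequence $(\ell\text{ or }r)$ of labels of $A_0,\dots,A_{n-1}$ — with $A_i$ itself unlabeled, playing the role of the "seam" — there is exactly one pair of cyclically consecutive indices $(j,j+1)$, both $\neq i$ and $\neq i-1$ in the non-degenerate reading, at which the label flips from $r$ to $\ell$; equivalently, $A_j$ lies right of $g(A_i)$ and $A_{j+1}$ lies left of $g(A_i)$, so $g(A_i)$ separates them. Uniqueness is immediate from the two-block structure (a cyclic word in two letters with each letter forming one block has exactly one $r\ell$-boundary), and existence follows because both blocks are nonempty: $M_{A_i}^{R}$ and $M_{A_i}^{L}$ are nonempty by Lemma~\ref{lem:foursets} (or Corollary~\ref{no-min-max}). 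The main obstacle I anticipate is purely bookkeeping: making precise the sense in which $g(A_i)$ "separates $A_j$ and $A_{j+1}$" when $j$ or $j+1$ equals $i$ (so that $g(A_i)$ meets a shared endpoint), and confirming that the convention chosen there does not create a spurious second separating index. This is handled by noting that $g(A_i)$ does separate $A_{i-1}$ from $A_{i+1}$ in the honest geometric sense — they lie in $M_{A_i}^{L}$ and $M_{A_i}^{R}$ respectively — so the "seam at $A_i$" is not a genuine extra separation between cyclically consecutive members, and the unique genuine separation is the $r\ell$-boundary identified above.
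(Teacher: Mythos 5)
Your proposal is correct and follows essentially the same route as the paper: the paper takes $A_j$ to be the maximum of $M_{A_i}^{R}$, notes that its circular successor lies in $M_{A_i}^{L}$ (this is exactly your two-block structure, coming from the canonical circular order based at $A_i$ together with Corollary~\ref{cor:circ_indep}), and observes that every other consecutive pair lies entirely in $M_{A_i}^{L+}$ or $M_{A_i}^{R+}$ and hence is not separated. That last observation also disposes of your "seam" worry more directly, since $A_i$ lies on $g(A_i)$ and so the pairs $(A_{i-1},A_i)$ and $(A_i,A_{i+1})$ are simply not separated by $g(A_i)$.
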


\begin{proof}
Let $A_j$ be the maximum member of $M_{A_i}^{R}$.
Then we have $A_{j+1} \in M_{A_i}^{L}$,
and, thus, $A_j$ and $A_{j+1}$ are separated by $g(A_i)$.
For any other pair of neighboring segments,
either both belong to $M_{A_i}^{L+}$ or to $M_{A_i}^{R+}$,
and, therefore, are not separated by $g(A_i)$.
\end{proof}

The pair of segments $(A_j, A_{j+1})$ as in Lemma~\ref{lem:antipodes} will be called
the \textit{antipodal pair} of $A_i$.
Notice that by Lemma~\ref{lem:circ_triple} such $A_i$, $A_j$ and $A_{j+1}$ form a $3$-star.
We say that $A, B \in M$ are \textit{twins} if they have the same antipodal pair.
Clearly, being twins is an equivalence relation;
the equivalence classes are the maximal sets of twins (we shall call them \textit{T-sets}).
In Fig.~\ref{fig:radial} we have five T-sets: $\{9,0,1\}$, $\{2\}$, $\{3,4,5\}$, $\{6,7\}$, and $\{8\}$.
It is easy to see that any T-set is a linear matching.

\begin{figure}[htb]
  \centering
\includegraphics[]{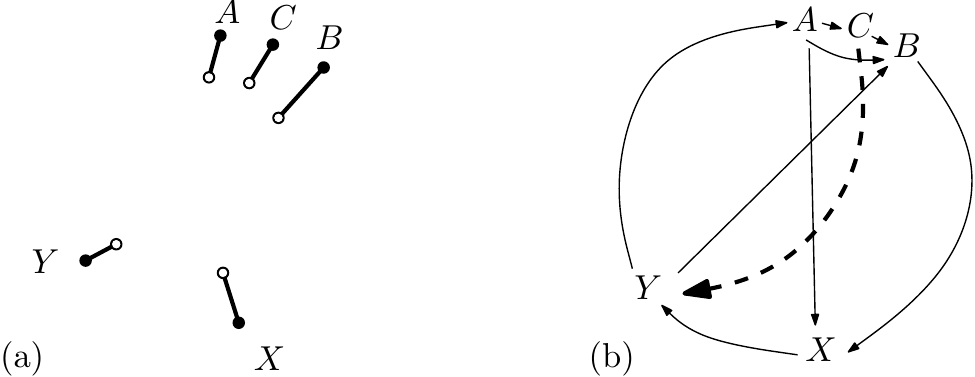}
  \caption{(a) Geometric intuition for Lemma~\ref{lem:twins}. (b) The arrow indicates the sidedness relation \ord. The dashed arrow is impossible.}
  \label{fig:twin}
\end{figure}

\begin{lemma}\label{lem:twins}
Let $M$ be a circular matching,
and let $A, B \in M$ (assumed $A \triangleleft B $) be twins.
Then any $C \in M$ such that $A \triangleleft C \triangleleft B $
is also a twin of $A$ and $B$.

\end{lemma}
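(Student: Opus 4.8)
The plan is to prove the contrapositive-flavored statement directly: assume $A \triangleleft B$ are twins with common antipodal pair $(A_j, A_{j+1})$, and let $C$ satisfy $A \triangleleft C \triangleleft B$; I want to show the antipodal pair of $C$ is also $(A_j, A_{j+1})$. By Lemma~\ref{lem:antipodes}, the antipodal pair of $C$ is the unique consecutive pair (in the canonical circular order) separated by $g(C)$, and equivalently it is determined by: $A_j' $ is the maximum element of $M_C^{R}$ and $A_{j+1}'$ is the minimum element of $M_C^L$. So it suffices to show that $g(C)$ separates $A_j$ from $A_{j+1}$, i.e.\ that $A_j \in M_C^{R}$ (equivalently $C \triangleleft A_j$) while $A_{j+1} \triangleleft C$ lies on the other side — but actually I only need to pin down that the \emph{unique} separated consecutive pair for $C$ is $(A_j,A_{j+1})$, which will follow once I show $A_j \triangleleft C$ fails and $C \triangleleft A_{j+1}$ fails, i.e.\ that in the canonical circular order the arc from $A$ to $B$ (the short way, through $C$) does not contain $A_j$ or $A_{j+1}$.

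The key steps, in order. First I would set up coordinates using Lemma~\ref{lem:foursets}: choose the base segment in the definition of the circular order to be $A$ itself, so that $M_A^{R+}$ and $M_A^{L+}$ are both linear matchings and $A \triangleleft A_1' \triangleleft \dots$ lists $M$ in canonical circular order starting at $A$. Since $A \triangleleft C \triangleleft B$, all three of $A, C, B$ lie in $M_A^{R+}$ (they come before we wrap around), and since $A$ and $B$ are twins, the pair $(A_j,A_{j+1})$ is the one separated by $g(A)$ — by Lemma~\ref{lem:antipodes} applied with base $A$, $A_j$ is the maximum of $M_A^{R}$ and $A_{j+1}$ is the minimum of $M_A^{L}=\{$wrap-around part$\}$. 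So in the linear order on $M_A^{R+}$, both $C$ and $B$ come no later than $A_j$ (indeed $B \triangleleft A_j$ since $B \in M_A^R$ and $A_j = \max M_A^R$; if $B = A_j$ that's fine too, and $C \triangleleft B$ so $C$ is strictly earlier). Then the core claim is: $g(C)$ separates $A_j$ from $A_{j+1}$. I would prove $C \triangleleft A_j$ by combining $C \triangleleft B$ with $B \triangleleft A_j$ (or $B = A_j$) via transitivity of $\triangleleft$ on the linear matching $M_A^{R+}$ (Lemma~\ref{lem:typeL}). And I would prove $A_{j+1} \triangleleft C$ from $A_{j+1} \triangleleft A$ (true since $A_{j+1} \in M_A^L$) together with $A \triangleleft C$, again by transitivity — but here I must be careful, since $A_{j+1}$ and $A$ need not lie in a common linear submatching. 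Instead I would argue: the triple $\{A_{j+1}, A, C\}$; we have $A_{j+1} \triangleleft A$ and $A \triangleleft C$, and we want $A_{j+1} \triangleleft C$; by Lemma~\ref{lem:circ_triples} the alternative $[A_{j+1}, C, A]$ would force at least two of $A_{j+1}\triangleleft C$, $C \triangleleft A$, $A \triangleleft A_{j+1}$, but $C\triangleleft A$ and $A\triangleleft A_{j+1}$ are both false, so $[A_{j+1},A,C]$ holds, which (again via Lemma~\ref{lem:circ_triples}, needing two of $A_{j+1}\triangleleft A$, $A\triangleleft C$, $C\triangleleft A_{j+1}$, and the first two hold) is consistent and in fact forces $C \triangleleft A_{j+1}$ to be false, i.e.\ $A_{j+1}\triangleleft C$. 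Hence $g(C)$ separates $A_j$ from $A_{j+1}$, and by the uniqueness part of Lemma~\ref{lem:antipodes}, $(A_j, A_{j+1})$ is the antipodal pair of $C$, so $C$ is a twin of $A$ and $B$.

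The main obstacle I anticipate is the bookkeeping around transitivity: $\triangleleft$ is only transitive on linear submatchings, not on all of $M$, so every transitivity step must be justified either by exhibiting an enclosing linear matching (via Lemma~\ref{lem:foursets}) or by the triple-characterization of the circular order in Lemma~\ref{lem:circ_triples}. Getting the ``which side'' signs consistent — that $A_j$ ends up in $M_C^R$ and $A_{j+1}$ in $M_C^L$ rather than the reverse — requires fixing orientation conventions once (e.g.\ via the base point $A$) and sticking to them, after which the uniqueness clause of Lemma~\ref{lem:antipodes} does the rest. There is also a small degenerate case, $B = A_j$ or $A = A_{j+1}$ or $M_A^R = \{A_j\}$, which I would note is handled by the same argument with the weak inequalities (``$\triangleleft$ or equal'') collapsing harmlessly.
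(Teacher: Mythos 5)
Your overall frame is the same as the paper's: show that $g(C)$ separates the common antipodal pair $(A_j,A_{j+1})$ of $A$ and $B$, and then invoke the uniqueness part of Lemma~\ref{lem:antipodes}. Your first half is fine: $C,B,A_j\in M_A^{R}$, this set is linear by Lemma~\ref{lem:foursets}, $A_j=\max M_A^R$, so $C\triangleleft B\triangleleft A_j$ gives $C\triangleleft A_j$ by transitivity inside a linear submatching.

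The second half, however, has a genuine gap. To get $A_{j+1}\triangleleft C$ you use only $A_{j+1}\triangleleft A$ and $A\triangleleft C$, and you argue that $[A_{j+1},A,C]$ ``forces $C\triangleleft A_{j+1}$ to be false.'' Lemma~\ref{lem:circ_triples} states that $[X,Y,Z]$ holds iff \emph{at least} two of the three relations hold; it does not exclude all three holding --- that is exactly the $3$-star case. So $[A_{j+1},A,C]$ together with $A_{j+1}\triangleleft A$ and $A\triangleleft C$ is compatible with either $A_{j+1}\triangleleft C$ (linear triple) or $C\triangleleft A_{j+1}$ ($3$-star), and nothing in your argument rules out the latter. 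Indeed the implication you are using is false in general: in the basic circular matching of size $5$ with $A_0\triangleleft A_1\triangleleft\dots\triangleleft A_4\triangleleft A_0$ (Lemma~\ref{lem:basic}), take $A=A_0$, $C=A_1$, and $A_{j+1}=A_3=\min M_{A_0}^L$ (the antipodal partner of $A_0$); then $A_3\triangleleft A_0$ and $A_0\triangleleft A_1$, yet $A_1\triangleleft A_3$. The point is that this step \emph{must} use the twin hypothesis, i.e.\ the segment $B$, which your argument never does on this side. The repair is symmetric to your first step: since $A\triangleleft C\triangleleft B$ gives $C\in M_B^{L}$, and by the twin hypothesis $A_{j+1}=\min M_B^{L}$, both $C$ and $A_{j+1}$ lie in the linear matching $M_B^{L+}$ of Lemma~\ref{lem:foursets}, whence $A_{j+1}\triangleleft C$. (The paper argues instead that if $C\triangleleft A_{j+1}$, then the four-element submatching $\{A,C,B,A_{j+1}\}$ has a maximum, namely $B$, but no minimum, contradicting the fact that every submatching is of linear or circular type.)
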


\begin{proof}
Let $(X, Y)$ be the antipodal pair of $A$ and of $B$.
Then we have $C \triangleleft  B \triangleleft X \triangleleft Y
\triangleleft A \triangleleft C$,
$A \triangleleft  X$,  $Y \triangleleft  B$ and $A\triangleleft  B$
(see Fig.~\ref{fig:twin}.).
If $C \triangleleft Y$, then the matching $\{A, C , B, Y\}$ has a maximum ($B$), but has no minimum --- a contradiction.
Thus, $Y \triangleleft C$, and, similarly, $C \triangleleft X$.
Therefore, $g(C)$ separates $X$ and $Y$.
Now it follows from the uniqueness in Lemma~\ref{lem:antipodes}
that $(X, Y)$ is the antipodal pair for $C$.
\end{proof}

We say that a circular matching is \emph{basic} if it has no twins,
or equivalently, if all T-sets consist of one segment.
We first classify $\triangleleft$ relations of basic matchings.

\begin{lemma}\label{lem:basic}
Let $M$ be a \textbf{basic} circular matching,
with the circular order induced by $A_0 \triangleleft A_1 \triangleleft A_2 \triangleleft \dots \triangleleft A_{n-1} \triangleleft A_0$.
Then, for each $A_i \in M$, we have $\left|M_{A_i}^L\right|=\left|M_{A_i}^R \right|$.
\end{lemma}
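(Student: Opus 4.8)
The plan is to read off the combinatorial structure of $M$ from the canonical circular order and then use \emph{basicness} to force that structure to be symmetric.

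\textbf{Setup I would record first.}
By the construction of the canonical circular order (the chain~\eqref{eq:cyclic}) together with its base-point independence (Corollary~\ref{cor:circ_indep}), for \emph{every} $A_i$ the segments of $M_{A_i}^{R}$ are exactly those that immediately follow $A_i$ in the circular order, listed in $\triangleleft$-order, followed by the segments of $M_{A_i}^{L}$. So, writing $r_i:=|M_{A_i}^{R}|$ and $f(i):=(i+r_i)\bmod n$, we have $M_{A_i}^{R}=\{A_{i+1},\dots,A_{f(i)}\}$ and $M_{A_i}^{L}=\{A_{f(i)+1},\dots,A_{i-1}\}$ (all indices mod $n$), both nonempty by Lemma~\ref{lem:foursets} and Corollary~\ref{no-min-max}. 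By Lemma~\ref{lem:antipodes} the antipodal pair of $A_i$ is $(A_{f(i)},A_{f(i)+1})$, and $\{A_i,A_{f(i)},A_{f(i)+1}\}$ is a $3$-star (as noted after Lemma~\ref{lem:antipodes}, via Lemma~\ref{lem:circ_triple}). Since $M$ is basic, distinct segments have distinct antipodal pairs, so $f$ is injective on $\mathbb{Z}/n$, hence a bijection.

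\textbf{Main step: $r_i\le r_{i+1}$ for all $i$.}
This is the part I expect to be the crux. Fix $i$ and put $k:=f^{-1}(i)$, so $(A_i,A_{i+1})$ is the antipodal pair of $A_k$; thus $\{A_k,A_i,A_{i+1}\}$ is a $3$-star, on which $\triangleleft$ is a total non-linear relation, i.e.\ a $3$-cycle. Since $A_i\triangleleft A_{i+1}$ (consecutive in the circular order), that $3$-cycle must be $A_i\triangleleft A_{i+1}\triangleleft A_k\triangleleft A_i$; in particular $A_k\in M_{A_i}^{L}$ and $A_k\in M_{A_{i+1}}^{R}$. Now I would translate this into offsets mod $n$. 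From $A_k\in M_{A_i}^{L}$, i.e.\ $A_k\notin M_{A_i}^{R+}=\{A_i,\dots,A_{f(i)}\}$, the offset $(k-i)\bmod n$ is at least $r_i+1$. From $A_k\in M_{A_{i+1}}^{R}=\{A_{i+2},\dots,A_{f(i+1)}\}$, the same offset is at most $r_{i+1}+1$ (here $r_i,r_{i+1}\le n-2$, so no wraparound occurs in either range). Hence $r_i+1\le r_{i+1}+1$, i.e.\ $r_i\le r_{i+1}$.

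\textbf{Conclusion.}
Going once around, $r_0\le r_1\le\dots\le r_{n-1}\le r_0$, so all $r_i$ equal a common value $r$. Finally I would count ordered pairs: $\sum_i r_i=\sum_i|M_{A_i}^{R}|=|\{(i,j):A_i\triangleleft A_j\}|$, and since a circular matching has no chromatic cut, $\triangleleft$ is total (Lemma~\ref{lemma-total-order}) and asymmetric, so exactly one of $A_i\triangleleft A_j$, $A_j\triangleleft A_i$ holds per unordered pair; thus this count equals $\binom n2$. Therefore $nr=\binom n2$, i.e.\ $r=\tfrac{n-1}{2}$, and so $|M_{A_i}^{L}|=n-1-r=\tfrac{n-1}{2}=r=|M_{A_i}^{R}|$ for every $i$ (and, incidentally, $n$ is odd).
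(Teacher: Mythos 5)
Your proof is correct, but it reaches the conclusion by a different mechanism than the paper. Both arguments rest on the same two ingredients---the antipodal-pair map and the fact that basicness (no twins) makes it injective---but the paper uses them \emph{locally}: fixing $A_i$, it shows that the antipodal pair of every $B\in M_{A_i}^{R}$ consists of two neighbouring segments lying in $M_{A_i}^{L+}$, so if $\left|M_{A_i}^{L}\right|<\left|M_{A_i}^{R}\right|$ two segments of $M_{A_i}^{R}$ would have to share an antipodal pair and be twins; this pigeonhole (plus the symmetric inequality) gives equality at each $i$ directly. You instead exploit \emph{surjectivity} of $i\mapsto f(i)$ and argue globally: locating $A_{f^{-1}(i)}$ relative to $A_i$ and $A_{i+1}$ yields the cyclic chain $r_i\le r_{i+1}$, forcing all $r_i$ equal, and a double count of ordered pairs under the total asymmetric relation $\triangleleft$ then pins down $r=\tfrac{n-1}{2}$. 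Your route needs the extra counting step, but in exchange it delivers $r_i=\tfrac{n-1}{2}$, hence that $n$ is odd and the explicit description of $M_{A_i}^{R}$ and $M_{A_i}^{L}$, within the same computation---facts the paper records only as a consequence at the end of its proof. Two minor remarks: in your main step the $3$-star detour is not needed, since $(A_i,A_{i+1})$ being the antipodal pair of $A_k$ already means that $A_i$ is the maximum of $M_{A_k}^{R}$ and $A_{i+1}$ the minimum of $M_{A_k}^{L}$, so $A_k\triangleleft A_i$ and $A_{i+1}\triangleleft A_k$ are immediate from the definition; and you tacitly use $A_k\notin\{A_i,A_{i+1}\}$ (true because a segment is never separated by its own supporting line), which deserves a word when you invoke the $3$-cycle.
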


\begin{proof}
Let $B \in M_{A_i}^R$, and let $(X, Y)$ be the antipodal pair of $B$.
We claim that $X, Y \in M_{A_i}^{L+}$.
Indeed, if $X, Y   \in M_{A_i}^{R}$, then 
$\{A_i, X, Y\}$ is a linear matching, contradicting what was observed
after the definition of the antipodal pair;
and if $X\in M_{A_i}^{R}, Y\in M_{A_i}^{L}$ then $(X, Y)$ is the antipodal pair of $A_i$,
which is impossible since in such a case $A_i$ and $B$ are twins.

Assume for contradiction and without loss of generality that $\left|M_{A_i}^L\right|<\left|M_{A_i}^R\right|$.
Then we have more segments in $M_{A_i}^R$
than their potential pairs of neighboring segments.
Therefore, there exist distinct segments $B, C \in M_{A_i}^R$
that have the same antipodal pair,
and thus, they are twins -- a contradiction.

It follows that $n$ is necessarily {odd}, and that
$M_{A_i}^R= \left\{ A_{i+1}, A_{i+2}, \dots, A_{i+\frac{n-1}{2}}  \right\}$ and
$M_{A_i}^L=\left\{A_{i-\frac{n-1}{2}}, \dots, A_{i-2}, A_{i-1}\right\}$
(mod $n$).
In particular, this means that for basic circular matchings, 
the relation $\triangleleft$ is determined uniquely by
$A_0 \triangleleft A_1 \triangleleft A_2 \triangleleft \dots \triangleleft A_{n-1} \triangleleft A_0$.
\end{proof}

Fig.~\ref{fig:basic} shows basic matchings of sizes 3, 5, 7.
\begin{figure}[htb]
  \centering
\includegraphics[width=140mm]{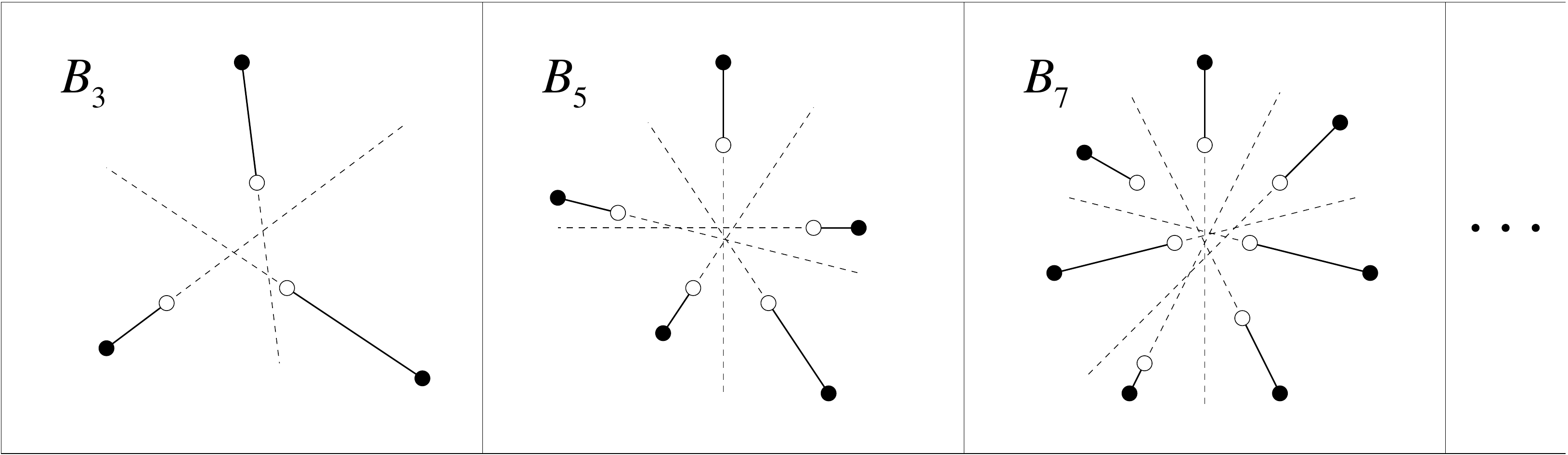}
  \caption{Basic matchings of sizes 3, 5, and 7.}
  \label{fig:basic}
\end{figure}

\begin{theorem}\label{thm:enum_c}
The number of sidedness relations $\triangleleft$
realizable by circular matchings $\{A_0, A_1,\allowbreak \dots,\allowbreak A_{n-1}\}$
that satisfy $A_0 \triangleleft A_1 \triangleleft A_2 \triangleleft \dots \triangleleft A_{n-1} \triangleleft A_0$
is $2^{n-1}-n$.
\end{theorem}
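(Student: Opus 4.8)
The plan is to count sidedness relations $\triangleleft$ on $M=\{A_0,\dots,A_{n-1}\}$ compatible with the fixed circular order $A_0\triangleleft A_1\triangleleft\dots\triangleleft A_{n-1}\triangleleft A_0$ by encoding each such relation as a binary string and then ruling out the ``degenerate'' strings. By Lemma~\ref{lem:antipodes}, each segment $A_i$ has a unique antipodal pair $(A_j,A_{j+1})$, and by Lemma~\ref{lem:twins} the T-sets partition $\{A_0,\dots,A_{n-1}\}$ into consecutive arcs (in the circular order). So the combinatorial data of $M$ is exactly: a partition of the cyclic sequence $A_0,\dots,A_{n-1}$ into $k\ge 3$ consecutive arcs $T_1,\dots,T_k$ (the T-sets, in cyclic order), where each $T_s$ has some size $t_s\ge 1$ with $\sum t_s=n$. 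First I would show that this cyclic composition $(t_1,\dots,t_k)$ with $k\ge 3$ determines $\triangleleft$ completely and conversely, so that counting realizable $\triangleleft$ reduces to counting such compositions (up to the rotation fixing $A_0$, i.e.\ as linear compositions read starting just after a T-set boundary chosen at a canonical place — one must be careful here).

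The key structural step is to nail down, from the T-set data, exactly which pairs satisfy $A_i\triangleleft A_j$. Here I would use the ``contraction'' picture: collapsing each T-set to a point gives a \emph{basic} circular matching on $k$ segments (one per T-set), to which Lemma~\ref{lem:basic} applies, forcing $k$ odd and giving the canonical ``$(k-1)/2$ to the right'' structure for the contracted matching. Then I would lift this back: for $A_i$ in T-set $T_s$, the set $M_{A_i}^R$ consists of (a) the segments of $T_s$ that come after $A_i$ within $T_s$, plus (b) all of the next $(k-1)/2$ T-sets following $T_s$ in cyclic order. (Lemma~\ref{lem:twins} guarantees each T-set lies entirely on one side of $g(A_i)$ unless $A_i$ itself is in it.) This makes $\triangleleft$ an explicit function of $(t_1,\dots,t_k)$ and shows different cyclic compositions give genuinely different relations — the antipodal-pair map recovers the T-set boundaries. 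Conversely I must verify every such cyclic composition with $k\ge 3$ odd is actually \emph{realizable} by a geometric circular matching; I would do this by an explicit construction, e.g.\ take a basic circular matching of size $k$ (which exists by Fig.~\ref{fig:basic} / the odd-size construction) and ``fatten'' each segment into a short linear matching of $t_s$ nearly-coincident segments inside its wedge, checking no forbidden pattern (a) or (b) is created and that a $3$-star survives.

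The counting is then bookkeeping. The relation $\triangleleft$ is determined by where the T-set boundaries fall in the cyclic arrangement of $A_0,\dots,A_{n-1}$ — that is, by a nonempty subset of the $n$ ``gaps'' between cyclically consecutive segments, of size $k\ge 3$ and odd. But actually the relation only depends on the boundary set as a \emph{subset of gaps} (not on a distinguished starting point), since rotating $A_0$ is not allowed: we want relations on the labeled set with the fixed chain $A_0\triangleleft\dots\triangleleft A_{n-1}\triangleleft A_0$. I expect the clean statement to be: $\triangleleft$ corresponds bijectively to a subset $S$ of the $n$ gaps with $|S|\ge 3$; the ``odd'' constraint $k$ odd then must be re-examined — more likely the correct encoding is by a subset of size $\ge 1$ of the $n$ gaps, with the three excluded cases being $|S|=0,1,2$ (these would force $M$ linear or fail to give a circular matching), yielding $2^{n}-1-n-\binom{n}{2}$, which is not $2^{n-1}-n$; so the encoding is subtler and the odd-$k$ constraint does real work. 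I therefore expect the main obstacle to be getting the encoding exactly right — reconciling the ``$k$ odd'' condition from Lemma~\ref{lem:basic} with a count that telescopes to $2^{n-1}-n$. I would resolve this by encoding via the \emph{sizes of the right-sides}: for each $i$, $r_i:=|M_{A_i}^R|$; these satisfy a system of constraints (each $r_i$ between $1$ and $n-2$, consecutive $r_i$'s change by a controlled amount reflecting the T-set structure, and a global parity/summation identity), and I would count integer solutions of that system, which should be the place where $2^{n-1}$ appears (as subsets of an $(n-1)$-element set) minus the $n$ solutions that actually give the linear order / are non-realizable.
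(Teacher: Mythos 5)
Your structural setup is essentially the paper's: decompose $M$ into T-sets, contract (one representative per T-set) to a basic matching so that Lemma~\ref{lem:basic} forces the number $k$ of T-sets to be odd and pins down the relation among the representatives, then recover $\triangleleft$ from the T-set data, with injectivity via the antipodal-pair map of Lemma~\ref{lem:antipodes}; your ``fattening'' construction for realizability parallels the paper's radial-matching remark. The genuine gap is that the counting -- which is the actual content of the theorem -- is never carried out, and the point where you stall is a miscount of your own (correct) encoding. The relation is determined by the set $S$ of gaps between cyclically consecutive segments at which the T-set changes, and the constraint on $S$ is exactly that $|S|=k$ is odd (Lemma~\ref{lem:basic}) and $k\geq 3$ (since $M$ is circular); there is also no quotienting by rotation to worry about, because the segments are labeled and the chain through $A_0$ is fixed. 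Your trial count $2^n-1-n-\binom{n}{2}$ differs from $2^{n-1}-n$ only because you dropped the parity condition; from this discrepancy you wrongly conclude that the gap-encoding is ``subtler'' and retreat to an encoding by the numbers $r_i=|M_{A_i}^R|$, whose ``system of constraints'' is never written down, let alone solved. Keeping the parity condition gives directly $\sum_{k\ \mathrm{odd},\,k\geq 3}\binom{n}{k}=2^{n-1}-n$, since the odd binomial coefficients sum to $2^{n-1}$ and the excluded term $k=1$ contributes $n$.

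This is exactly the paper's computation in a different dress: for fixed odd $k\geq 3$, the paper counts the ways to distribute the $n-k$ non-representative segments into $k+1$ slots (the $k-1$ T-sets not containing $A_0$, plus the portions of $A_0$'s T-set to its left and to its right), i.e.\ $\binom{n}{k}$ multisets, and then sums the generating functions $x^k(1-x)^{-(k+1)}$ over odd $k\geq 3$ to extract $2^{n-1}-n$. So your plan is salvageable with one observation (impose $|S|$ odd in the count you already had), but as written the proposal does not establish the formula; in addition, the lifting claim that every T-set not containing $A_i$ lies entirely on one side of $g(A_i)$, and that the contracted matching is basic, are asserted rather than argued -- acceptable at the paper's own level of detail, but worth a sentence each if you rewrite this properly.
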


\begin{proof}
Consider a circular matching $M$ of size $n$.
Assume that there are $k$ T-sets.
Choose one segment from each T-set as follows:
in the T-set which contains $A_0$, we choose $A_0$;
in all other T-sets we choose the minimum element (with respect to $\triangleleft$).
The chosen segments form a basic matching $B_k$.
Recall that, by Lemma~\ref{lem:basic}, the relation $\triangleleft$ between the members of $B_k$ is uniquely determined.
Now $M$ can be obtained from $B_k$
by recovering the twins of the chosen segments.
The relation $\triangleleft$ for $M$
is then only determined by sizes of T-sets except that of $A_0$;
and for the T-set of $A_0$ it also matters how many segments lie to the left of $A_0$ and how many to the right.
Thus, we have $k+1$ ``regions'' for adding twins,
and this is the combinatorial problem of 
choosing a multiset of size $n-k$ from $k+1$ elements.
For fixed $k$, the corresponding generating function
is
\[x^k(1+x+x^2+x^3+\dots)^{k+1} = x^k \left(\frac{1}{1-x}\right)^{k+1},\]
the summation over all odd $k\ge 3$ gives
\begin{equation}\label{eq:gf}
 \frac{x^3}{(1-x)^{4}} +  \frac{x^5}{(1-x)^{6}} +  \frac{x^7}{(1-x)^{8}} + \dots =
\frac{x}{1-2x} - \frac{x}{(1-x)^2}.
\end{equation}
Since
\[\frac{1}{1-2x}
= \sum_{n\geq 0} 2^n x^n
\ \ \ \ \ \mathrm{and} \ \ \ \ \
\frac{1}{(1-x)^2} = 
\sum_{n\geq 0} (n+1) x^n,\]
the coefficient of $x^n$ in the generating function~\eqref{eq:gf} is $2^{n-1} - n$.
\end{proof}

\begin{figure}[h]
$$\includegraphics[width=120mm]{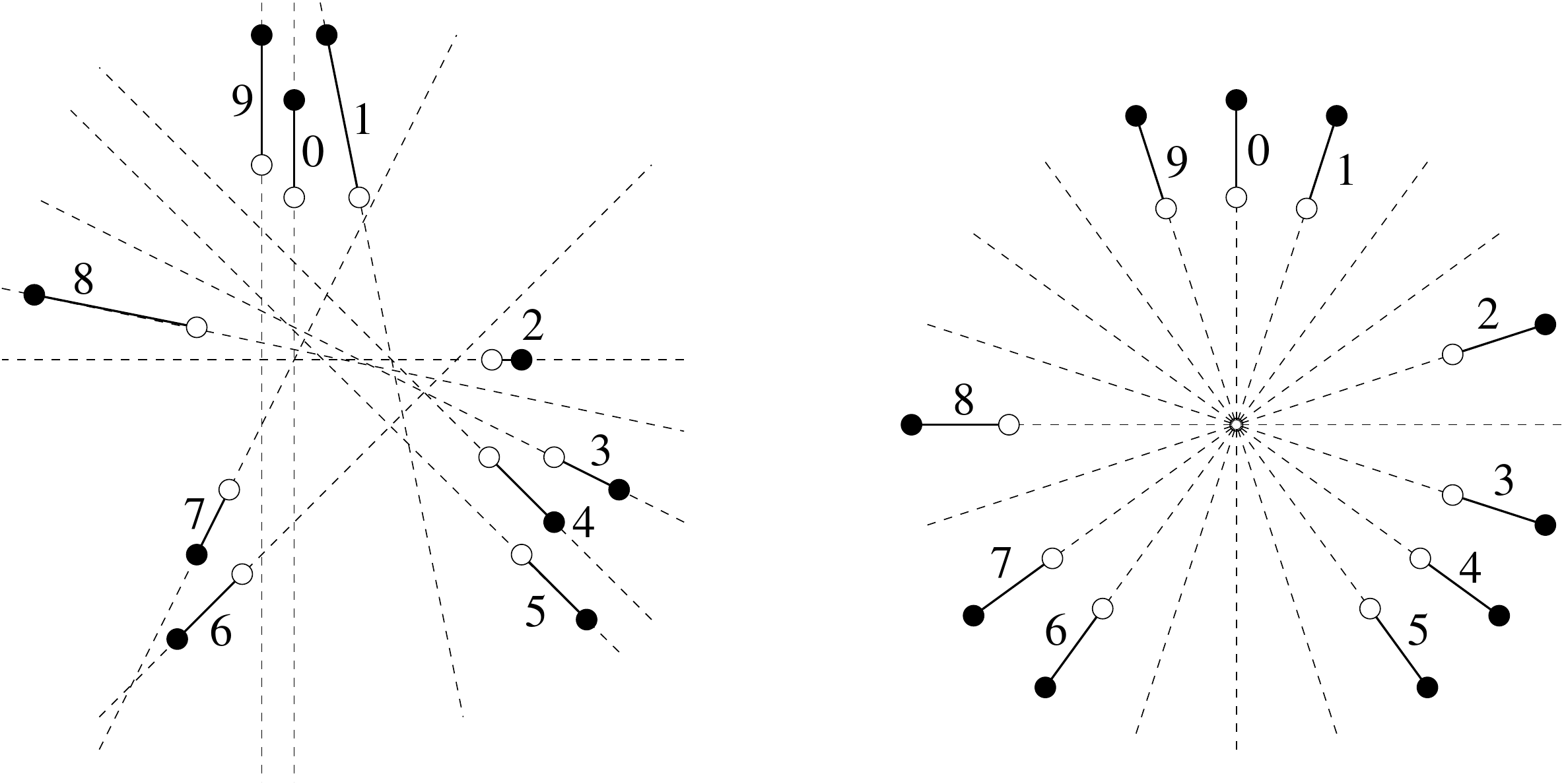}$$
\caption{A circular matching and its ``standardization''.}
\label{fig:radial}
\end{figure}

It is easy to see from the reasoning above that any $\triangleleft$ relation which is realizable
with circular matchings can be realized by a radial matching;
furthermore, it is possible to take the lines such that the angle between each pair of adjacent lines will be $\pi / n$,
and the endpoints the segments lie on two fixed circles with center $O$,
see Fig.~\ref{fig:radial}).
Now the formula $2^{n-1}-n$ becomes especially clear:
consider $n$ lines passing through a common point $O$.
For each line, there are two choices on which ray we put a segment
(except the fixed $A_0$). Thus we have $2^{n-1}$ matchings:
$n$ of them are of linear type, and the others are of circular type.

\subsection{Description in terms of point sets}\label{sec:points_vs_matchings}

We described point sets with unique matchings in terms of a given
matching $M$ rather than in terms of the set $F$ itself.
It would be nice to characterize the points sets $F$ directly, for
example by forbidden patterns \textit{of points}.
However, such a characterization is impossible. 

Suppose that there are is a collection of patterns of points (of two colors)
such that $F$ has unique matching if and only if $F$ avoids these patterns.
Equivalently, $F$ has several matchings if and only if $F$ contains any of these patterns.
However, in such a case we can duplicate all the members of $F$:
for each $p_i\in F$ we add a point $p_i'$ so that 
$p_i$ and $p_i'$ are of opposite colors,
all the segments $p_ip_i'$ are parallel (including orientation),
and the new set is in general position
(see Fig.~\ref{fig:double}).
Then the matching that consists of all the segments $p_ip_i'$
is a (quasi-)parallel linear matching, and thus is 
a unique matching of the new set,
while it contains the assumed pattern(s).

We can actually move the additional points as far away as we
like. Thus, even a more ``local'' characterization, that a certain
convex region should contain some pattern and no other points, is impossible.

\begin{figure}[hpt]
\includegraphics{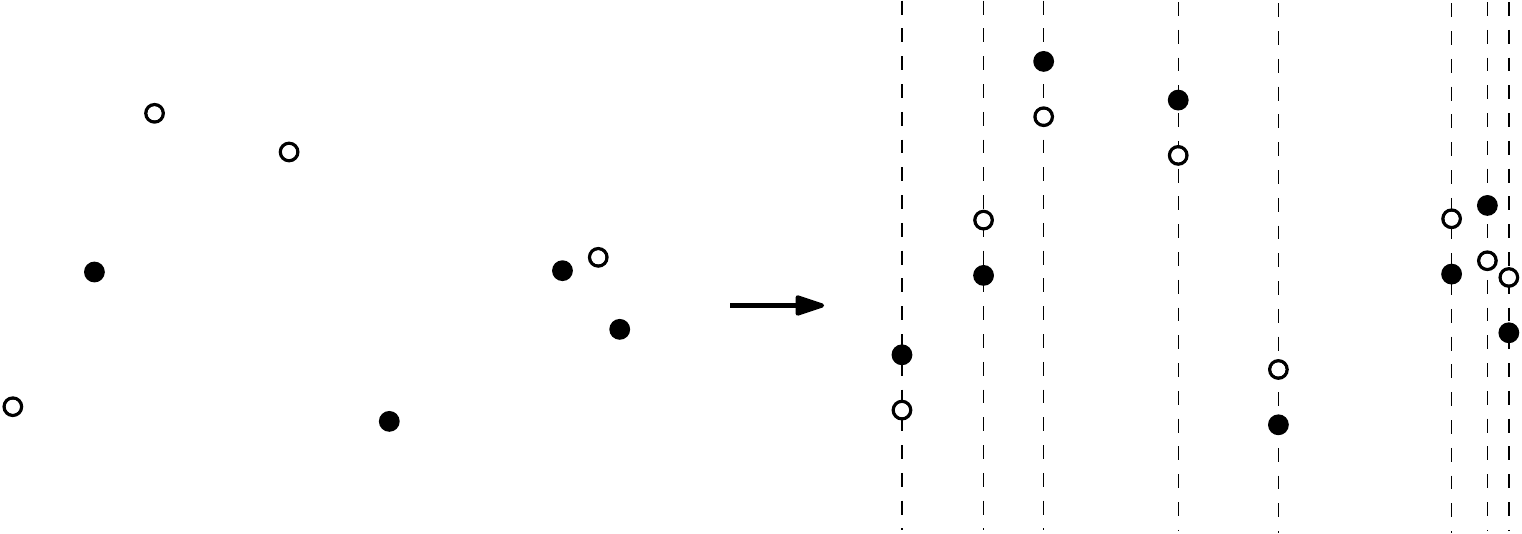}
\caption{Illustration of the duplication described in Subsection~\ref{sec:points_vs_matchings}.}
\label{fig:double}
\end{figure}

On the other hand,
suppose that there are is a collection of  point patterns
such that $F$ has several matchings if and only if $F$ avoids these patterns.
Equivalently, $F$ has unique matching if and only if $F$ contains any of these patterns.
However, in such a case we can take this matching and add one more segment to obtain a BR-matching with a chromatic cut. 
So, the new point set will have more than one matching while it
contains the assumed pattern(s).
As above, the additional segment can be placed arbitrarily far away.

\section{Algorithms}\label{sec:algo}

In this section we describe several algorithms. The first checks
whether a given point set $F$ has a unique BR-matching.
This algorithm is based on yet another characterization of unique BR-matchings.
The second checks if a given BR-matching is circular.
Applying these algorithms together, we can check if a given matching has a chromatic cut.

\begin{definition}
  A BR-matching $M$ has the \emph{\dd property} with respect
to the segments $A, B\in M$ ($A \neq B$) if
  $A$ and $B$
are the only
segments from $M$ on $\partial \CH(F)$.
\end{definition}



\begin{theorem}\label{x}
Let $M=\{A_1,A_2,\dots,A_{n}\}$ be a BR-matching such that
$A_1 \LL A_2 \LL  \cdots \LL A_n$.
Then the following conditions are equivalent:
\begin{enumerate}
\item
 $M$ is the unique BR-matching.

\item
For every $i<j$,
every subset $S\subseteq
\{A_i,A_{i+1},\dots,A_{j}\}$
with $A_i,A_j\in S$ has the \dd property
for $A_i$ and $A_j$.

\item
\label{algo}
For every $j>1$,
the set $\{A_1,A_2,\dots,A_j\}$ has the \dd property for $A_1$ and $A_j$;
and for every $i<n$,
the set $\{A_i,A_{i+1},\dots,A_n\}$
has the \dd property
for $A_i$ and $A_n$ .

\end{enumerate}
\end{theorem}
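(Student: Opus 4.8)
The plan is to prove the chain of implications $1 \Rightarrow 2 \Rightarrow 3 \Rightarrow 1$, which is the most economical cycle. The implication $2 \Rightarrow 3$ is trivial: condition~\ref{algo} is the special case of condition~2 obtained by taking $i=1$ in the first family of sets and $j=n$ in the second. So the real work is in $1 \Rightarrow 2$ and $3 \Rightarrow 1$.

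For $1 \Rightarrow 2$: assume $M$ is the unique BR-matching. By Theorem~\ref{thm:main} (equivalence of conditions 1 and 2) $M$ is a linear matching, and by the remark following Theorem~\ref{thm:main} every submatching of $M$ is again a linear matching. Fix $i<j$ and a subset $S \subseteq \{A_i,\dots,A_j\}$ with $A_i,A_j \in S$. Since $S$ is a submatching of a linear matching it is itself linear, so by Lemma~\ref{lem:typeL} the relation $\triangleleft$ restricted to $S$ is a strict linear order; moreover it is the order induced from $A_1 \LL \cdots \LL A_n$, so $A_i$ is the $\triangleleft$-minimum of $S$ and $A_j$ is the $\triangleleft$-maximum. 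By Lemma~\ref{lem:min_max} and its proof, in a linear matching the $\triangleleft$-minimum and $\triangleleft$-maximum are exactly the two boundary segments of $\partial\CH$ joining points of different color — and by Lemma~\ref{lem:otherPS_v2} these are the \emph{only} segments of $S$ lying on $\partial\CH(\bigcup S)$. Hence $S$ has the \dd property for $A_i$ and $A_j$, which is condition~2.

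For $3 \Rightarrow 1$: we argue contrapositively. Suppose $M$ is not the unique BR-matching. Then by Theorem~\ref{thm:main} $M$ is not a linear matching, and since $A_1 \LL \cdots \LL A_n$ already witnesses that $\triangleleft$ is total, Lemma~\ref{lemma-total-order} gives that $M$ has no chromatic cut, so by Theorem~\ref{the:two_types} $M$ must be a circular matching. Now I want to produce an index for which the \dd property fails. Since $M$ is circular, by Corollary~\ref{no-min-max} and the analysis of the canonical circular order, the segment $A_1$ has both some $A_p$ with $A_p \LL A_1$ and some $A_q$ with $A_1 \LL A_q$; in particular $n\ge 3$ and $A_1$ is not a boundary segment of any linear-type configuration containing it as an extreme element. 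Consider the prefix $\{A_1,\dots,A_j\}$ as $j$ grows from $2$ to $n$. When $j=n$ the set is all of $M$, which is circular, so all its boundary segments have the same color (say \x); in particular neither $A_1$ nor $A_n$ is incident to a \ci-point on $\partial\CH(M)$, yet $A_1,A_n$ are genuine segments of $M$ with a \ci-endpoint — so they cannot both lie on $\partial\CH(M)$ as the \emph{only} such segments; more carefully, a circular matching has at least $3$ segments whose supporting lines form a $3$-star (Lemma~\ref{lem:forbiddenConfig}), and one shows directly that then at least three segments of $M$ touch $\partial\CH(M)$, so $\{A_1,\dots,A_n\}$ fails the \dd property for $A_1$ and $A_n$. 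This contradicts condition~\ref{algo} (the $j=n$ case of its first clause).

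\textbf{Anticipated main obstacle.} The delicate point is the last step: verifying that a circular matching really has at least three segments incident to its convex hull boundary, and cleanly deducing that some prefix or suffix in condition~\ref{algo} fails the \dd property. One clean way is to invoke the $3$-star $\{A,B,Z\}$ from Lemma~\ref{lem:circ_triple} together with Lemma~\ref{lem:foursets}: each of the three "outer" linear submatchings has both colors on its hull, forcing boundary segments, and a counting/topological argument shows these contribute at least three distinct segments to $\partial\CH(M)$. I expect this bookkeeping — tracking which segments survive on the global hull versus on the hulls of sub-configurations — to be the fussiest part of the argument, and it is where I would spend the most care, possibly reorganizing the contrapositive to pick the failing index more explicitly from the structure of the canonical circular order rather than from $j=n$ alone.
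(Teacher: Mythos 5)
Your implications $1 \Rightarrow 2$ and $2 \Rightarrow 3$ are fine and essentially coincide with the paper's (the paper disposes of $1\Rightarrow 2$ by citing Theorem~\ref{thm:main}, Condition~\ref{type_l}, and the remark that it is inherited by submatchings; your spelled-out version via Lemmas~\ref{lem:otherPS_v2} and~\ref{lem:min_max} is the same argument). The problem is $3 \Rightarrow 1$. Your contrapositive hinges on the claim that ``$A_1 \LL \cdots \LL A_n$ already witnesses that $\triangleleft$ is total,'' so that Lemma~\ref{lemma-total-order} rules out a chromatic cut and only the circular case remains. This is false: totality means \emph{every} pair is comparable, whereas the hypothesis only compares consecutive pairs, and the paper explicitly warns right after the theorem that the chain assumption does not imply $A_i \LL A_j$ for $i<j$ (indeed, if the chain alone forced linearity, Condition~3 and the whole drum-checking algorithm would be superfluous). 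Consequently the case in which $M$ fails to be unique because it has a chromatic cut --- i.e.\ some non-consecutive pair is incomparable --- is simply not handled by your argument, and closing it amounts to proving exactly what the paper proves directly: if every prefix $\{A_1,\dots,A_j\}$ has the \dd{}property for $A_1,A_j$, then $A_j$ is a hull edge of that prefix, so all $A_1,\dots,A_{j-1}$ lie on one side of $g(A_j)$, and since $A_{j-1}\LL A_j$ that side is the left one; the symmetric suffix condition puts every $A_j$ ($j>i$) strictly right of $g(A_i)$; together these give $A_i \LL A_j$ for all $i<j$, so $\triangleleft$ is a linear order and Condition~\ref{order} of Theorem~\ref{thm:main} yields uniqueness.

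Two smaller points. First, the part of your contrapositive you flagged as delicate (showing a circular matching has at least three hull-incident segments) is actually unnecessary and easier than you make it: in a circular matching all points of $\partial\CH(F)$ have one color, so \emph{no} bichromatic segment of $M$ can lie on the hull boundary at all, and the \dd{}property for $A_1$ and $A_n$ fails at $j=n$ immediately. Second, but this does not rescue the proof, because the circular case was never the missing one --- the chromatic-cut case is, and there your route offers nothing beyond re-deriving the paper's direct argument.
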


Note that the relation $\LL$ is 
not necessarily transitive. So the assumption of the theorem does not imply
$A_i\LL A_j$ for $i<j$.

\begin{proof}

  $``1 \Rightarrow 2''$ 
 follows directly from Theorem~\ref{thm:main},
Condition~\ref{type_l}, together with the remark after the theorem
that the condition is implied for all subsets.

  $``2 \Rightarrow 3''$ is clear.

$``3 \Rightarrow 1''$:
%
%
%
Since $\{A_1,A_2,\dots,A_j\}$ has the \dd property for $A_1$ and
$A_j$, all segments $A_1,\dots,A_{j-1}$ lie on the same side of
$g(A_j)$.  Since $A_{j-1}\LL A_j$ by assumption, we know that the
segment $A_{j-1}$ lies left of $A_j$, and hence we conclude that all
segments $A_i$ lie left of $g(A_j)$, for $i<j$.  Similarly, from the
\dd property for $\{A_i,A_{i+1},\dots,A_n\}$ we conclude that the
segments $A_j$ lie right of $g(A_i)$, for $j>i$. These two conditions
together mean that $A_i\LL A_j$ for $i<j$. Therefore,
Condition~\ref{order} of Theorem~\ref{thm:main} holds, and $M$ is unique.
%
%
%
\end{proof}

From Property~\ref{algo} of Theorem~\ref{x} we can derive a
linear-time algorithm for testing whether $M$ is unique, once the ordering
$A_1 \LL A_2 \LL  \cdots \LL A_n$ has been computed:
We incrementally compute
 $P_j := \CH(\{A_1,A_2,\dots,A_j\})$ for $j=2,\dots, n$ and check the
\dd property as we go.

There is a straightforward incremental algorithm for computing the convex
hull (see, for example, \cite{kmpsy-cerpg-08}), which is the basis for
more elaborate randomized incremental algorithms that work also in
higher dimensions, see~\cite[Chapter 11]{bkos-cgaa-00}.
It extends a convex hull $C$ by a new point $p$ as follows:
\begin{enumerate}
\item[C1.]
 Check whether $p\in C$. If this is the case, stop.
\item[C2.] If not, find a boundary point $q\in\partial C$ that is visible from
  $p$.
\item[C3.] Walk from $q$ in both directions to find the tangents $pq_1$ and
  $pq_2$ from $p$ to $C$.
\item[C4.] Update the convex hull: remove the part between $q_1$ and
  $q_2$ that has been walked over, and replace it with $q_1pq_2$.
\end{enumerate}
Steps C3 and C4 take only linear time overall, because everything that
is walked over is deleted.
The ``expensive'' steps that are responsible for the superlinear
running time of convex hull algorithms are C1 and C2.
However, in our case, we will see that these steps are trivial.
(We extend the convex
hull by inserting not a single point but two points of $A_{j+1}$ at a time.)

Since the \dd property holds for
$\{A_1,A_2,\dots,A_{j+1}\}$ we \emph{know} that the new
points of $A_{j+1}$ don't lie in $P_j$, and since $A_j$ lies on the
boundary
of $P_j$ but not of $P_{j+1}$, we know that $A_j$ is visible from the
points of $A_{j+1}$.  We can start the search in step C3 from there.
This visibility assumption can be checked (as part of checking the \dd
property) in constant time. The overall running time is linear.

In a second symmetric step, we start from the end and compute
$\CH(\{A_i,A_{i+1},\dots,A_n\})$ for $i=n-1,\dots,1$.

\begin{theorem}\label{theorem-algo}
  It can be checked in $O(n\log n)$ time whether a bichromatic
  $(n+n)$-set
has a unique non-crossing BR-matching. 
\end{theorem}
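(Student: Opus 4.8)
The plan is to combine the structural characterization of unique BR-matchings from Theorem~\ref{thm:main} with the linear-time incremental procedure sketched right before the statement, and to account for a single $O(n\log n)$ preprocessing phase. By Theorem~\ref{thm:main}, $M$ is the unique BR-matching of $F$ if and only if the sidedness relation $\triangleleft$ is a linear order on $M$; and by Theorem~\ref{x}, once the segments of \emph{some} BR-matching $M$ are listed as $A_1 \LL A_2 \LL \cdots \LL A_n$, uniqueness is equivalent to the purely convex-hull-theoretic Condition~\ref{algo} (the \dd property for every prefix $\{A_1,\dots,A_j\}$ and every suffix $\{A_i,\dots,A_n\}$), which the preceding discussion checks in linear time by two incremental convex-hull sweeps.

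So the algorithm has three stages. First, compute \emph{any} BR-matching $M$ of $F$: using the Ham-Sandwich Theorem recursively (as alluded to in the introduction) one obtains a non-crossing BR-matching; each Ham-Sandwich cut of a bichromatic set can be found in linear time, and the recursion depth is $O(\log n)$, giving $O(n\log n)$. Second, compute an ordering $A_1, \dots, A_n$ of the segments of $M$ consistent with $\triangleleft$ in the sense required by Theorem~\ref{x} (i.e.\ $A_i \LL A_{i+1}$ for all $i$, not necessarily a global linear order). This is the delicate point: one must either sort the segments by some geometric key (e.g.\ by the slope/position data that determines the relation $\triangleleft$, exploiting Lemma~\ref{lem:half_triangle} so that a single half-plane test suffices to compare neighbors) in $O(n\log n)$, or — if $M$ happens to be circular — detect the $3$-star and bail out; but since we do not yet know the type, the cleanest route is to produce a candidate ordering and let stage three validate it. Third, run the two incremental convex-hull sweeps of the preceding paragraph to verify Condition~\ref{algo} of Theorem~\ref{x} in $O(n)$ time; if the \dd property ever fails, report ``not unique'', otherwise report ``unique''.

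The main obstacle is stage two: obtaining, in $O(n\log n)$ time, an ordering $A_1 \LL \cdots \LL A_n$ without already knowing that $M$ is linear. The point is that when $\triangleleft$ \emph{is} a linear order, sorting the segments by an appropriate scalar key (for instance, the position where each segment's supporting line meets a common transversal, or its angle) realizes that order; and when $\triangleleft$ is only total-but-circular (the circular case), any sorted order will still satisfy $A_i \LL A_{i+1}$ locally ``most of the time'' but will contain at least one ``wrap-around'' violation — and Theorem~\ref{x} is robust precisely because its hypothesis only demands consecutive comparabilities $A_i \LL A_{i+1}$, which the subsequent \dd-property test then either confirms globally (linear, unique) or exposes as inconsistent (circular, not unique). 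One must argue carefully that the chosen sort key does yield consecutive $\triangleleft$-comparability in all cases; this is where the geometry of Lemma~\ref{lemma-total-order} and Lemma~\ref{lem:half_triangle} is used, and it is the one spot where a short but non-trivial case analysis is needed. Everything else is either a cited black box (Ham-Sandwich in linear time per level, incremental convex hull) or the already-proved linear-time check.

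Once these stages are assembled, the total running time is $O(n\log n) + O(n\log n) + O(n) = O(n\log n)$, which is the claimed bound, completing the proof.
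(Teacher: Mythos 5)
Your overall architecture (ham-sandwich construction of some matching $M$, then an ordering by $\triangleleft$, then the two linear-time incremental convex-hull sweeps for the drum property of Theorem~\ref{x}) is exactly the paper's, but your stage two -- the only step you yourself flag as delicate -- is left with a genuine gap, and the way you propose to close it would not work as stated. There is no exhibited scalar ``geometric key'' (slope, projection onto a fixed line, etc.) that realizes $\triangleleft$: a reference line as in Definition~\ref{def:quasiParallel} is not known before the uniqueness question is settled, so sorting by such a key can output an ordering with $A_i \ntriangleleft A_{i+1}$ even when $M$ is linear, and your algorithm could then wrongly reject a unique matching. Conversely, your claim that ``stage three validates the ordering'' is unjustified: the implication $3\Rightarrow 1$ of Theorem~\ref{x} is only available under the hypothesis $A_1\LL A_2\LL\cdots\LL A_n$, and the drum-property sweeps check the drum property, not this hypothesis; without verifying consecutive comparability separately you cannot conclude uniqueness from a successful sweep, nor non-uniqueness from a failed one.

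The paper avoids the problem by not looking for any external key at all: it runs a standard comparison sort (e.g.\ mergesort) using $\triangleleft$ itself as the comparator, each comparison being a constant number of sidedness tests. If two incomparable segments are ever encountered, it quits and reports ``not unique'' -- correctly, since by Lemma~\ref{lemma-total-order} this means $M$ has a chromatic cut, and by Corollary~\ref{cor:bal} $M$ is then not unique. If the sort completes, it verifies $A_i\LL A_{i+1}$ for all $i$ in $O(n)$ time (with mergesort this check is automatic); a failure here certifies that $\triangleleft$ is not a linear order, hence non-uniqueness by Theorem~\ref{thm:main}, while success establishes the hypothesis of Theorem~\ref{x}, after which your stage three is exactly the paper's final linear-time check. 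With stage two repaired in this way (sort by $\triangleleft$ with early abort, plus the explicit $O(n)$ consecutive-comparability verification), your proof becomes the paper's proof; as written, the argument for the central sorting/validation step is missing.
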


\begin{proof}
  First we have to compute some BR-matching 
$M=\{A_1,A_2,\dots,A_{n}\}$.
It is well-known that this can be done by recursive ham-sandwich cuts in $O(n\log n)$ time.
A ham-sandwich cut is a line $\ell$ that partitions a bicolored ($n+n$)-set
such that each open half-plane contains at most $\left\lfloor \frac
{n}{2} \right\rfloor$ points of each color.
If $n$ is odd, $\ell$ must go through a red and a blue point.
We can match these points to each other, and recursively find
a BR-matching in the
$\left(\frac{n-1}2+\frac{n-1}2\right)$-sets in each half-plane.
If $n$ is even, $\ell$ may go through one or two points, but by
shifting $\ell$ slightly we can push these points to the correct side
such that each half-plane
contains an $\left(\frac{n}2+\frac{n}2\right)$-set. We recurse as above.
A ham-sandwich cut
 can be found in linear time~\cite{lms-ahsc-94}. Hence this procedure
 leads to a running time of
$T(n)=O(n)+2\cdot T( n/2)$, which gives $T(n)=O(n\log n)$.

 Next, we compute an ordering
 \begin{equation}
   \label{ordering}
A_1 \LL A_2 \LL  \cdots \LL A_n.
 \end{equation}
We do this by a standard sorting algorithm in $O(n \log n)$ time,
assuming that the relation~$\LL$ is a linear order.
If, at any time during the sort, we find two segments that
are not comparable by $\LL$, we quit.
Finally, we check condition~\eqref{ordering} in $O(n)$ time.
(This final check is not necessary, if, for example, mergesort is used
as the sorting algorithm.)

As the last step, we check Property~\ref{algo} of Theorem~\ref{x} in
linear time, as outlined above.
\end{proof}

\begin{figure}[hpt]
$$\includegraphics{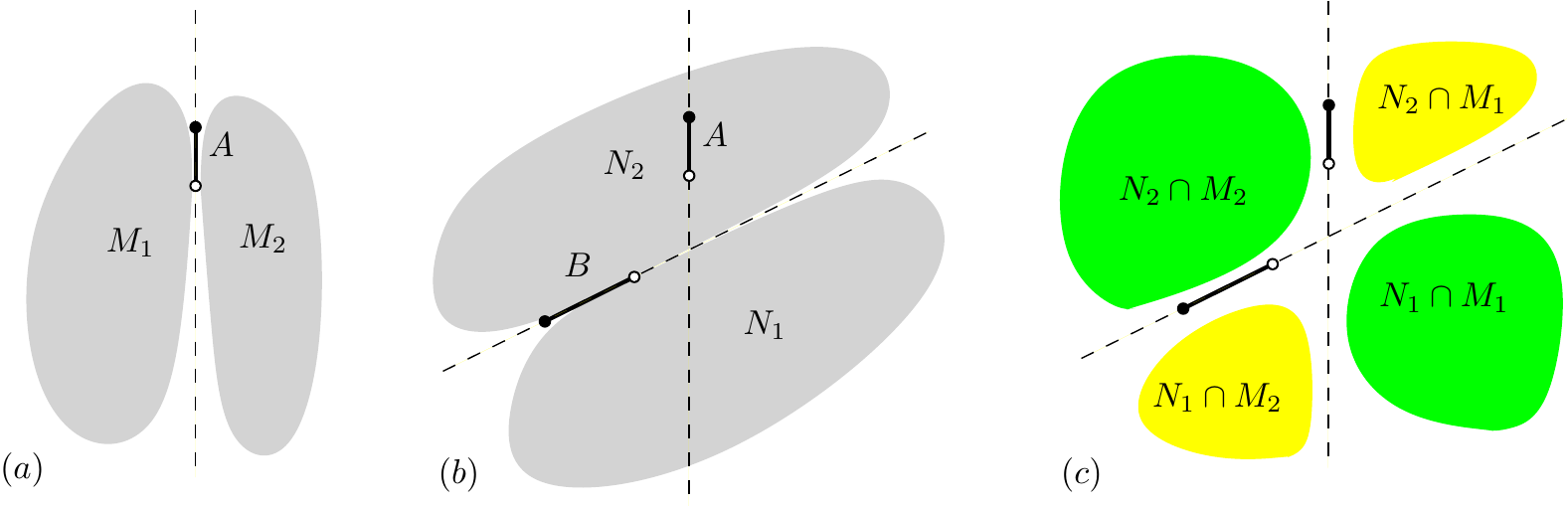}$$
\caption{Separation of M into $6$ overlapping BR-Matchings}
\label{fig:ACM}
\end{figure}

It is also possible to determine in $O(n \log n)$ if a BR-matching $M$
is circular, by an easy divide-and-conquer algorithm. 
Let $A$ and $B$ be two arbitrary segments in $M$. 
Let $M_1=M_A^{L+}$ and $M_2=M_A^{R+}$ (that is, the
 segments that lie to the left or to the right of $A$, including $A$ itself) and
likewise $N_1=M_B^{L+}$ and $N_2=M_B^{R+}$
(see Fig.~\ref{fig:ACM}; recall that the segments are 
implicitly directed from white to black).
$M_i$ and $N_i$ are linear matchings by Lemma~\ref{lem:foursets}.
Finally, define $Q_1:=(M_2\cap N_2) \cup (M_1\cap N_1)$ and
$Q_2:=(M_1\cap N_2) \cup (M_2\cap N_1)$.

\begin{observation}
A BR-matching $M$ has a chromatic cut if and only if at least one of
the six matchings defined above
has a chromatic cut.
\end{observation}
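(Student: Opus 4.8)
The plan is to prove the two directions separately; the ``if'' direction is immediate, and essentially all the content is a short case analysis for the ``only if'' direction.

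For ``if'': each of $M_1,M_2,N_1,N_2,Q_1,Q_2$ is a subset of $M$, hence non-crossing, and is therefore a BR-matching on its own vertex set. If such a submatching $M'\subseteq M$ has a chromatic cut $\ell$, then $\ell$ crosses two segments of $M'$ whose $\bullet$-ends lie on opposite sides of $\ell$; these two segments belong to $M$ as well, so (allowing $\ell$ to meet further segments of $M$) the same line $\ell$ is a chromatic cut of $M$.

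For ``only if'': suppose $M$ has a chromatic cut. By Lemma~\ref{lemma-total-order} the relation $\triangleleft$ is not total on $M$, so I would fix two segments $X,Y\in M$ with neither $X\triangleleft Y$ nor $Y\triangleleft X$. I would first record that, in building the six matchings, every segment of $M$ has been confirmed $\triangleleft$-comparable with both $A$ and $B$: if some $Z$ were incomparable with $A$, then $\{A,Z\}$ is already a two-segment BR-matching whose $\triangleleft$ is not total, hence (again by Lemma~\ref{lemma-total-order}) one with a chromatic cut, and this is detected before the six matchings are formed. Hence $M=M_1\cup M_2=N_1\cup N_2$, $X,Y\notin\{A,B\}$, and each of $X$ and $Y$ lies in exactly one of $M_1,M_2$ and in exactly one of $N_1,N_2$. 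Now I would distinguish cases. If $X$ and $Y$ both lie in $M_1$, or both in $M_2$, that set already contains the incomparable pair. Otherwise, by symmetry assume $X\in M_1$ and $Y\in M_2$, and look at their sides of $B$: if $X,Y$ are both in $N_1$, or both in $N_2$, that set works; if $X\in N_1$ and $Y\in N_2$ then $X\in M_1\cap N_1\subseteq Q_1$ and $Y\in M_2\cap N_2\subseteq Q_1$, so $Q_1$ works; and if $X\in N_2$ and $Y\in N_1$ then $X\in M_1\cap N_2\subseteq Q_2$ and $Y\in M_2\cap N_1\subseteq Q_2$, so $Q_2$ works. In every case one of the six matchings contains $\{X,Y\}$, so $\triangleleft$ is not total on that matching, and by Lemma~\ref{lemma-total-order} it has a chromatic cut.

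I expect the only delicate point to be the bookkeeping: matching the two ``diagonal'' sub-cases to the precise definitions of $Q_1$ and $Q_2$, and being explicit that $M_1\cup M_2=M$ and $N_1\cup N_2=M$ — equivalently, that every segment is $\triangleleft$-comparable with both $A$ and $B$. Without this last point the statement would actually fail (for instance, for a linear matching together with one extra segment transverse to it, $M$ has a chromatic cut while none of the six does), so it is the hinge of the argument; everything else follows from Lemma~\ref{lemma-total-order} together with the triviality that subsets of a non-crossing matching are non-crossing.
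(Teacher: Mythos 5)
Your proof is correct and follows essentially the same route as the paper: the paper's entire argument is the one-line assertion that any two segments of $M$ lie together in one of the six matchings (so a pair witnessing a chromatic cut of $M$ is contained in one of them), plus the trivial converse; your case analysis over $M_1/M_2$ and $N_1/N_2$, with the two diagonal cases landing in $Q_1$ and $Q_2$, is exactly that assertion spelled out, routed through Lemma~\ref{lemma-total-order}. The point you call the hinge is well spotted, and it is a gap in the paper's terse proof rather than in yours: the claim that any two segments lie in one of the six needs $M=M_1\cup M_2=N_1\cup N_2$, i.e.\ that every segment of $M$ is $\triangleleft$-comparable with both $A$ and $B$, and this does not follow from the hypotheses of the Observation when $M$ has a chromatic cut. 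Your counterexample is valid: a linear matching plus one extra segment crossing the outer rays of all the others is incomparable with every segment, hence lies in none of the six sets, and each of the six is then a submatching of the remaining linear matching and so cut-free, while $M$ is not. Consequently the Observation is true only with this comparability proviso, which is what the surrounding algorithm supplies (the split along $g(A_0)$ and the linearity checks on the $M_i$, $N_i$ would already have exposed a segment incomparable with $A$ or $B$); your appeal to ``detected before the six matchings are formed'' is precisely this contextual reading, and no argument from the bare statement alone could do better.
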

\begin{proof}
Consider two segments in $M$. Then they must be 
both in one of the matchings $M'$ as defined above.
If they have a chromatic cut, then $M'$ has a chromatic cut.
The other direction is obvious.
\end{proof}


\begin{theorem}
	It can be checked in $O(n \log n)$ time whether a BR-matching $M$ is of circular type.
\end{theorem}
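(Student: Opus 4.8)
The plan is to combine the structural decomposition already set up (the six overlapping matchings $M_1,M_2,N_1,N_2,Q_1,Q_2$, each of which is linear by Lemma~\ref{lem:foursets}) with the preceding Observation, which reduces the presence of a chromatic cut in $M$ to the presence of a chromatic cut in one of these six submatchings. Since $M$ is of circular type exactly when it has no chromatic cut but is \emph{not} of linear type, and since a BR-matching without a chromatic cut is always of linear or circular type (Lemma~\ref{lem:otherPS_v2}), it suffices to (a) decide whether $M$ has a chromatic cut, and (b) in the absence of a chromatic cut, decide whether $M$ is linear or circular.

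First I would compute a sidedness-consistent order on $M$ in the sense needed to form the six submatchings: pick two arbitrary segments $A,B\in M$, and for every other segment decide in $O(1)$ time on which side of $g(A)$ (resp.\ $g(B)$) it lies, thereby splitting $M$ into $M_1=M_A^{L+}$, $M_2=M_A^{R+}$, $N_1=M_B^{L+}$, $N_2=M_B^{R+}$ in $O(n)$ total time, and then forming $Q_1,Q_2$ by intersecting these lists, again in $O(n)$ time. Each of the six resulting lists is a candidate linear matching. For each of them I would run the $O(n\log n)$ linear-matching test underlying Theorem~\ref{theorem-algo}: compute a $\triangleleft$-order by sorting (bailing out if two segments turn out incomparable, which already certifies a chromatic cut inside that submatching), then verify the drum-property condition (Property~\ref{algo} of Theorem~\ref{x}) by the incremental convex-hull sweep. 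If any of the six submatchings fails this test, then by the Observation $M$ has a chromatic cut and hence is neither linear nor circular, so we report ``not circular''.

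If all six submatchings pass, then $M$ has no chromatic cut, so it is of linear or circular type, and it remains to distinguish the two cheaply. One clean way: compute $\CH(F)$ in $O(n\log n)$ and inspect the colours on its boundary — by the definitions recalled after Lemma~\ref{lem:otherPS_v2}, $M$ is circular iff all boundary points of $\CH(F)$ have a single colour, and linear iff the boundary splits into exactly two colour intervals. Alternatively, and without recomputing anything, one can test directly for the presence of a $3$-star among the segments using Lemma~\ref{lem:circ_triple}: take any $B\in M$, let $A$ be the $\triangleleft$-minimum of $M_B^L$ and $Z$ the $\triangleleft$-maximum of $M_B^R$ (both available from the lists already built), and check in $O(1)$ whether $\{A,B,Z\}$ realises pattern~(c) of Fig.~\ref{fig:allConfig}; by Lemma~\ref{lem:forbiddenConfig} the matching is circular iff such a $3$-star exists. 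Either test runs in $O(n\log n)$, giving the claimed bound.

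The running time analysis is then immediate: six independent $O(n\log n)$ linear-matching tests plus one $O(n\log n)$ convex-hull (or $O(n)$ $3$-star) check. The one point that needs care — and which I expect to be the only real obstacle — is the correctness of reducing ``$M$ circular'' to ``all six submatchings linear, and a $3$-star present'': one must argue that when $M$ has no chromatic cut every pair of segments genuinely lands together in at least one of the six lists (this is exactly the content of the Observation, via the case analysis on which of $A$'s and $B$'s sides each segment falls), so that no obstruction to linearity of a pair can be hidden, while a $3$-star, if present at all, is detected by the canonical $\{A,B,Z\}$ witness of Lemma~\ref{lem:circ_triple}. Once that is in place, the algorithm and its analysis are routine. \qed
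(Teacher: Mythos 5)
There is a genuine gap, and it sits exactly where you flagged ``the only real obstacle'': the reduction ``some submatching fails the linearity test $\Rightarrow$ $M$ has a chromatic cut'' is false for $Q_1$ and $Q_2$. Lemma~\ref{lem:foursets} guarantees linearity only of $M_1,M_2,N_1,N_2$ (the sets $M_A^{L+},M_A^{R+},M_B^{L+},M_B^{R+}$); it says nothing about $Q_1=(M_1\cap N_1)\cup(M_2\cap N_2)$ and $Q_2=(M_1\cap N_2)\cup(M_2\cap N_1)$, and indeed these can themselves be \emph{circular} even when $M$ is circular (hence has no chromatic cut). Concrete counterexample: take a radial matching with segments on rays at angles roughly $0^\circ, 90^\circ, 170^\circ, 280^\circ$ (perturbed into general position), let $A$ be the segment at $90^\circ$ and $B$ the one at $0^\circ$. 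Then $Q_1$ contains the segments at $90^\circ$, $170^\circ$ and $280^\circ$, whose directions do not fit in a half-circle, so $Q_1$ contains the $3$-star pattern of Fig.~\ref{fig:allConfig}~(c) and is of circular type. Your algorithm runs the linear-matching test of Theorem~\ref{x} on $Q_1$, it fails, and you then invoke the Observation to declare a chromatic cut and output ``not circular'' --- the wrong answer, since $M$ is circular. The Observation relates only \emph{chromatic cuts} of $M$ to chromatic cuts of the six parts; failing linearity is a strictly weaker certificate, because a submatching can be cut-free without being linear. (The same example also shows a second defect of the one-shot decomposition: with arbitrary $A,B$, a set $Q_i$ can be almost all of $M$, so even a corrected test on $Q_i$ would be the original problem over again, not a smaller one.)

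This is precisely why the paper does not stop after one application of the decomposition. Its algorithm first computes a \emph{potential circular order} (split $M$ along $g(A_0)$ for one segment $A_0$ and sort the two halves by $\ord$), then recurses: at each level it picks $B$ as the $\ord$-median of the larger of $M_1,M_2$, verifies in linear time (using the precomputed order) that $M_1,M_2,N_1,N_2$ are linear --- which Lemma~\ref{lem:foursets} does license when no cut is present --- and checks $Q_1,Q_2$ for absence of a chromatic cut \emph{recursively}, not for linearity. The median choice forces $n/4\le|Q_i|\le 3n/4$, giving $T(n)\le O(n)+\max_{1/4\le\alpha\le3/4}\bigl[T(\alpha n)+T((1-\alpha)n)\bigr]=O(n\log n)$. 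Your final steps (distinguishing linear from circular via the colours on $\partial\CH(F)$, or via the $3$-star witness of Lemma~\ref{lem:circ_triple}) are fine; it is the chromatic-cut test itself that needs the recursion.
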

\begin{proof}
The algorithm starts to compute the convex hull of $M$.
If all points on $\partial CH(M)$ are of the same color we know it is not a linear matching
and it remains to check if $M$ has no forbidden pattern as in Fig.~\ref{fig:allConfig}~(a)-(b).

We pick any segment $A_0$ and split $M$ along the supporting
line of $A_0$.
We compute the linear order of both parts.
This gives a potential circular order. We remember this order for the remaining part.

The rest of the algorithm works recursively. 
We start by defining $M_1$ and $M_2$ as above to any segment $A$.
Let $B$ be the median of the larger of the $M_i$ with respect to \ord .
The BR-matchings $N_i$ and $Q_i$ are also defined as above.
For the BR-matchings $M_i$ and $N_i$, it can be checked in linear time
if they are linear, because we have already precomputed the order.
As $B$ is the median of the larger of the $M_i$, 
$ n/4\leq | Q_i | \leq 3n/4 , \ \forall i.$ We check recursively if
$Q_1$ and $Q_2$ has 
no chromatic cut. For the running time $T(n)$, we have $T(n)\leq O(n) + \max_{1/4 \leq \alpha \leq 3/4}[T(\alpha n) + T((1-\alpha)n)]$. Thus $T(n) = O(n \log n)$.

If any of these steps in the algorithm
fails, a forbidden configuration is present. 
In this case we just stop and
return that $M$ has a chromatic cut.
Otherwise we return the correct circular order.
\end{proof}

\begin{figure}[h]
		\begin{center}
	\includegraphics{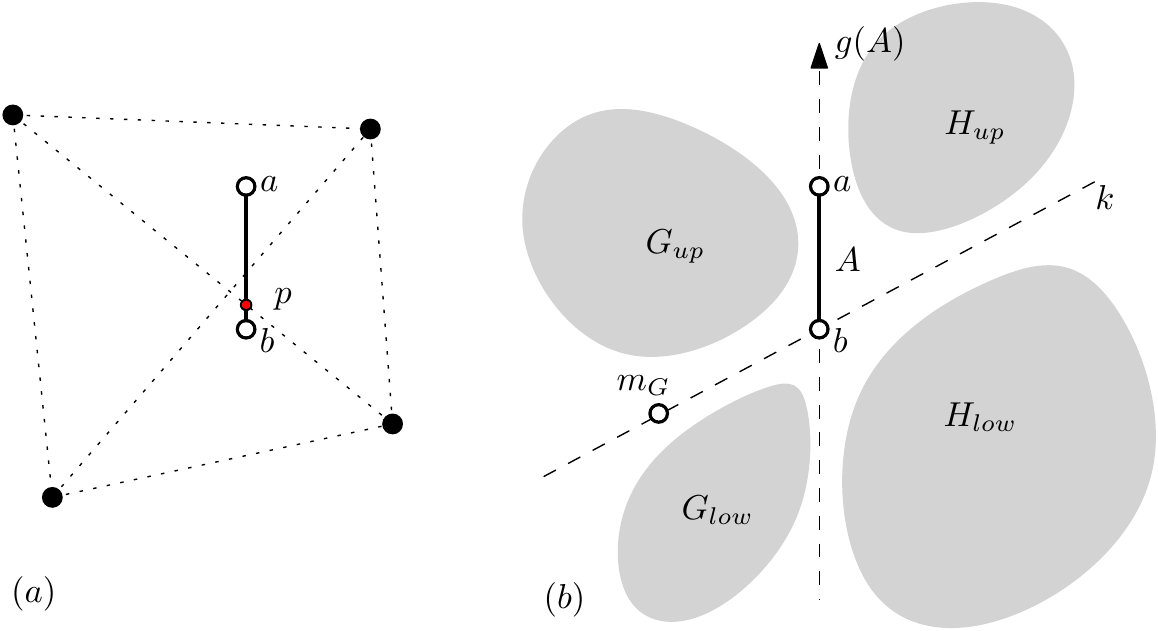}
		\caption{The line supporting line $g(A)$ splits
		the point set $F$ into $G$ and $H$. }
		\label{fig:algoBal}
			\end{center}
	\end{figure}

The last algorithm we want to present is about computing
a balanced line as in Lemma~\ref{lem:cutBal}.
As a preprocessing step we need to find a point on a segment in general
position with respect to the remaining points $F$.
\begin{lemma}\label{lem:lowInt}
  Let $F$ be a point set in the plane and $A=(a,b)$ be a vertical segment 
 such that $F\cup\{a,b\}$ lies in general position
(i.e., no three points	on a line).
	Then the lowest 
	intersection $p$ of $A$ with 
a segment 
formed by
two 
 points in $F$
can be computed
 in deterministic $O(n\log n)$ time.
\end{lemma}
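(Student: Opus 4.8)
The plan is to phrase this as a batched/ray-shooting problem and solve it with a sweep plus a suitable search structure. First I would set up coordinates so that $A$ is the segment $\{a\}\times[y_0,y_1]$ on the vertical line $x=a$; we want the point of $A$ with the largest $y$-coordinate $\le y_1$ that lies on some segment $\overline{pq}$ with $p,q\in F$. Equivalently, among all lines $g$ through two points of $F$ that cross $x=a$ inside $[y_0,y_1]$, we want the one whose intersection with $x=a$ is lowest, \emph{provided} the intersection point actually lies between $p$ and $q$. Because $F\cup\{a,b\}$ is in general position, no such line passes through $a$ or $b$ and no two of the $\binom n2$ intersection points on $x=a$ coincide, so the answer is well-defined once it exists. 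The naive approach enumerates all $\binom n2$ segments and is $O(n^2)$; the point is to avoid this.

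The key idea I would use is duality. Map each point $p=(p_x,p_y)\in F$ to the line $p^{*}:\,Y=p_xX-p_y$, and the vertical line $x=a$ to the point $a^{*}=(a,\,\cdot\,)$ in a way that the segment $\overline{pq}$ crossing $x=a$ at height $h$ corresponds, in the dual, to the line through $p^{*}\cap q^{*}$ evaluated at $X=a$ giving value $h$ — more concretely, the intersection height of line $pq$ with $x=a$ is exactly the $Y$-coordinate at $X=a$ of the line spanned by the two dual points $p^{*}\cap\,$..., i.e.\ the value at $X=a$ of the dual line of the point $\overline{pq}\cap\{x=a\}$. The upshot is that "the lowest intersection of $A$ with a line through two points of $F$" becomes a lower-envelope query on the arrangement of the $n$ dual lines restricted to the vertical strip $X=a$: the candidate heights are the $Y$-values at $X=a$ of all pairwise intersections of the dual lines, and the lowest one that lies in $[y_0,y_1]$ is what we want. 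The complication is the \emph{betweenness} constraint (the intersection must fall between $p$ and $q$ on the segment, not on the extension), which in the dual becomes a constraint on which of the two dual lines is above the other at $X=a$ relative to their crossing; this is a local combinatorial condition at each arrangement vertex and does not change the asymptotics.

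The concrete algorithm I would run: build the dual lines, and perform a plane sweep / use the standard $O(n\log n)$ machinery for the lower envelope of $n$ lines within a vertical slab, but restricted to the horizontal band $Y\in[y_0,y_1]$ at abscissa $X=a$. Equivalently and more elementarily: sort the dual lines by their value at $X=a$; this gives their vertical order on the line $X=a$; the lowest pairwise-intersection height inside the band that also satisfies the betweenness predicate is found by a single scan of consecutive pairs in this sorted order together with a small amount of bookkeeping — here one uses the fact (analogous to the fact that in the primal the \emph{lowest} crossing of $x=a$ among many segments is achieved by two segments that are adjacent in the vertical order just below the target, so only $O(n)$ candidate pairs need be examined, not $O(n^2)$). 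Sorting dominates, giving deterministic $O(n\log n)$. Alternatively, one may invoke known results on ray shooting in arrangements of segments, or the $O(n\log n)$ algorithm for computing a single face / a vertical ray's first hit in a segment arrangement, but the duality-plus-sort route is self-contained and deterministic.

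The main obstacle I expect is getting the \textbf{betweenness predicate} right so that only $O(n)$ candidate pairs survive: one must argue that if the lowest admissible intersection point $p^{*}=\overline{pq}\cap\{x=a\}$ is attained, then in the vertical order of the $n$ lines (primal segments, clipped to a neighborhood of $x=a$) just above that height the two relevant segments are \emph{consecutive}, so that scanning consecutive pairs of a single sorted list suffices. This is true but needs a careful local argument — essentially: the lowest point of $A$ lying on $\bigcup_{p,q}\overline{pq}$ is a vertex of the trapezoidal decomposition of the segment arrangement restricted to the slab, and such a vertex is incident to two segments that are adjacent in the vertical order there. Everything else (the duality bookkeeping, clipping to $[y_0,y_1]$, handling the general-position guarantee so that no degenerate coincidences occur, and the final comparison) is routine and contributes only to the $O(n)$ term, so the sorting step sets the $O(n\log n)$ bound.
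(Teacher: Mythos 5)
There is a genuine gap, and it sits exactly at the step you flag as the ``main obstacle'': the claim that after sorting the $n$ dual lines by their value at $X=a$ (equivalently, sorting the points of $F$ by $a\,p_x-p_y$), the optimal pair is \emph{consecutive} in that order, so that a linear scan of adjacent pairs suffices. This is false. Take the vertical line to be the $y$-axis (so your order is just the vertical order of the points) and $A$ from $(0,0)$ to $(0,10)$, with $F=\{p,q,s\}$, $p=(-1,5)$, $q=(1,5.1)$, $s=(-100,5.05)$. The lowest crossing of $A$ with a spanned segment is $\overline{pq}$ at height $5.05$, but in your order the points appear as $p,s,q$, so the only consecutive pairs are $\{p,s\}$ (whose segment does not even cross the $y$-axis) and $\{s,q\}$ (crossing at $\approx 5.0995$); your scan returns the wrong answer. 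The underlying reason is that the candidates are the $\binom n2$ point pairs, i.e.\ the vertices of the dual line arrangement, and the first vertex met by a sweep in the relevant direction need only be formed by two lines adjacent \emph{at the moment it is reached}; arbitrarily many other (in particular same-side) vertices may be swept first and reshuffle the order, so no fixed initial order certifies adjacency of the optimum. Your supporting argument via the trapezoidal decomposition also conflates the two point endpoints of the optimal spanned segment with two segments adjacent in the vertical order of the arrangement of all $\binom n2$ spanned segments; even if the latter statement were made precise, that arrangement has $\Theta(n^2)$ objects, so the fallback appeals to ray shooting / single-face computation in a segment arrangement likewise cannot give $O(n\log n)$ in $n$.

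For comparison, the paper does not look for a global order in which the optimal pair is adjacent (no such simple order exists for the opposite-side, segment version; adjacency in the radial order around $b$ is only used in the easier one-side, full-line Lemma~\ref{lem:rightArr}). Instead it runs a prune-and-search: split $F$ into the sets $G,H$ left and right of $g(A)$, cut by the line through $b$ and the median (in radial order around $b$) of the larger side, obtaining $G_{\textrm{up}},G_{\textrm{low}},H_{\textrm{up}},H_{\textrm{low}}$. The optimal pair either has both points ``up'', in which case it is the bridge (convex hull edge crossing $g(A)$) of $G_{\textrm{up}}\cup H_{\textrm{up}}$ and is found in linear time by a Kirkpatrick--Seidel-type subroutine, or it lies in one of the two opposite quadrant pairs, which are recursed on; the balanced split gives $T(n)=O(n)+\max_{1/4\le\alpha\le3/4}[T(\alpha n)+T((1-\alpha)n)]=O(n\log n)$. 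To repair your write-up you would need an ingredient of this kind (or parametric search), not a single sort plus scan.
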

\begin{proof}
	Consider the point sets	$G$ and $H$ left and right of $g(A)$. Let $m_G$ be the
	median of the larger set $G$, with respect to the order defined by a ray rotating around $b$. 
	The  line $k$ through $m_G$
	and $b$ defines the four sets
	$G_{\textrm{up}}$, $G_{\textrm{low}}$, $H_{\textrm{up}}$ and $H_{\textrm{low}}$, as in Fig.~\ref{fig:algoBal}b.
	Now any two points defining the lowest intersection with $C$ are either
	in $G_{\textrm{up}}$ and $H_{\textrm{up}}$, or in two opposite sets (i.e., $G_{\textrm{up}}$ and $H_{\textrm{low}}$
	or $G_{\textrm{low}}$ and $H_{\textrm{up}}$). 
	The lowest intersecting segment of $G_{\textrm{up}}$ and $H_{\textrm{up}}$ is the
	convex hull edge of $G_{\textrm{up}}$ and $H_{\textrm{up}}$ intersecting the supporting line  $g(A)$.
	It can be found in linear deterministic time with a subroutine of the  convex hull
	algorithm by Kirkpatrick and Seidel~\cite{doi:10.1137/0215021} or by an algorithm by
	Aichholzer, Miltzow and Pilz~\cite{extremePointJourn}. The second algorithm only uses 
	order type information.
	The two opposite sets are treated recursively. Note that 
	$n/4 \leq \#(G_{\textrm{low}}\cup H_{\textrm{up}}) \leq 3n/4$ and likewise
	$n/4 \leq \#(G_{\textrm{up}}\cup H_{\textrm{low}}) \leq 3n/4$. Therefore, for the running time 
	we get $T(n) = O(n) + max_{1/4\le \alpha\le3/4}[T(\alpha n) + T((1-\alpha)n)]$
	and $T(n) = O(n \log n)$.
\end{proof}

\begin{figure}[htp]
		\begin{center}
		\includegraphics{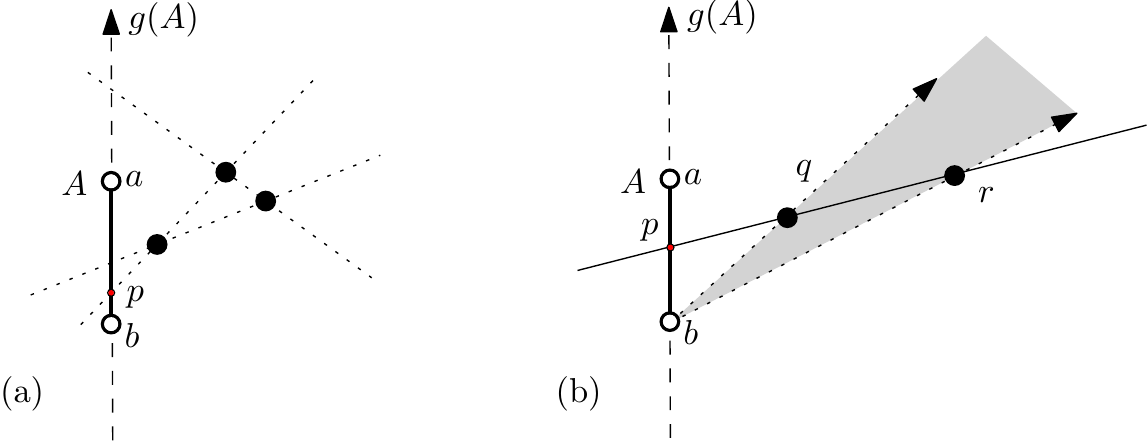}
		\caption{(a) the line arrangement formed by the points in $F$ and the lowest crossing with $A$; (b) the cone with apex $b$ spanned by the minimal pair of points is empty of points of $F$}
		\label{fig:algoMinCross}
			\end{center}
	\end{figure}

For the next Lemma we refer to Fig.~\ref{fig:algoMinCross}.

\begin{lemma}\label{lem:rightArr}
	Let $A$ be a segment and $F$ a point set {right of
          $g(A)$} in general position. Then  the lowest intersection
        $p$ of $A$ with a line through two
 points in $F$ can be computed in deterministic $O(n\log n)$ time.
\end{lemma}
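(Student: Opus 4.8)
The plan is to reduce the computation to an angular sort of $F$. Let $b$ be the endpoint of $A$ near which the sought crossing lies -- the apex of the empty cone in Fig.~\ref{fig:algoMinCross}~(b) -- and let $a$ be the other endpoint. Because all of $F$ lies on one side of $g(A)$, the segment joining two points of $F$ never meets $g(A)$, so -- in contrast to Lemma~\ref{lem:lowInt} -- only full lines are relevant; each line through two points of $F$ that is not parallel to $g(A)$ meets $g(A)$ in exactly one point, and among those meeting points that lie on the segment $A$ we want the one closest to $b$. If there is no such point the claim is vacuous; otherwise call it $p^\ast$ and let $u,v\in F$ span a line through it.

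The structural fact I would prove is the one suggested by Fig.~\ref{fig:algoMinCross}: \emph{$u$ and $v$ are angularly consecutive as seen from $b$}; equivalently, the open wedge with apex $b$ spanned by the rays $\overrightarrow{bu}$ and $\overrightarrow{bv}$ is empty of points of $F$. I would argue by continuity. Slide a point $p$ along $A$ from $p^\ast$ towards $b$ and track the circular order of $F$ around $p$. This order changes only when $p$ crosses a line spanned by two points of $F$; by the choice of $p^\ast$ no such line meets $A$ strictly between $p^\ast$ and $b$, so the circular order is constant on the open sub-arc $(p^\ast,b)$ and, in the limit $p\to b$, equals the circular order of $F$ around $b$. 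On the other hand, $p^\ast$ lies on $g(uv)$ but, the two points being on one side of $g(A)$, not between $u$ and $v$; hence $u$ and $v$ coincide in direction as seen from $p^\ast$ and split apart into two adjacent positions as $p$ leaves $p^\ast$ towards $b$. General position guarantees that $\{u,v\}$ is the only pair collinear with $p^\ast$ -- up to a harmless coincidence in which two \emph{disjoint} pairs are concurrent at $p^\ast$, and then each of them is equally a correct answer and, by the same argument, a consecutive pair around $b$. So $u$ and $v$ are neighbours in the circular order around $b$. The crux is this structural fact; the part I expect to need the most care is making the ``splitting into adjacent positions'' rigorous and dispatching the degeneracies -- pairs of $F$ concurrent on $g(A)$, lines of $F$ parallel to $g(A)$, crossings at an endpoint of $A$ -- and it is precisely here that the hypothesis that $F$ lies on one side of $g(A)$ is used.

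Granting the structural fact, the algorithm and its analysis are routine. Sort $F$ by angle around $b$: this takes deterministic $O(n\log n)$ time using any comparison sort with an orientation predicate as the comparator. Then, for each of the $n$ angularly consecutive pairs $(u,v)$, compute the intersection of $g(uv)$ with $g(A)$, discard it if $g(uv)$ is parallel to $g(A)$ or if the intersection falls outside the segment $A$, and return the surviving intersection closest to $b$ (or report that none exists). The scan is $O(n)$, so the total running time is $O(n\log n)$. Correctness is immediate: by the structural fact, the pair realizing $p^\ast$ is among the $n$ pairs examined. Dually, this says that the vertex of the line arrangement dual to $F$ encoding $p^\ast$ -- the one shown in Fig.~\ref{fig:algoMinCross}~(a) -- is incident to two lines that are consecutive along the dual line of $b$, which is the picture underlying the whole argument.
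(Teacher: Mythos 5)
Your proposal is correct and is essentially the paper's own approach: the same algorithm (radially sort $F$ around the lower endpoint $b$ and test only angularly consecutive pairs, $O(n\log n)$ from the sort) resting on the same key fact that the pair realizing the lowest crossing is consecutive around $b$. The only difference is cosmetic: you establish the empty-cone fact by a continuity argument sliding $p$ from $p^\ast$ toward $b$ and tracking the angular order, whereas the paper uses a direct exchange argument (a point $s$ inside the cone spanned at $b$ by the minimizing pair would yield a lower crossing via one of the two lines through $s$); both justifications are sound.
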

\begin{proof}
	First consider the points $q,r\in F$ which 
	form the lowest crossing with $A$. 
	We show they are neighbors in the radial order around $b$. 
	Consider the area swept by a ray from $q$ to $r$. 
	If it contained any point $s$ then either the line through $q$ and 
	$s$ or $r$ and $s$ would have a lower intersection with $A$.
	
	Thus we merely compute the radial order around $b$
	and for any neighboring pair the intersection point 
	with $A$.
	The running time $T(n)=O(n\log n)$ is dominated by the 
	sorting procedure.
\end{proof}
\begin{theorem}\label{cor:lowInt}
	Let $F$ be a point set in the plane and $A=(a,b)$ be a vertical segment 
 such that $F\cup\{a,b\}$ lies in general position
(i.e., no three points	on a line).
	Then  the lowest 
	intersection of $A$ with a
{line} through two points in $F$
can be computed
 in deterministic $O(n\log n)$ time.
\end{theorem}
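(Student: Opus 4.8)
The plan is to reduce Theorem~\ref{cor:lowInt} to the two preceding lemmas by cutting the point set with the supporting line $g(A)$. Let $G$ and $H$ be the points of $F$ lying strictly left and strictly right of $g(A)$, respectively; by general position no point of $F$ lies on $g(A)$. Any line $\ell$ through two points $p,q\in F$ that meets the segment $A$ falls into exactly one of three cases: (i)~$p\in G$ and $q\in H$; (ii)~$p,q\in G$; (iii)~$p,q\in H$. I would compute the lowest intersection with $A$ within each case separately and then return the lowest of the three candidates.

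For case~(i), note that for a cross-pair the line $\ell$ and the segment $[p,q]$ cross $g(A)$ at the same point, so the lowest such intersection with $A$ equals the lowest intersection of $A$ with a segment spanned by two points of $F$; conversely, a segment between two points on the same side of $g(A)$ stays off $g(A)$ entirely and contributes nothing. Hence case~(i) is exactly what Lemma~\ref{lem:lowInt} computes, in deterministic $O(n\log n)$ time, using that $A$ is vertical and $F\cup\{a,b\}$ is in general position.

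For cases~(ii) and~(iii), the set $H$ lies entirely right of $g(A)$ and $G$ entirely left, so (after a reflection for $G$) both sets satisfy the hypothesis of Lemma~\ref{lem:rightArr}. Applying that lemma to $H$ and to $G$ yields, in $O(n\log n)$ time each, the lowest intersection of $A$ with a line through two points of $H$, resp.\ of $G$. Taking the minimum of the at most three values produced gives the lowest intersection of $A$ with any line through two points of $F$, in total deterministic $O(n\log n)$ time.

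Since both lemmas are already established, there is no deep obstacle here; the only point requiring care is the bookkeeping of the case split — verifying that cross-pairs are faithfully captured by the \emph{segment}-based Lemma~\ref{lem:lowInt} (so that no separate line-based argument is needed for them), that same-side pairs are handled solely by Lemma~\ref{lem:rightArr}, and that every pair of points of $F$ falls into exactly one of the three cases, so that nothing is missed.
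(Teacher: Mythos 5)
Your proposal is correct and follows essentially the same route as the paper: the paper's proof likewise applies Lemma~\ref{lem:rightArr} separately to the points left and right of $g(A)$ for same-side pairs, uses Lemma~\ref{lem:lowInt} for pairs on opposite sides of $A$, and takes the lowest of the resulting candidates. Your extra remark that for cross-pairs the line and the segment meet $A$ in the same point (so the segment-based Lemma~\ref{lem:lowInt} indeed captures them) is exactly the implicit justification in the paper's argument.
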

\begin{proof}
	Compute the lowest intersection point with a line
	separately for the points  left and right of $A$ according to
	Lemma~\ref{lem:rightArr} and all possible intersections with 
	$A$ by pairs of points on opposite sites of $A$ according to Lemma~\ref{lem:lowInt}.
\end{proof}
\begin{corollary}\label{cor:genPos}
	Given a point set $F$ and a segment $A$ without three points on a line,
	 a point on $A$ in general
	position with respect to $F$
can be computed in $O(n\log n)$ time.
\end{corollary}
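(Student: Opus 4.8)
The plan is to reduce directly to Theorem~\ref{cor:lowInt}. First I would apply a rotation of the plane so that $A=(a,b)$ becomes vertical, say with $b$ the lower endpoint; a rotation is an isometry, so it preserves the general-position hypothesis on $F\cup\{a,b\}$ and costs only $O(n)$ time. Then Theorem~\ref{cor:lowInt} applies and yields, in $O(n\log n)$ time, the lowest point $p$ of the segment $A$ lying on a line through two points of $F$ --- or the information that no such point exists.

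If no such point exists, I would simply return an interior point of $A$, for example its midpoint: it lies on no line spanned by two points of $F$, hence is in general position with respect to $F$. Otherwise I first observe that $p$ must be an \emph{interior} point of $A$: were $p$ equal to $a$ or to $b$, that endpoint together with the two points of $F$ spanning the line through $p$ would be three collinear points of $F\cup\{a,b\}$, contradicting general position. Consequently the open subsegment of $A$ strictly between $p$ and $b$ is non-empty, and I would return its midpoint $c$.

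To see that $c$ works, suppose some two points $q,r\in F$ were collinear with $c$. Then $c$ would be an intersection point of the interior of the segment $A$ with the line through $q$ and $r$, and it lies strictly below $p$, contradicting the minimality in the choice of $p$. Hence $c$ lies on no line through two points of $F$, i.e.\ it is in general position with respect to $F$; undoing the initial rotation returns the desired point on the original segment $A$. All steps together run in $O(n\log n)$ time.

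Since this is a corollary of Theorem~\ref{cor:lowInt}, there is no real difficulty: the only points needing care are ruling out that an endpoint of $A$ is the lowest spanned intersection (which is immediate from the general-position assumption) and handling the degenerate case in which $A$ crosses no spanned line.
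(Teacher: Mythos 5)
Your proposal is correct and follows essentially the same route as the paper: invoke Theorem~\ref{cor:lowInt} and return a point on $A$ strictly below the lowest spanned intersection (the paper's one-line proof), with your extra care about rotating $A$ to vertical, the endpoint case, and the case of no spanned line crossing $A$ being harmless elaborations of the same argument.
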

\begin{proof}
	Any point between	the lowest intersection and the lower endpoint of $A$ is in 	
	general position with respect to $F$. 
\end{proof}
\begin{lemma}
	Let $M$ be a BR-matching of a point set $F$ in general position 
	and $A$, $B$ be two segments as in Fig.~\ref{fig:allConfig}.
	Then we can compute a balanced line through 
	the interior of $A$ or $B$ in $O(n\log n)$ time.
\end{lemma}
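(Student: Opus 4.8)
The plan is to make the ``$\Rightarrow$'' direction of the proof of Lemma~\ref{lem:cutBal} constructive. Since $\{A,B\}$ realizes pattern~(a) or~(b) of Fig.~\ref{fig:allConfig}, by Lemma~\ref{lem:chCCut} the pair $A,B$ witnesses a chromatic cut. The first step is to produce one explicitly: a chromatic cut $\ell$ that crosses $A$ transversally at a point $p$ and $B$ transversally at a point $q$, with both $p$ and $q$ in general position with respect to $F$ (so that $\ell$ automatically puts the \x-end of $A$ on the same side as the \ci-end of $B$, as required in the proof of Lemma~\ref{lem:cutBal}). When $A$ and $B$ are antiparallel, or $g(A)$ and $g(B)$ meet in outer rays of different colours, \emph{every} line through an interior point of $A$ and an interior point of $B$ is such a chromatic cut, so it suffices to compute general-position interior points $p\in A$ and $q\in B$ by two calls to Corollary~\ref{cor:genPos}. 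When instead an outer ray of, say, $A$ crosses $B$, we first compute a general-position point $p\in A$; the points $q$ of $B$ for which $pq$ is a chromatic cut form an open subsegment $B'$ of $B$ whose location is determined by a constant amount of geometry from the proof of the ``$\Leftarrow$'' direction of Lemma~\ref{lem:chCCut}, and we obtain a general-position point $q\in B'$ by the technique of Corollary~\ref{cor:genPos}. This whole step takes $O(n\log n)$ time.

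The second step is the rotational sweep of the proof of Lemma~\ref{lem:cutBal}, carried out around $p$. We sort the $2n$ directions from $p$ to the points of $F$ in $O(n\log n)$ time; since $p$ is in general position with respect to $F$, the only coincidence among these directions is that the two endpoints of $A$ yield the same direction. Sweeping a directed line $m$ through a half-turn and maintaining $\#(\bullet)-\#(\circ)$ over the open right half-plane of $m$, this quantity changes by $\pm 1$ at each event except in the direction of $g(A)$, where the two endpoints of $A$ change sides simultaneously. If the value $0$ is attained in a direction other than that of $g(A)$, the corresponding line is balanced and passes through the interior of $A$, and we return it. Otherwise we run the symmetric sweep around $q$ and return a balanced line found in a direction other than that of $g(B)$. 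Each sweep costs $O(n\log n)$, so the algorithm runs in $O(n\log n)$ overall.

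Finally, at least one of the two sweeps must succeed, by exactly the argument in the proof of Lemma~\ref{lem:cutBal}: if no balanced line crosses $A$ through $p$, then $g(A)$ is the only balanced line through $p$; if in addition no balanced line crosses $B$ through $q$, then $g(B)$ is the only balanced line through $q$; rotating a line from $g(A)$ about $p$ until it coincides with $\ell$, and then about $q$ until it coincides with $g(B)$, while tracking $\#(\bullet)-\#(\circ)$ and using the side pattern of $\ell$ guaranteed in the first step, one meets an intermediate line where this quantity vanishes, and that line is balanced, distinct from $g(A)$ and $g(B)$, and passes through $p$ or through $q$ --- contradicting the failure of both sweeps. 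The main obstacle is the first step: arranging a single chromatic cut that passes through general-position interior points of \emph{both} $A$ and $B$, since in the ``outer ray'' case an arbitrary line through $A$ and $B$ need not be a chromatic cut at all; once $\ell$ is fixed, the rest is a radial sort and a linear sweep.
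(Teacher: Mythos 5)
Your argument is sound and it follows the paper's skeleton (general-position points on the segments via Corollary~\ref{cor:genPos}, the rotation/continuity argument from the proof of Lemma~\ref{lem:cutBal} for existence, then a search for the balanced line), but it differs in two places worth recording. First, the paper's proof is essentially a one-liner: it takes $p\in A$ and $q\in B$ from Corollary~\ref{cor:genPos} independently and asserts existence ``by the proof of Lemma~\ref{lem:cutBal}''; that argument, however, pivots along a chromatic cut passing through both $p$ and $q$, whose colour pattern is what forces the counts $-1$ just after $g(A)$ and $+1$ just before $g(B)$. You correctly observed that for the pattern of Fig.~\ref{fig:allConfig}(b) the line through two arbitrarily chosen interior points need not be a chromatic cut (indeed, through some interior points of the crossed segment no chromatic cut of the pair passes at all), and you repair this by placing the free point on the segment whose outer ray does the crossing and restricting the second point to the subsegment of the crossed segment cut off by the supporting line of the first; this is exactly the valid region (it runs from $g(A)\cap B$ to the endpoint of $B$ whose colour matches the crossing ray, independently of the choice of $p$), so your first step makes explicit, and slightly strengthens, what the paper's proof glosses over. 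Second, instead of adapting the linear-time balanced-line search of~\cite{extremePointJourn} as the paper does, you find the balanced line by two radial sweeps around $p$ and $q$; this is more elementary and self-contained, costs $O(n\log n)$ for that phase instead of $O(n)$, but the overall bound remains $O(n\log n)$ since Corollary~\ref{cor:genPos} dominates in any case. One cosmetic point: when you run ``the technique of Corollary~\ref{cor:genPos}'' on the subsegment $B'$, one endpoint of $B'$ lies on $g(A)$ and is therefore collinear with two points of $F$, slightly violating that corollary's general-position hypothesis; shrinking $B'$ a little (or noting that the lowest-intersection computation is unaffected) fixes this.
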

\begin{proof}
	Let $p\in A$ and $q\in B$ be points as in Corollary~\ref{cor:genPos}.
	We know by the proof of Lemma~\ref{lem:cutBal} that a balanced line 
	through $p$ or $q$ exists. 
	The algorithm in~\cite{extremePointJourn} 
	can be adapted to find the desired balanced line in through 
	$p$ or $q$ in $O(n)$.
\end{proof}

\begin{remark}
	Once we have $O(n \log n)$ algorithms to test whether a BR-matching
	is linear or circular we automatically receive an algorithm
	to test if a BR-matching has a chromatic cut in $O(n \log n)$.
	Note that both algorithms above can be executed till they 
	find a forbidden configuration. Thus we are able to compute 
	a forbidden configuration also in $O(n\log n)$.
	In the case of linear matchings we compute the linear order and for circular 
	matchings the circular order.
	It is also easy to construct a reference line in linear time, as in 
	the Definition~\ref{def:quasiParallel} of quasi-parallel segments.
	Given a forbidden configuration it is also possible to compute in constant
	time a chromatic cut (i.e., the actual line).
Finally, given a forbidden configuration, we can compute a balanced line intersecting one of the segments.
	In summary, \emph{all} defined terms presented can be computed efficiently.
\end{remark}

\section{Open questions, Lower Bounds, etc.}
\label{sec:open}

Our algorithm for testing whether a point set $F$ has a unique
non-crossing BR-matching starts by finding such a BR-matching $M$, in
$O(n\log n)$ time, by repeated ham-sandwich cuts. This algorithm does
not care whether $M$ is unique, and it is in fact the fastest known
algorithm for finding \emph{any} non-crossing BR-matching in an
arbitrary point set.  
%
%
%
Is there 
a faster
algorithm for checking whether $M$ is unique (necessarily without constructing $M$)?

The paper just read could also be seen as the study of
point sets with certain forbidden patterns.
These particular point sets have a lot of nice geometric structure.
We wonder whether also other forbidden patterns lead
to interesting geometric properties.

Consider $n$ blue, $n$ red and $n$ green points in $\mathbb{R}^3$.
By applying repeatedly ham-sandwich cuts we know that there exists
a noncrossing colorful $3$-uniform geometric matching. (Each edge is represented 
by the convex hull of its vertices.)
Thus we ask for a geometric characterization of point sets with just one such matching.

\section*{Acknowledgments.}
We thank Michael Payne, Lothar Narins and Veit Wiechert for helpful discussions, 
and we thank Stefan Felsner for a hint to the literature.

\bibliography{TillLib_2}
\bibliographystyle{plain}
	
\end{document}